\newcommand{\specrg}[1]{{\color{blue}{#1}}}
\newcommand{\speccomment}[1]{{\color{gray}{#1}}}
\definecolor{proofcolor}{rgb}{0,0.45,0}
\newcommand{\sectprefix}{{Section}}
\newcommand{\figprefix}{{Figure}}
\newcommand{\tableprefix}{{Table}}
\newcommand{\eqprefix}{{Equation}}
\newcommand{\lemmaprefix}{Lemma}
\newcommand{\defprefix}{Definition}
\newcommand{\appendixprefix}{Appendix}
\newcommand{\tabincell}[2]{\begin{tabular}{@{}#1@{}}#2\end{tabular}}
\newcommand{\PreserveBackslash}[1]{\let\temp=\\#1\let\\=\temp}
\newcolumntype{C}[1]{>{\PreserveBackslash\centering}p{#1}}
\newcolumntype{R}[1]{>{\PreserveBackslash\raggedleft}p{#1}}
\newcolumntype{L}[1]{>{\PreserveBackslash\raggedright}p{#1}}
\newcommand{\zipfigaft}{\vspace{-5mm}}
\let\endisabellecode=\endisabellecode
\newcommand{\implang}{\emph{IMP}}
\newcommand{\csimpllang}{\emph{CSimpl}}
\newcommand{\slang}{\emph{PiCore}} 
\newcommand{\pcimp}{$\pi${\implang}}
\newcommand{\pccsimpl}{$\pi${\csimpllang}}
\newcommand{\cmdfont}[1]{\textbf{\textup{#1}}} 
\newcommand{\cmdfinal}{\perp}
\newcommand{\evtdef}[1]{\widehat{#1}}
\newcommand{\eventdef}{\evtdef{(l,g,P)}}
\newcommand{\event}[1]{\cmdfont{Event}\ {#1}}
\newcommand{\anonevt}[1]{\lfloor{#1}\rfloor}
\newcommand{\atomevt}[1]{\cmdfont{AtomE}\ {#1}}
\newcommand{\evtent}[1]{\cmdfont{EvtEnt}\ {#1}}
\newcommand{\aevtent}[1]{\cmdfont{AEvtEnt}\ {#1}}
\newcommand{\evtseq}[2]{{#1} \rhd {#2}}
\newcommand{\symbchoice}{\oplus}
\newcommand{\evtchoice}[2]{{#1} \symbchoice {#2}}
\newcommand{\evtjoin}[2]{{#1} \Join {#2}}
\newcommand{\evtrec}[2]{\cmdfont{Iter}\ {#1}\ {#2}}
\newcommand{\parsys}[2]{{#1} \Rightarrow {#2}}
\newcommand{\parsysc}{\parsys{\symbCore}{\symbevtsys}}
\newcommand{\lchoice}[1]{\textbf{CHOICE} \ {#1}}
\newcommand{\stmtirq}[2]{{#1} \blacktriangleright {#2}}
\newcommand{\stmtawait}[2]{\textbf{AWAIT}\ {#1}\ \textbf{THEN}\ {#2}\ \textbf{END}}
\newcommand{\stmtevthead}[3]{\textbf{EVENT}\ {#1}\ [ {#2} ]\ \textbf{OF}\ {#3}}
\newcommand{\symbprog}{P}
\newcommand{\symbenv}{\Sigma}
\newcommand{\symbEvt}{\mathcal{E}}
\newcommand{\symbevtsys}{\mathcal{S}}
\newcommand{\symbpes}{\mathcal{PS}}
\newcommand{\symbstate}{s}
\newcommand{\symbact}{\xi}
\newcommand{\symbCore}{\mathcal{K}}
\newcommand{\symbcore}{\kappa}
\newcommand{\actk}[2]{{#1}\sphericalangle{#2}}
\newcommand{\symbspec}{\sharp}
\newcommand{\rgenvtran}{{\rightsquigarrow}}
\newcommand{\rgptran}{{\rightarrow_c}}
\newcommand{\envtran}[1]{{\rightsquigarrow_{#1}}}
\newcommand{\envtrane}{\envtran{e}}
\newcommand{\envtranpe}{\envtran{pe}}
\newcommand{\fin}{\anonevt{\cmdfinal}}
\newcommand{\etranlabel}{\delta}
\newcommand{\etranlabelc}{\tau}
\newcommand{\ptran}[4]{\symbenv \vdash (#1,#2)\rightarrow_p(#3,#4)}
\newcommand{\ptrans}[4]{\symbenv \vdash (#1,#2)\rightarrow_p^{*}(#3,#4)}
\newcommand{\estranconf}[3]{\symbenv \vdash #1 \stackrel{#2}{\rightarrow_e} #3}
\newcommand{\estran}[5]{\estranconf{(#1,#2)}{#3}{(#4,#5)}}
\newcommand{\pestran}[5]{\symbenv \vdash(#1,#2) \stackrel{#3}{\rightarrow_{pe}} (#4,#5)}
\newcommand{\estrannx}[5]{\symbenv \vdash(#1,#2) \stackrel{#3}{\rightarrow_{\tilde{e}}} (#4,#5)}
\newcommand{\esenvtranconf}[2]{\symbenv \vdash #1 \envtrane #2}
\newcommand{\esenvtran}[4]{\esenvtranconf{(#1,#2)}{(#3,#4)}}
\newcommand{\pesenvtran}[4]{\symbenv \vdash (#1,#2) \envtranpe (#3,#4)}
\newcommand{\anyenvtran}[4]{\symbenv \vdash (#1,#2) \envtran{\square} (#3,#4)}
\newcommand{\semanticrule}[2]{
\begin{tabular}{l}
  \textsc{\color{ACMRed}[#1]} \\
  {#2}
\end{tabular}}
\newcommand{\comprule}[2]{
\begin{tabular}{l}
  \textsc{\color{ACMBlue}[#1]} \\
  {#2}
\end{tabular}}
\newcommand{\proofrule}[2]{
\begin{tabular}{l}
  \textsc{\color{ACMDarkBlue}[#1]} \\
  {#2}
\end{tabular}}
\newcommand{\bpelsemrule}[2]{
\begin{tabular}{l}
  \textsc{\color{ACMRed}[#1]} \\
  {#2}
\end{tabular}}
\newcommand{\semrulename}[1]{\textsc{\color{ACMRed}{#1}}}
\newcommand{\comprulename}[1]{\textsc{\color{ACMBlue}{#1}}}
\newcommand{\proofrulename}[1]{\textsc{\color{ACMDarkBlue}{#1}}}
\newcommand{\bpelsemrulename}[1]{\textsc{\color{ACMRed}{#1}}}
\newcommand{\compfun}{\Psi}
\newcommand{\compfunm}{\Psi_M}
\newcommand{\symbcomp}{\varpi}
\newcommand{{\compstps}}{\Psi_\symbpes}
\newcommand{{\compstes}}{\Psi_\symbevtsys}
\newcommand{{\compste}}{\Psi_\symbEvt}
\newcommand{{\compstp}}{\Psi_\symbprog}
\newcommand{\assumefun}{\mathit{Assume}}
\newcommand{\commitfun}{\mathit{Commit}}
\newcommand{\rgcond}[4]{\langle #1, #2, #3, #4 \rangle}
\newcommand{\rgcondb}[4]{\langle \textbf{#1}, \textbf{#2}, \textbf{#3}, \textbf{#4} \rangle}
\newcommand{\rgconddefault}{\rgcond{pre}{rely}{guar}{post}}
\newcommand{\rgconddefaultb}{\rgcondb{pre}{rely}{guar}{post}}
\newcommand{\RGSAT}[2]{\models {#1} \ \mathbf{sat} \ {#2}}
\newcommand{\RGSATp}[2]{\symbenv \models {#1} \ \mathbf{sat}_{\cmdfont{p}} \ {#2}}
\newcommand{\RGSATe}[2]{\symbenv \models {#1} \ \mathbf{sat}_{\cmdfont{e}} \ {#2}}
\newcommand{\RGSATpe}[2]{\symbenv \models {#1} \ \mathbf{sat}_{\cmdfont{pe}} \ {#2}}
\newcommand{\rgsat}[2]{\vdash {#1}\ \cmdfont{sat} \ {#2}}
\newcommand{\rgsatp}[2]{\symbenv \vdash {#1}\ \cmdfont{sat}_{\cmdfont{p}}\ {#2}}
\newcommand{\rgsatpI}[2]{\symbenv \vdash_I {#1}\ \cmdfont{sat}_{\cmdfont{p}}\ {#2}}
\newcommand{\rgsate}[2]{\symbenv \vdash {#1}\ \cmdfont{sat}_{\cmdfont{e}}\ {#2}}
\newcommand{\rgsatpe}[2]{\symbenv \vdash {#1}\ \cmdfont{sat}_{\cmdfont{pe}}\ {#2}}
\newcommand{\idset}{\textbf{Id}}
\newcommand{\univset}{\textbf{UNIV}}
\newcommand{\stable}{\textbf{stable}}
\newcommand{\stablefun}[2]{\stable (#1,#2)}
\newcommand{\csimplprocenv}{\Gamma}
\newcommand{\csimplprocrg}{\Theta}
\newcommand{\csimpltran}[4]{\csimplprocenv \vdash_c (#1,#2) \rightarrow (#3,#4)}
\newcommand{\csimplenvtran}[4]{\csimplprocenv \vdash_c (#1,#2) \rightsquigarrow (#3,#4)}
\newcommand{\bpelfin}{\textbf{ActFin}}
\newcommand{\bpeltran}[4]{(#1,#2)\rightarrow_{bpel}(#3,#4)}
\newcommand{\bpelehtran}[4]{(#1,#2)\rightarrow_{eh}(#3,#4)}
\newcommand{\bisim}[4]{\symbenv \vdash ({#1},{#2}) \simeq ({#3},{#4})}
\newcommand{\isactrlenum}{$\blacktriangleright$}
\newcommand{\superimpose}[2]{%
  {\ooalign{$#1\@firstoftwo#2$\cr\hfil$#1\@secondoftwo#2$\hfil\cr}}}
\def\BibTeX{{\rm B\kern-.05em{\sc i\kern-.025em b}\kern-.08emT\kern-.1667em\lower.7ex\hbox{E}\kern-.125emX}}
\begin{document}

%
\title[Rely-guarantee Reasoning about Concurrent Reactive Systems]{Rely-guarantee Reasoning about Concurrent Reactive Systems}
\subtitle{The {\slang} Framework, Languages Integration and Applications}

\titlenote{This article is an extended version of \cite{Zhao19FM,Zhao19CAV}. \\
This work has been supported by the National Natural Science Foundation of China (NSFC) under the Grant No. 61872016, and the National Satellite of Excellence in Trustworthy Software Systems, funded by NRF Singapore under National Cyber-security R\&D (NCR) programme. 
}. 

%
\author{Yongwang Zhao}
\email{zhaoyw@zju.edu.cn}
\affiliation{%
  \institution{Zhejiang University}
  \department{School of Cyber Science and Technology, College of Computer Science and Technology}
  \streetaddress{38 Zheda Road}
  \city{Hangzhou}
  \state{Zhejiang}
  \country{China}
  \postcode{310007}
}


\author{David San\'{a}n}
\affiliation{%
  \institution{Nanyang Technological University}
  \department{School of Computer Science and Engineering}
  \streetaddress{50 Nanyang Avenue}
  \city{Singapore}
  \country{Singapore}
  \postcode{639798}
}


%
\renewcommand{\shortauthors}{Zhao Y.W., Sanan D.}

%
\begin{abstract}
The rely-guarantee approach is a promising way for compositional verification of concurrent reactive systems (CRSs), e.g. concurrent operating systems, interrupt-driven control systems and business process systems. However, specifications using heterogeneous reaction patterns, different abstraction levels, and the complexity of real-world CRSs are still challenging the rely-guarantee approach. 
This article proposes {\slang}, a rely-guarantee reasoning framework for formal specification and verification of CRSs. We design an event specification language supporting complex reaction structures and its rely-guarantee proof system to detach the specification and logic of reactive aspects of CRSs from event behaviours. {\slang} parametrizes the language and its rely-guarantee system for event behaviour using a rely-guarantee interface and allows to easily integrate 3rd-party languages via rely-guarantee adapters. By this design, we have successfully integrated two existing languages and their rely-guarantee proof systems without any change of their specification and proofs. {\slang} has been applied to two real-world case studies, i.e. formal verification of concurrent memory management in Zephyr RTOS and a verified translation for a standardized Business Process Execution Language (BPEL) to {\slang}. 

\end{abstract}

%
%
\begin{CCSXML}
<ccs2012>
<concept>
<concept_id>10003752.10003790.10002990</concept_id>
<concept_desc>Theory of computation~Logic and verification</concept_desc>
<concept_significance>500</concept_significance>
</concept>
<concept>
<concept_id>10011007.10011006.10011060.10011690</concept_id>
<concept_desc>Software and its engineering~Specification languages</concept_desc>
<concept_significance>500</concept_significance>
</concept>
<concept>
<concept_id>10003752.10010124.10010138</concept_id>
<concept_desc>Theory of computation~Program reasoning</concept_desc>
<concept_significance>500</concept_significance>
</concept>
<concept>
<concept_id>10011007.10010940.10010992.10010998.10010999</concept_id>
<concept_desc>Software and its engineering~Software verification</concept_desc>
<concept_significance>500</concept_significance>
</concept>
<concept>
<concept_id>10011007.10011006.10011008.10011009.10011014</concept_id>
<concept_desc>Software and its engineering~Concurrent programming languages</concept_desc>
<concept_significance>500</concept_significance>
</concept>
</ccs2012>
\end{CCSXML}

\ccsdesc[500]{Theory of computation~Logic and verification}
\ccsdesc[500]{Software and its engineering~Specification languages}
\ccsdesc[500]{Theory of computation~Program reasoning}
\ccsdesc[500]{Software and its engineering~Software verification}
\ccsdesc[500]{Software and its engineering~Concurrent programming languages}

%
\keywords{Rely-guarantee, Compositional Verification, Concurrent Reactive Systems, Isabelle/HOL, Operating Systems Verification, Business Process Execution Languages}

%

%
\maketitle

\section{Introduction}

Nowadays high-assurance systems are often designed as \textit{concurrent reactive systems} (CRSs) \cite{Aceto07}, which react to their computing environment by executing a sequence of commands under an input event. 
Examples of CRSs are concurrent operating systems (OSs), interrupt-driven control systems and business process systems, whose implementation follows a shared-variable event-driven paradigm. However, formal verification of shared-variable CRSs is still challenging \cite{Andronick17}. 

The rely-guarantee technique \cite{Jones83} represents a fundamental approach to compositional reasoning of \textit{concurrent programs} with shared variables, where imperative languages with extensions for concurrency describes the programs. 
The interference that occurs when concurrent programs access and update shared variables is explicitly specified by rely and guarantee conditions in the rely-guarantee method. This article concentrates on compositional reasoning about \emph{shared-variable CRSs} using the rely-guarantee technique. 

\subsection{Motivation and Challenges}

Reactive systems respond to continuous stimulus from their computing environment \cite{Harel85} by changing their state and, in turn, affecting their environment by sending back signals to it or initiating other operations. The inherent features of state-of-the-art CRSs are as follows. 
\begin{itemize}
\item CRSs may involve many different competitive agents executed concurrently with shared resources. Such concurrency might be due to multicore setting, embedded interrupts or task preemption, like in concurrent OS kernels \cite{Chen16,Xu16} and interrupt driven control systems, where the execution of handlers is not atomic. 
\item CRSs have various \emph{reaction structures}, such as \emph{peer reaction}, \emph{flow reaction} and \emph{hierarchical reaction}. Concurrent OS kernels behave as peer reactions, where kernel services behaves as reactors providing a fair scheduler to user calls. Modern business processes are regarded as flow reaction structures, which has a set of complex flow activities. An example of CRSs with a hierarchical reaction structure is Real-time Control Systems (RCSs) \cite{rcs97}, which is a layered reference model architecture for unmanned vehicles. 
\item The configuration and context of the underlying hardware of systems are not usually encoded in programs, which represent only a portion of the whole system behaviour. For instance, when and how interrupts are triggered, and which handlers are invoked to react with an interrupt, are out of the handler code when considering the models for interrupt handlers in OS kernels (e.g. kernel services and scheduling).
\end{itemize}

In such a setting of CRSs, we observe that the rely-guarantee approach faces the following challenges. 
\paragraph{Challenge 1: Tackling complex reaction structures}
Whilst rely-guarantee provides a general framework and can certainly be applied for CRSs, the languages in existing mechanizations of rely-guarantee (e.g. in \cite{Xu97,Nieto03,LiangFF12,Moreira13,Sanan17}) are imperative and designed only for pure programs, i.e, programs following a flow of procedure calls from an entry point. 
Examples of the reactive systems mentioned above are far more complex than pure programs because they involve many different agents as well as heavy interactions with their environment. 
Without statements for such system behaviour, we often use imperative programs to simulate them. For instance, in the setting of imperative languages, concurrent OS kernels in the \emph{peer reaction} structure are usually modelled as the parallel composition of reactive systems, each of which is simulated by a non-terminating loop program. The relevant handlers are invoked in the loop body adding the input arguments from user calls (e.g. in \cite{Andronick15}). 
First, the environment non-deterministically decides the event handler being triggered and its arguments. 
Second, some critical properties, such as noninterference of OS kernels \cite{Murray12}, concern traces of reactions rather than program states only. 
Without native support in the language semantics, \emph{while} loop programs have to use auxiliary logical/program variables to simulate the non-deterministic choice of handlers and parameters, and to capture the event context of each reactive system. 
This adds an additional level of complexity in programs and rely-guarantee conditions, in particular for realistic CRSs with many event handlers. 
The reason of these problems is the lack of a rely-guarantee approach for system reactions and, as a result, the mixture of system and program behaviours. 
\paragraph{Challenge 2: Reusing 3rd-party languages and their rely-guarantee proof systems}
Deep specification and verification is a trend in formal methods for both academy and industry \cite{deepspec,Klein14}. 
High assurance of critical systems demands that formal verification of functional correctness and safety/security properties is applied not only at the specification level \cite{Zhao16}, but also at the code level in the highest assurance levels \cite{Klein09,Dam13}. Verification of the specification can be conducted using abstract specification languages, but the verification of the implementation requires that modelling languages capture the features in programming languages such as exceptions and procedure calls. To formally verify CRSs at different levels, the rely-guarantee approach should be flexible enough to incorporate various 3rd-party languages and their rely-guarantee proof systems, and thus to be able to integrate and to reuse them. The challenge is to design a rely-guarantee framework to deal with interoperability of languages and bridge the gap between their rely-guarantee components. 
\paragraph{Challenge 3: Reasoning about real-world concurrent reactive systems.} 
Formal verification of real-world CRSs at low-level design and source code levels is challenging because of their complexity, being necessary to apply scalable methods. 
As a typical class of CRSs, formal verification of concurrent OSs is a challenge \cite{Klein14}, in particular the formal verification of the memory management. Data structures and algorithms of memory management in OSs are usually laid out in a complex manner to achieve a high performance. This complexity implies as well complex invariants that the formal model must preserve. In order to prevent memory leaks and block overlapping, these invariants have to guarantee the well-formedness and consistency of the structures, meanwhile keeping a precise reasoning to track both numerical and shape properties. Moreover, thread preemption and fine-grained locking often make the kernel execution of memory services to be concurrent. 

\subsection{Approach Overview}

In this article, {\slang} takes the level of abstraction and reusability of the rely-guarantee method a step further by proposing an expressive event language for complex reaction structures at the system level, as well as decoupling the system and program levels. The result is a flexible rely-guarantee framework for CRSs, which can integrate existing rely-guarantee implementations at program level without any change of them. {\slang} is organized in two levels:  
\begin{itemize}
\item The system level introduces the notion of ``events'' \cite{Back91,Abrial07} into the rely-guarantee method to modelling system reactions. This level defines the events composing a system, and how and when they are triggered. It specifies the language, semantics, and mechanisms to reason on sequences of events and their execution conditions. {\slang} provides a \emph{rely-guarantee interface} which is an abstraction for common rely-guarantee components for the program level. 
\item The program level focuses on the specification and reasoning of the behaviour of the events in the first level. {\slang} can easily integrate concrete languages used to model the behaviour of events by using a \emph{rely-guarantee adapter} which implements the rely-guarantee interface. This design allows {\slang} to be independent of program languages and thus to easily reuse existing rely-guarantee frameworks.
\end{itemize}

\begin{figure}
\begin{flushleft}
\begin{isabellec}
\footnotesize

$ \left .
\begin{aligned}
& \stmtevthead{mem\_pool\_alloc}{Ref\ pool, Nat\ size, Int\ timeout}{t}
\\
& \isacommand{WHEN} 
\\
& \quad pool\ {\isasymin}\ {\isasymacute}mem{\isacharunderscore}pools\ 
{\isasymand}\ timeout\ {\isasymge}\ {\isacharminus}{\isadigit{1}} 
\end{aligned}
\quad \right \}\  \textbf{Event declaration}
$

$ \left .
\begin{aligned}
& \isacommand{THEN} 
\\
& \quad ...... \text{\color{ACMGreen} // we omit some statements here. }
\\
& \quad \isacommand{IF}\ timeout\ {\isachargreater}\ {\isadigit{0}}\ \isacommand{THEN}
\\
& \quad \quad  {\isasymacute}endt\ {\isacharcolon}{\isacharequal}\ {\isasymacute}endt{\isacharparenleft}t\ {\isacharcolon}{\isacharequal}\ {\isasymacute}tick\ {\isacharplus}\ timeout{\isacharparenright}
\\
& \quad \isacommand{FI};; 
\\
& \quad ...... 
\\
& \quad \isacommand{FOR}\ {\isasymacute}from\_l := {\isasymacute}from\_l(t := {\isasymacute}free\_l\ t);
{\isasymacute}from\_l \ t < {\isasymacute}alloc\_l\ t; 
\\
& \quad \quad \quad {\isasymacute}from\_l := {\isasymacute}from\_l(t:={\isasymacute}from\_l\ t + 1)\ \isacommand{DO}
\\
& \quad \quad \isacommand{ATOMIC}
\\
& \quad \quad \quad {\isasymacute}mem\_pool\_info := set\_bit\_divide\ {\isasymacute}mem\_pool\_info\ p \ ({\isasymacute}from\_l\ t)\ ({\isasymacute}bn\ t);;
\\
& \quad \quad \quad ......
\\
& \quad \quad \isacommand{END}
\\
& \quad \isacommand{ROF}
\\
& \quad  ......
\\
& \isacommand{END}
\end{aligned}
\quad \right \}\  \textbf{Event body}
$

\end{isabellec}
\end{flushleft}
\caption{An Example of Event}
\label{fig:event_examp}
\end{figure}

At the system reaction level, we consider a reactive system as a set of event handlers called \emph{event systems} responding to stimulus from the environment. 
{\figprefix} \ref{fig:event_examp} illustrates an \emph{event} for memory allocation in OSs. The \emph{event declaration} has an event name, a list of input parameters, and a guard condition to determine the conditions triggering the event. In addition to the input parameters, an event has a special parameter $\textbf{OF}\ t$, which indicates the execution context, e.g. the thread invoking the service and the external devices triggering the interrupt. The behaviour of an event is specified as the \emph{event body} by using an imperative program described in a 3rd-party language. 

To deal with complex reaction structures, we design an event specification language in {\slang} supporting a structural compositions of  set events, e.g. \emph{join} and \emph{nondeterministic choice} that defines an event system. For instance, an event system composed of a nondeterministic choice of events models a peer reaction structure. 

The execution of an event system concerns the continuous evaluation of the guards of the events with their arguments. From the set of events for which their associated guard condition holds in an execution state, one event is non-deterministically selected to be triggered, and its body executed. After the event finishes, the guard evaluation starts again looking for the next event to be executed. We call the semantics of event systems \emph{reactive semantics}. 
A CRS is modelled as the \emph{parallel composition} of event systems interleaving their execution. 
Different from other modelling approaches (e.g., Event-B \cite{Abrial07}) where the execution of events are atomic, we consider the interleaved semantics of events and relax the atomicity of them in {\slang}.


CRSs representation in PiCore presents the following advantages: 
\begin{itemize}
\item \textbf{Expressivity}: the event language in {\slang} is able to model complex reaction structures of CRSs. We have built two different cases studies showing how does {\slang} specify a peer and a flow reaction systems.
\item \textbf{Reusability}: the {\slang} framework can integrate and thus reuse 3rd-party languages and their rely-guarantee frameworks. In this work we describe events on two different imperative languages by using the PiCore integration interface. 
\item \textbf{Simplicity}: compared to using imperative languages, the event language in {\slang} can considerably simplify the formal specification of CRSs. We formalize the Zhephyr RTOS memory management to show how complex concurrent reactive systems can be elegantly and simply described with {\slang}. 
\end{itemize}

At the time of writing this article, {\slang} supports the verification of total correctness of events, whose execution is usually assumed to be terminating, as well as the properties of event systems, whose execution is often non-terminating. 
In this article, we consider the verification of two different classes of properties for CRSs: pre and post conditions and invariants in the fine-grained execution. Invariants are used on the verification of safety properties concerning the small steps inside events, which must preserve such invariants. 
For instance, in the case of Zephyr RTOS, one safety property is that memory blocks do not overlap each other even during the internal steps of the \emph{alloc} and \emph{free} services. 
The execution trace of events can define other critical properties like noninterference \cite{Murray12,Mantel11,Murray16}, which is part of our ongoing work. 

Following the above principles, the {\slang} architecture is shown in {\figprefix} \ref{fig:approach}, where the lines represent the \emph{uses} relation of blocks. We use Isabelle/HOL theorem prover as the specification and verification system. The {\slang} framework use Isabelle's \emph{locales} to define the rely-guarantee interface, which is a set of common rely-guarantee components that existing languages may respect. 
For the integration of external languages, we adopt the \emph{adapter} design pattern \cite{Gamma95}. By implementing a \emph{rely-guarantee adapter} that respects the rely-guarantee interface, 3rd-party languages and their rely-guarantee proof systems are integrated into {\slang} by instantiating the {\slang} framework. 

We have integrated two existing languages ({\implang}\cite{Nieto03} and {\csimpllang}\cite{Sanan17}) and their rely-guarantee proof systems into the {\slang} framework. As a result we create two instances of {\slang}: {\pcimp} and {\pccsimpl}. {\implang} is a simple imperative language in Isabelle/HOL, which follows the specification in \cite{Xu97}. 
{\csimpllang} is a generic and expressive imperative language designed for modelling real world concurrent languages. {\csimpllang} extends \emph{Simpl}~\cite{Sch06} with concurrency, and provides a rely-guarantee proof system in Isabelle/HOL. \emph{Simpl} is composed of the necessary constructs to capture most of the features present in common sequential languages, such as conditional branching, loops, abrupt termination and exceptions, assertions, mutually recursive functions, expressions with side effects, and nondeterminism. Additionally, \emph{Simpl} can express memory-related features like the memory heap, pointers, and pointers to functions. 
\emph{Simpl} is able to represent a large subset of C99 code and has been applied to the formal verification of seL4 OS kernel \cite{Klein09} at C code level.

\begin{figure}
\centering
\includegraphics[width=4.0in]{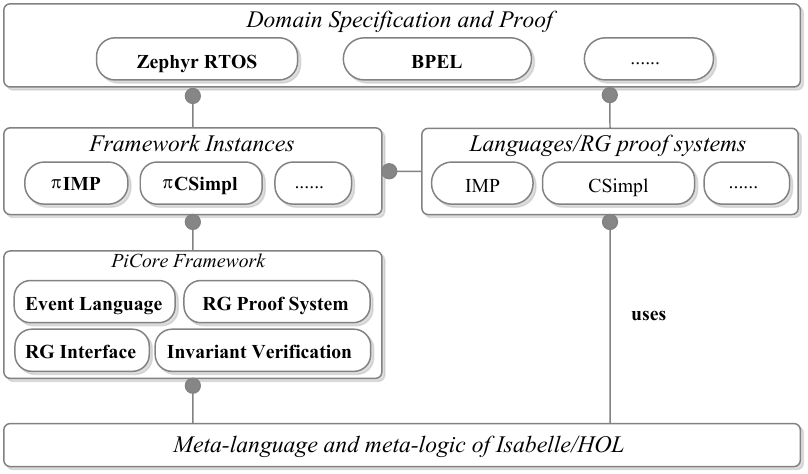}
\caption{The {\slang} Approach}
\label{fig:approach}
\end{figure}

Finally, we apply the {\slang} framework in two case studies. As a peer reaction case study, we model in {\slang} the memory management of Zephyr OS~\cite{zephyr}, a real-world concurrent RTOS. We use the {\slang} proof system to prove complex properties about the Zephyr concurrent memory management correctness and partial termination. As a case of flow reactions, we model the BPEL language~\cite{bpel} in {\slang}, a standard of business process execution language, showing soundness and completeness of the model. 

\paragraph{Zephyr RTOS}
Considering an OS Kernel as a reactive system with a \emph{peer reaction} structure, the OS is in an \emph{idle} loop until it receives an interruption which is handled by an interruption handler. 
Whilst interrupt handlers execution is atomic in sequential kernels, they can be interrupted in concurrent kernels \cite{Chen16,Xu16} allowing services invoked by threads to be interrupted and resumed later. We have applied {\slang} to the formal specification and mechanized proof of the concurrent buddy memory allocation of Zephyr RTOS. The formal specification is fine-grained, providing a high level of detail. It closely follows the Zephyr C code, covering all the data structures and imperative statements present in the implementation. 
We use the rely-guarantee proof system of {\slang} for the formal verification of functional correctness and invariant preservation in the model, revealing three bugs in the C code. 

\paragraph{BPEL}
Compared to OSs, BPEL programs are CRSs with a \emph{flow reaction} structure. BPEL aims to model the behaviour of processes via a language for the specification of both \emph{executable} and \emph{abstract} business processes. It extends the Web Services interaction model and enables it to support business transactions. BPEL has complex basic and structural activities, such as \emph{receive}, \emph{reply}, \emph{invoke}, \emph{sequence}, \emph{while}, \emph{repeatuntil}, \emph{flow} (parallel). We have applied {\slang} to interpret the semantics of the BPEL language by translating BPEL into {\slang}. To show the correctness of this translation, we prove a strong bisimulation between the source BPEL program and the translated {\slang} specification. In this way, formal verification of BPEL programs can be conducted in the {\slang} framework. The strong bisimulation implies the soundness and completeness of the formal verification of BPEL programs in {\slang}.

\subsection{Contributions}
This work is entirely mechanized and proved in Isabelle/HOL; the sources are available at {\color{ACMBlue}\url{https://lvpgroup.github.io/picorebook/}}. 
In summary, this article makes the following new contributions. 
\begin{enumerate}
\item We propose a novel event specification language and its rely-guarantee proof system for CRSs. To the best of our knowledge, {\slang} is the first rely-guarantee approach supporting complex reaction structures of CRSs. 
\item {\slang} is a parametric rely-guarantee framework for CRSs. By using rely-guarantee interfaces and adapters, it allows reusing existing languages and their rely-guarantee proof systems. 
To the best of our knowledge, {\slang} is the first flexible and open rely-guarantee framework for 3rd-party languages. 
\item We have integrated two existing languages and their rely-guarantee proof systems into the {\slang} framework, and thus created two instances of {\slang}. These instances can formally specify and verify CRSs at different abstraction levels. 
\item As a case study of concurrent OSs, we have applied {\slang} to the formal specification and mechanized proof of the concurrent buddy memory allocation of Zephyr RTOS. During the formal verification of functional correctness and invariant preservation, we reveal three bugs in the C code. To the best of our knowledge, this work is the first formal verification of a concurrent buddy memory allocation from a real-world operating system. 
\item As a case study of concurrent business processes, we have applied {\slang} to the formalization of the BPEL standard by an automated translator. We use a bisimulation relation to prove the correctness of this translation. To the best of our knowledge, this work presents the first verified translation of BPEL to a formal reasoning system. 
\end{enumerate}

We partially present contributions (1) - (3) in~\cite{Zhao19FM}, where we only supported modelling CRSs with a \emph{peer reaction} structure. In this article, we have completely redesigned the event language and its proof system, supporting complex reaction structures. The semantics of the event language has also been extended by the modular definition of computations to considerably simplify the soundness proof of rely-guarantee rules. Also, the current design of {\slang} has a better parameter support in the creation of the events. The rely-guarantee interface of {\slang} has been also simplified and made clearer. Contribution (4) was partially presented in \cite{Zhao19CAV}. In this article, we use the new event language to revise the formal specification of Zephyr RTOS. 

\subsection{Organization}

The rest of this article is organized as follows. We first introduce the rely-guarantee method and relevant Isabelle/HOL notation for this article in {\sectprefix} \ref{sect:preliminary}. {\sectprefix} \ref{sect:picore} presents the {\slang} rely-guarantee framework for CRSs. {\sectprefix} \ref{sect:langintegrate} presents the rely-guarantee interface and adapter. {\sectprefix} \ref{sect:integrate} shows how we integrate the two 3rd-party languages, {\implang} and {\csimpllang}, and their rely-guarantee proof systems. 
Next, we present the two case studies in {\sectprefix}s \ref{sect:mem} and \ref{sect:bpel} respectively. 
We evaluate our work and compare the related work in {\sectprefix} \ref{sect:eval}. Finally, we conclude this article in {\sectprefix} \ref{sect:conclusion}. 

\section{Preliminaries}
\label{sect:preliminary}

In this section, we briefly introduce the rely-guarantee method as well as Isabelle/HOL notations used in languages integration and case studies. 

\subsection{Rely-guarantee Reasoning}
In the rely-guarantee method \cite{Xu97,Nieto03}, executions of concurrent programs are considered as \emph{computations}, i.e. traces of configurations which are a pair of a program and a state. The rely-guarantee method distinguishes \emph{environment} steps $\rgenvtran$ from \emph{component} steps $\rgptran$ in computations. 
A rely-guarantee specification for a program is a quadruple $\rgconddefault$, where $\mathit{pre}$ is the precondition, $\mathit{rely}$ is the rely condition, $\mathit{guar}$ is the guarantee condition, and $\mathit{post}$ is the postcondition.  The designer of a concurrent program can assume that if
\begin{itemize}
\item the initial state satisfies the precondition \textbf{pre}, and
\item any environment step satisfies the rely condition \textbf{rely},
\end{itemize}
then the program makes the following commitments:
\begin{itemize}
\item any program step satisfies the guarantee condition \textbf{guar}, and
\item if the program terminates, the final state satisfies the postcondition \textbf{post}.
\end{itemize}
In the case of total correctness, the program ensures the termination as well.
The validity of a rely-guarantee condition is illustrated as follows.
\begin{equation}
\label{eq:rg_validity}
\overbrace{(\symbprog_0,s_0)}^{s_0 \in \textbf{pre}} \cdots \overbrace{(\symbprog_i,s_i) \rgenvtran (\symbprog_{i+1},s_{i+1})}^{(s_i,s_{i+1}) \in \textbf{rely}}  \cdots \underbrace{(\symbprog_j,s_j) \rgptran (\symbprog_{j+1},s_{j+1})}_{(s_j,s_{j+1}) \in \textbf{guar}} \cdots \underbrace{(\symbprog_n,s_n)}_{s_n \in \textbf{post}}
\end{equation}

The assumption, commitment and the validity of a specification are defined as it follows, where $\symbcomp_i$ is the $i$th configuration in a computation $\symbcomp$, $\symbstate_{\symbcomp_i}$ is the state part of the configuration, and $\symbprog_{\symbcomp_i}$ is the program of the configuration. $\compfun(\symbprog, \symbstate)$ is the set of all computations of the $\symbprog$ starting from an initial state $\symbstate$.
\begin{equation*}
\centering
\begin{aligned}
  & \assumefun(\mathit{pre}, \mathit{rely}) \equiv \{\symbcomp \mid \symbstate_{\symbcomp_0} \in \mathit{pre} \wedge (\forall i < \mathit{len}(\symbcomp) - 1. \ (\symbcomp_i \rgenvtran \symbcomp_{i+1}) \longrightarrow (\symbstate_{\symbcomp_i},\symbstate_{\symbcomp_{i+1}}) \in \mathit{rely})\} \\
  & \commitfun(\mathit{guar}, \mathit{post}) \equiv \{\symbcomp \mid \ (\forall i < \mathit{len}(\symbcomp) - 1. \ (\symbcomp_i \rgptran \symbcomp_{i+1}) \longrightarrow (\symbstate_{\symbcomp_i},\symbstate_{\symbcomp_{i+1}}) \in \mathit{guar}) \\
& \quad \quad \quad \quad \quad \quad \quad \quad \quad \quad \quad
  \wedge (\symbprog_{\mathit{last}(\symbcomp)} = \cmdfinal \longrightarrow \symbstate_{\symbcomp_n} \in \mathit{post})\} \\
  & \RGSAT{\symbprog}{\rgconddefault} \equiv \forall \symbstate. \ \compfun(\symbprog, \symbstate) \cap \assumefun(\mathit{pre}, \mathit{rely}) \subseteq \commitfun(\mathit{guar}, \mathit{post})
\end{aligned}
\end{equation*}

A set of rules inductively define the derivable correctness formulas $\rgsat{\symbprog}{\rgconddefault}$.  The verification of a rely-guarantee specification on a concurrent program is thus carried out by inductively applying the rules for each language construct and discharging the proof obligations the rules generate. The proof rule for parallel composition is typically defined as
\begin{equation*}
\infer{\rgsat{\symbprog_1 \parallel \symbprog_2}{\rgconddefault}}
{
\begin{aligned}
\rgsat{\symbprog_1}{\rgcond{pre_1}{rely_1}{guar_1}{post_1}} \quad \rgsat{\symbprog_2}{\rgcond{pre_2}{rely_2}{guar_2}{post_2}} \quad pre \subseteq pre_1 \cap pre_2\\
rely \cup guar_2 \subseteq rely_1 \quad rely \cup guar_1 \subseteq rely_2 \quad guar_1 \cup guar_2 \subseteq guar \quad post_1 \cap post_2 \subseteq post
\end{aligned}
}
\end{equation*}

Since valid proofs depend on a sound proof system, the soundness of the inference rules with regards to the validity of a rely-guarantee specification is proved by
\[\rgsat{\symbprog}{\rgconddefault} \Longrightarrow \RGSAT{\symbprog}{\rgconddefault}\]

\subsection{Isabelle/HOL Notations}
This subsection covers the Isabelle/HOL notations used later and their semantics. 
Isabelle/HOL is a generic and tactic-based theorem prover that uses higher order logic. 

The keyword \textbf{datatype} is used to define an inductive data type. 
For instance, the list in Isabelle is one of the essential data type in computing and is defined as 
$$\textbf{datatype} \ 'a \ list = Nil \ (``[]") \ | \ Cons \ 'a \ ``\ 'a \ list \ " (\textbf{infixr} \ ``\#")$$
where $'a$ is the type parameter for the type of elements in the list, $[ ]$ is the empty list and $\#$ is the \emph{append} operation. 
The polymorphic option type is defined as $$\textbf{datatype} \ 'a \ option = None \ | \ Some \ (the: \ 'a)$$
$Some\ (e :: 'a)$ represents defined elements, and $None$ those that are not defined. The value of a defined element is obtained using operator $the$, with $the\ (Some\ e) = e$. 

Isabelle/HOL allows the definition of types as records. An example of a record type is $$\textbf{record} \ point = xcoord::int \ \ ycoord::int$$ where an instance $p$ of type \emph{point} is defined as $p = {\isasymlparr} xcoord = 10, ycoord = 10 {\isasymrparr}$.
Isabelle/HOL provides operations for field projection, update, and extension. For instance, $xcoord \ p$ and $ycoord \ p$ are the projections of the fields $xcoord$ and $ycoord$ in $p$; $p{\isasymlparr} xcoord := 20, ycoord := 20 {\isasymrparr}$ updates $p$; and $\textbf{record} \ cpoint = point + col :: color$ extends $p$ with an additional field $col$ of type $color$. 
Tuples in Isabelle/HOL are defined as types of ($'a \times 'b$). They provide two functions to select the elements of the tuple: $fst :: ('a \times 'b) \Rightarrow 'a$ and $snd :: ('a \times 'b) \Rightarrow 'b$.

A total functionf $fun1$ is are defined as $fun1 :: 'a \Rightarrow 'b$, i.e. a mapping from type $'a$ to type $'b$. 
Given a function $f$ of type $'a \Rightarrow 'b$, the update of an element $a::'a$ in the domain of $f$ to the value $b::'b$ is represented by $f(a := b)$. This returns a new function $f'$ in which $f'\ x = f\ x$ if $x \neq a$ and $f'\ a = b$. 
A function of type $'a \rightharpoonup \ 'b$ models a partial function, which is equal to $'a \Rightarrow ('b \ option)$. 
Nonrecursive definitions can be made with the \textbf{definition} command and the \textbf{primrec} function definition is used for primitive recursions. 


Isabelle's deals with parametric theories by using \emph{locales} \cite{locale}.  A locale can be seen as a formula schema
\[
\bigwedge x_1...x_n. \ \isasymlbrakk A_1; ...; A_m \isasymrbrakk \Longrightarrow C
\]
where the variables $x_1,...,x_n$ are called \emph{parameters} and the premises $A_1, ..., A_m$ \emph{assumptions}. A formula $C$ is a \emph{theorem} in the locale if it is a conclusion. In its simplest form, a locale declaration consists of a sequence of context elements declaring parameters (keyword \textbf{fixes}) and assumptions (keyword \textbf{assumes}). Moreover, a locale can be imported and extended by new parameters and assumptions in a new locale (using ``+'' operator). 

Locales can be interpreted in the contexts of theories and structured proofs. The interpretation is made by assigning concrete data to parameters, and the resulting instantiated declarations are propagated to the current theory or the current proof context. This is called \emph{locale interpretation}. 

To enhance readability, we will use standard mathematical notation where possible. We use Isabelle/HOL syntax to represent the integration of the languages as well as the case studies.


\section{{\slang}: The Rely-guarantee Reasoning Framework}
\label{sect:picore}

This section presents the event language of {\slang} as well as its notion of computations, rely-guarantee proof system, the soundness of proof rules and invariant verification. 

\subsection{An Event Specification Language}
We design a new specification language for CRSs by introducing the notion of ``events''. The language decouples the \emph{triggering mechanism} in system reactions from the \emph{execution behaviour} of triggered events. The triggering mechanism specifies how and when the system reaction is triggered, and the execution behaviour defines how the system reaction is executed. The event language focuses on the triggering mechanism, and parametrizes the execution behaviour with the input parameters to the event that are provided when the event is triggered. The elemental component to specify a reaction triggering is the \emph{event}. To incorporate complex reaction structures, we design a set of structural compositions of events in the language.

\begin{figure}
\centering
\[
\begin{aligned}
\textbf{Event System}: & & \\
  \symbevtsys \ ::= &\ \event{\eventdef} 						& \text{(Basic Event)}\\
                  | &\ \atomevt{\eventdef} 						& \text{(Atomic Event)}\\
                  | &\ \anonevt{\symbprog}     					& \text{(Triggered Event)}\\
                  | &\ \evtseq{\symbevtsys_1}{\symbevtsys_2} 	& \text{(Sequence)}\\
                  | &\ \evtchoice{\symbevtsys_1}{\symbevtsys_2} & \text{(Choice)}\\
                  | &\ \evtjoin{\symbevtsys_1}{\symbevtsys_2} 	& \text{(Join)}\\
                  | &\ \evtrec{b}{\symbevtsys} 					& \text{(Iteration)} \\
\textbf{Parallel Event System}: & & \\
  \symbpes\ ::= &\ {\parsysc} 			& \text{(Parallel Composition of Event Systems)}
\end{aligned}
\]
\caption{The Abstract Syntax of Event Language}
\label{fig:lang}
\end{figure}

The abstract syntax of the event language is defined as shown in {\figprefix} \ref{fig:lang}. 
An event system $\symbevtsys$ is an event (\emph{basic}, \emph{atomic} and \emph{triggered}), or a structural compositions of event systems. 
A CRS is modelled as a parallel composition of event systems with shared states, denoted as $\symbpes$. It is a function from $\symbCore$ to event systems, where $\symbCore$ indicates the identifiers of event systems and is considered as the execution context of event systems. 
This design is more general and can be applied to track executing events. For instance, we use $\symbCore$ to represent the core identifier in multicore systems. 

\subsubsection{Events}\label{sec:events}
The syntax for events distinguishes basic, atomic and triggered events. Basic and atomic events are pending to be triggered in particular states.  
A basic event is defined as $\event{\eventdef}$, where $l$ is the event label, $g$ is the event guard condition, and $\symbprog$ is the body of the event. ${\eventdef}$ parametrizes triplets $(l,g,P)$ with the input parameters of the event, representing the set of all possible instances of $(l,g,P)$ from the domain of the parameters. This set is automatically constructed from the domain of the event parameters in the concrete syntax of {\slang}.
$\event{\eventdef}$ can be triggered when there is at least one element in the set of event instances for which the current state satisfies its guard. Then, the related body is executed, and it becomes a triggered event $\anonevt{\symbprog}$. The execution of $\anonevt{\symbprog}$ just simulates the execution of the program $\symbprog$. The execution of $\anonevt{\symbprog}$ interleaves with its parallel event systems. However, $\atomevt{\eventdef}$ is atomically executed in one single step without interleaving with the other parallel event systems. 

\subsubsection{Event Composition}
We design four structural constructs for event composition: \emph{Sequence}, \emph{Choice}, \emph{Join} and \emph{Iteration}. 
The sequence of two event systems models the sequential execution of them. The \emph{Choice} of two event systems is represented by $\evtchoice{\symbevtsys_1}{\symbevtsys_2}$. It models non-determinism of events by choosing one of them to be executed, i.e. if $\symbevtsys_1$ (or $\symbevtsys_2$) is able to move one step forward and becomes $\symbevtsys'_1$ ($\symbevtsys'_2$), $\evtchoice{\symbevtsys_1}{\symbevtsys_2}$ evolves to $\symbevtsys'_1$ ($\symbevtsys'_2$). The \emph{Choice} construct can represent the conditional structure (e.g. \textbf{IF} cond \textbf{THEN} P \textbf{ELSE} Q) using complementary guards for two basic events. 
Different from \emph{Choice}, the execution of the \emph{Join} of two event systems $\evtjoin{\symbevtsys_1}{\symbevtsys_2}$ is concurrent and finishes when the two event systems finish. Thus, it is also called \emph{concurrent join}. 
In addition to the top-level parallel composition of event systems ($\symbpes$), we retain the parallel composition at the event level (\emph{Join}). With the two levels of parallel composition, it becomes easy to specify CRSs with two levels of concurrency, such as concurrent OS kernels on multicore processors, where the kernel instance on each core is also concurrent due to its interruptable and preemptive execution. 
The \emph{Iteration} construct carries out interaction of events which can be repeated unboundedly. This construct is similar to \emph{while-loops} in imperative languages.


\subsection{Operational Semantics}

{\figprefix} \ref{fig:semantics} shows the small-step operational semantics of the event language. Due to the multi-level structure of the event language, we classify the transition rules into three categories: program transitions, event systems and parallel event systems. 

Since {\slang} is a parametric framework for underlying programs, the operational semantics of {\slang} is defined in terms of the semantics of the underlying program language and they are an input parameter in the locale modelling {\slang} in the Isabelle HOL mechanization. Hence, this section does not provide details on the semantics of programs.

\subsubsection{Transition Rules}

\begin{figure}
\small 

\begin{tabular}{cc}
\semanticrule{BasicEvt}{
\infer
{\estran{\event{\evtdef{ev}}}{\symbstate}{\actk{\evtent{l}}{\symbcore}}{\anonevt{\symbprog}}{\symbstate}}
{(l,g,P) \in \evtdef{ev} & \symbstate \in g}
}
&
\semanticrule{AtomEvt}{
\infer
{\estran{\atomevt{\evtdef{ev}}}{\symbstate}{\actk{\aevtent{l}}{\symbcore}}{\fin}{t}}
{(l,g,P) \in \evtdef{ev} & s \in g & \ptrans{\symbprog}{\symbstate}{\cmdfinal}{t}}
}
\end{tabular}
\vspace{0.1cm}

\begin{tabular}{cc}
\semanticrule{TriggeredEvt}{
\infer
{\estran{\anonevt{\symbprog}}{\symbstate}{\actk{\etranlabelc}{\symbcore}}{\anonevt{Q}}{t}}
{\ptran{\symbprog}{\symbstate}{Q}{t} & Q \neq \cmdfinal}
}
&
\semanticrule{TriggeredEvtFin}{
\infer
{\estran{\anonevt{\symbprog}}{\symbstate}{\actk{\etranlabelc}{\symbcore}}{\anonevt{Q}}{t}}
{\ptran{\symbprog}{\symbstate}{Q}{t} & Q = \cmdfinal}
}
\end{tabular}
\vspace{0.1cm}

\begin{tabular}{cc}
\semanticrule{EvtSeq}{
\infer
{\estran{\evtseq{\symbevtsys_1}{\symbevtsys_2}}{\symbstate}{\etranlabel}{\evtseq{\symbevtsys'_1}{\symbevtsys_2}}{t}}
{\estran{\symbevtsys_1}{\symbstate}{\etranlabel}{\symbevtsys'_1}{t}  & \symbevtsys'_1 \neq \fin }
}
&
\semanticrule{EvtSeqFin\vspace{0.2cm}}{
\infer{
\estran{\evtseq{\symbevtsys_1}{\symbevtsys_2}}{\symbstate}{\etranlabel}{\symbevtsys_2}{t}
}
{
\estran{\symbevtsys_1}{\symbstate}{\etranlabel}{\symbevtsys'_1}{t}  & \symbevtsys'_1 = \fin
}}
\end{tabular}
\vspace{0.1cm}

\begin{tabular}{cc}
\semanticrule{EvtChoice1}{
\infer{
\estran{\evtchoice{\symbevtsys_1}{\symbevtsys_2}}{\symbstate}{\etranlabel}{\symbevtsys'_1}{t}
}{
\estran{\symbevtsys_1}{\symbstate}{\etranlabel}{\symbevtsys'_1}{t}
}
}
&
\semanticrule{EvtChoice2}{
\infer{
\estran{\evtchoice{\symbevtsys_1}{\symbevtsys_2}}{\symbstate}{\etranlabel}{\symbevtsys'_2}{t}
}{
\estran{\symbevtsys_2}{\symbstate}{\etranlabel}{\symbevtsys'_2}{t}
}}
\end{tabular}
\vspace{0.1cm}

\begin{tabular}{cc}
\semanticrule{EvtJoin1}{
\infer{
\estran{\evtjoin{\symbevtsys_1}{\symbevtsys_2}}{\symbstate}{\etranlabel}{\evtjoin{\symbevtsys'_1}{\symbevtsys_2}}{t}
}{
\estran{\symbevtsys_1}{\symbstate}{\etranlabel}{\symbevtsys'_1}{t}
}}
&
\semanticrule{EvtJoin2}{
\infer{
\estran{\evtjoin{\symbevtsys_1}{\symbevtsys_2}}{\symbstate}{\etranlabel}{\evtjoin{\symbevtsys_1}{\symbevtsys'_2}}{t}
}{
\estran{\symbevtsys_2}{\symbstate}{\etranlabel}{\symbevtsys'_2}{t}
}}
\end{tabular}
\vspace{0.1cm}

\begin{tabular}{c}
\semanticrule{EvtJoinFin\vspace{0.2cm}}{
\infer{
\estran{\evtjoin{\fin}{\fin}}{\symbstate}{\actk{\etranlabelc}{\symbcore}}{\fin}{\symbstate}
}{-}}
\end{tabular}
\vspace{0.1cm}

\begin{tabular}{cc}
\semanticrule{EvtIterT}{
\infer{
\estran{\evtrec{b}{\symbevtsys}}{\symbstate}{\actk{\etranlabelc}{\symbcore}}{\evtseq{\symbevtsys}{(\evtrec{b}{\symbevtsys})}}{\symbstate}
}{\symbstate \in b & \symbevtsys \neq \fin}}
&
\semanticrule{EvtIterF}{
\infer{
\estran{\evtrec{b}{\symbevtsys}}{\symbstate}{\actk{\etranlabelc}{\symbcore}}{\fin}{\symbstate}
}{\symbstate \notin b}}
\end{tabular}
\vspace{0.1cm}

\begin{tabular}{ccc}
\semanticrule{Par}{
\infer{
\pestran{\symbpes}{\symbstate}{\actk{\symbact}{\symbcore}}{\symbpes(\symbcore \mapsto \symbevtsys)}{t}
}{
\estran{\symbpes \ \symbcore}{\symbstate}{\actk{\symbact}{\symbcore}}{\symbevtsys}{t}
}}
&
\semanticrule{EnvTran\_es\vspace{0.2cm}}{
\infer{\esenvtran{\symbevtsys}{\symbstate}{\symbevtsys}{t}}{-}
}
&
\semanticrule{EnvTran\_pes\vspace{0.2cm}}{
\infer{\pesenvtran{\symbpes}{\symbstate}{\symbpes}{t}}{-}
}
\end{tabular}

\caption{Small-step Operational Semantics of the Event Language}
\label{fig:semantics}
\end{figure}

The small-step transition relation of the underlying program language is denoted as $\ptran{P}{s}{Q}{t}$, where $P,Q$ are programs, $s,t$ are states and $\symbenv$ is used for some static configuration for programs (e.g. an environment for procedure declarations). The notation $\ptrans{P}{s}{Q}{t}$ is the reflexive transitive closure of $\ptran{P}{s}{Q}{t}$. 
In {\slang}, $\cmdfinal$ denotes program termination. 
The transition rules for event systems have the form of $\estran{\symbevtsys}{s}{\etranlabel}{\symbevtsys'}{t}$, where $\etranlabel =  \actk{\symbact}{\symbcore}$ is the combination of a transition label $\symbact$ of event systems and the execution context $\symbcore$. It means that transition $\symbact$ is done by the event system $\symbcore$. $\symbact$ indicates the transition type of event systems and is defined as 
\[\symbact = \etranlabelc \mid \evtent{l} \mid \aevtent{l} \]
The notation $\etranlabelc$ is a label constant representing anonymous transitions of triggered events (\semrulename{TriggeredEvt} and \semrulename{TriggeredEvtFin}), of expression evaluations (\semrulename{EvtIterT} and \semrulename{EvtIterF}) and of eliminating final triggered events $\fin$ (\semrulename{EvtJoinFin}). $\cmdfinal$ is the notation to represent the termination of programs. 
The other two types of labels indicate actions respectively triggering events and atomic events. $\actk{ }{\symbcore}$ means that the action occurs in event system $\symbcore$. 
The transition rule for parallel event systems has a similar form of $\pestran{\symbpes}{s}{\actk{\symbact}{\symbcore}}{\symbpes'}{t}$. 

Next, we discuss each transition rule of the semantics. The transition rule for triggered events is straightforward and defined directly by the transition rules of programs. 
The \semrulename{BasicEvt} rule specifies that a basic event occurs when one condition is satisfied by the current state. When a basic event is triggered in event system $\symbcore$, the execution context is changed to the event, and the body begins to be executed. 
As shown by the \semrulename{AtomEvt} rule, an atomic event behaves like a basic event, but it is atomically executed, i.e. it finishes its execution in a single step. 
The execution of a triggered event $\anonevt{\symbprog}$ just simulates the program $\anonevt{\symbprog}$ (\semrulename{TriggeredEvt}) and the execution context of the event system is set to {\textbf{None}} when its execution finishes (\semrulename{TriggeredEvtFin}). 
The \semrulename{EvtSeq} and \semrulename{EvtSeqFin} rules specify the behaviour of the sequential composition of two event systems. 
The \semrulename{EvtChoice1} and \semrulename{EvtChoice2} rules show a nondeterministic choice between two event systems. 
The \semrulename{EvtJoin1} and \semrulename{EvtJoin2} rules show the interleaving of the concurrent execution of two event systems in the interleaving manner. 
The execution of $\evtjoin{\symbevtsys_1}{\symbevtsys_2}$ finishes as specified by the \semrulename{EvtParFin} rule. 
The \semrulename{EvtIterT} and \semrulename{EvtIterF} rules specify the execution of an event system in a loop manner. The body of an iteration cannot be specified as the termination since it is a semantical notation. 
Finally, the \semrulename{Par} rule describes behaviour of parallel event systems. The execution of $\symbpes$ is the interleaving of its event systems. 
This design relaxes the atomicity of events in other approaches (e.g., Event-B \cite{Abrial07}).

Environment transition rules (\semrulename{EnvTran\_es} and \semrulename{EnvTran\_pes}) have the form $\anyenvtran{\symbspec}{s}{\symbspec}{t}$ where $\symbspec$ is a specification and the subscript $_{\square}$ ($_e$ and $_{pe}$) indicates the transition objects. Intuitively, a transition made by the environment may change the state but not the specification and the execution context. 

\subsection{Computation}
\label{subsect:comp}
The concept of \emph{computation} is intended to define the notion of validity using of the operational semantics in the {\slang} rely-guarantee proof system. 
We define a computation as a sequence of component and environment transitions. 
The most intuitively way of describing a computation is by directly using the operational semantics to construct sequence of computations as shown in Figure~\ref{fig:linear_comp}. That is, a computation is inductively constructed from an existing computation $(\symbevtsys_2, t)\#cs$ and a environment or component transition from $(\symbevtsys_1,s)$ to $(\symbevtsys_2,t)$. Note that this definition loses the structure of a computation. An alternative equivalent definition for computations takes its structure into account as we show in Figure~\ref{fig:modular_comp}. Although it is more elaborated, it can considerably simplify proofs, especially those involving iterative event systems. 

In previous versions of {\slang} in \cite{Zhao19FM}, we use the linear definition of computation and thus the proof of its soundness takes $\approx$ 8,000 lines. By the modular definition, the proof is reduced to $\approx$ 4,200 lines, even though we extend more general constructs for {\slang}.

\subsubsection{Linear Computation}
We define the set of computations of all event systems with static information $\symbenv$ as $\compfun(\symbenv)$, which is a set of lists of configurations inductively defined as shown in {\figprefix} \ref{fig:linear_comp}. The singleton list is always a computation (\comprulename{CptsOne}). Two consecutive configurations are part of a computation if they are the initial and final configurations of an environment (\comprulename{CptsEnv}) or action transition (\comprulename{CptsAct}). The operator $\#$ in $e\# l$ represents the insertion of element $e$ in list $l$. 

\begin{figure}
\small
\begin{tabular}{ccc}
\comprule{CptsOne\vspace{0.3cm}}{\infer{[(\symbevtsys, \symbstate)] \in \compfun(\symbenv)}{ - }}
&
\comprule{CptsEnv\vspace{0.2cm}}{
\infer{
(\symbevtsys, \symbstate)\#(\symbevtsys, t)\#cs \in \compfun(\symbenv) 
}{
 (\symbevtsys, t)\#cs \in \compfun(\symbenv) 
}}
&
\comprule{CptsAct}{
\infer{ 
(\symbevtsys_1, \symbstate)\#(\symbevtsys_2, t)\#cs \in \compfun(\symbenv) 
}{ 
\estran{\symbevtsys_1}{s}{\etranlabel}{\symbevtsys_2}{t} & (\symbevtsys_2, t)\#cs \in  \compfun(\symbenv) 
}}
\end{tabular}
\caption{Linear Definition of Computation}
\label{fig:linear_comp}
\end{figure}


We use $\compfun(\symbenv,\symbevtsys)$ to denote the set of computations of an event system $\symbevtsys$. The function $\compfun(\symbenv, \symbevtsys, \symbstate)$ denotes the computations of $\symbevtsys$ starting up from an initial state $\symbstate$ and execution context $x$. Computations for parallel event systems are defined in a similar way. 

\subsubsection{Modular Computation}
We define the set of computations of event systems as the lists of configurations in an inductive manner formed by the rules shown in {\figprefix} \ref{fig:modular_comp}. For the modular computation, we use the notation $\compfunm$ instead of $\compfun$. 

\begin{figure}
\small 

\begin{tabular}{cc}
\comprule{CptsMOne\vspace{0.3cm}}{
\infer{[(\symbevtsys, \symbstate)] \in \compfunm(\symbenv)}{ - }}
&
\comprule{CptsMEnv\vspace{0.2cm}}{
\infer
{ (\symbevtsys, \symbstate)\#(\symbevtsys, t)\#cs \in \compfunm(\symbenv) }
{ (\symbevtsys, t)\#cs \in \compfunm(\symbenv) }}
\end{tabular}
\vspace{0.2cm}

\begin{tabular}{cc}
\comprule{CptsMTrgEvt}{
  \infer
	{ (\anonevt{\symbprog}, \symbstate)\#(\anonevt{Q}, t)\#cs \in \compfunm(\symbenv) }
	{ \begin{tabular}{l}
		$\ptran{\symbprog}{\symbstate}{Q}{t} \quad Q \neq \cmdfinal$ \\
		$(\anonevt{Q}, t)\#cs \in \compfunm(\symbenv)$
	  \end{tabular}
	}
}
&
\comprule{CptsMTrgEvtFin}{
  \infer
	{ (\anonevt{\symbprog}, \symbstate)\#(\anonevt{Q}, t)\#cs \in \compfunm(\symbenv) }
	{ \begin{tabular}{l}
		$\ptran{\symbprog}{\symbstate}{Q}{t} \quad Q = \cmdfinal$\\
		$(\anonevt{Q}, t)\#cs \in \compfunm(\symbenv)$
	   \end{tabular}
	}
}
\end{tabular}
\vspace{0.2cm}

\begin{tabular}{cc}
\comprule{CptsMBasicEvt}{
  \infer
	{ (\event{\evtdef{ev}},\symbstate)\#(\anonevt{\symbprog},\symbstate)\#cs \in \compfunm(\symbenv) }
	{ \begin{tabular}{l}
		$(l,g,P) \in \evtdef{ev} $\\
		$\symbstate \in g \quad (\anonevt{\symbprog},\symbstate)\#cs \in \compfunm(\symbenv)$
	  \end{tabular}
	}
}
&
\comprule{CptsMAtomEvt}{
  \infer
	{ (\atomevt{\evtdef{ev}},\symbstate)\#(\fin,t)\#cs \in \compfunm(\symbenv) }
	{ \begin{tabular}{l}
		$(l,g,P) \in \evtdef{ev} \quad \ptrans{\symbprog}{\symbstate}{\cmdfinal}{t} $\\
		$s \in g \quad (\fin,t,x)\#cs \in \compfunm(\symbenv)$
	  \end{tabular}
	}
}
\end{tabular}
\vspace{0.2cm}

\begin{tabular}{cc}
\comprule{CptsMSeq}{
  \infer
	{ (\evtseq{\symbevtsys_1}{\symbevtsys_2},\symbstate) \# (\evtseq{\symbevtsys'_1}{\symbevtsys_2},t) \# cs \in \compfunm(\symbenv) }
	{ \begin{tabular}{l}
		$\estran{\symbevtsys_1}{\symbstate}{\etranlabel}{\symbevtsys'_1}{t}$\\
		$(\evtseq{\symbevtsys'_1}{\symbevtsys_2},t) \# cs \in \compfunm(\symbenv)$
	  \end{tabular}
	}
}
&
\comprule{CptsMSeqFin}{
  \infer
	{ (\evtseq{\symbevtsys_1}{\symbevtsys_2},\symbstate) \# (\symbevtsys_2,t) \# cs \in \compfunm(\symbenv) }
	{ \begin{tabular}{l}
		$\estran{\symbevtsys_1}{\symbstate}{\etranlabel}{\fin}{t}$ \\
		$(\symbevtsys_2,t) \# cs \in \compfunm(\symbenv) $
	  \end{tabular}
	}
}
\end{tabular}
\vspace{0.2cm}

\begin{tabular}{cc}
\comprule{CptsMChc1}{
  \infer
	{ (\evtchoice{\symbevtsys_1}{\symbevtsys_2},\symbstate) \# (\symbevtsys'_1,t) \# cs \in \compfunm(\symbenv) }
	{ \begin{tabular}{l}
		$\estran{\symbevtsys_1}{\symbstate}{\etranlabel}{\symbevtsys'_1}{t}$\\
		$(\symbevtsys'_1,t) \# cs \in \compfunm(\symbenv)$
	  \end{tabular}
	}
}
&
\comprule{CptsMChc2}{
  \infer
	{ (\evtchoice{\symbevtsys_1}{\symbevtsys_2},\symbstate) \# (\symbevtsys'_2,t) \# cs \in \compfunm(\symbenv) }
	{ \begin{tabular}{l}
		$\estran{\symbevtsys_2}{\symbstate}{\etranlabel}{\symbevtsys'_2}{t}$ \\
		$(\symbevtsys'_2,t,y) \# cs \in \compfunm(\symbenv)$
	  \end{tabular}
	}
}
\end{tabular}
\vspace{0.2cm}

\begin{tabular}{cc}
\comprule{CptsMJoin1}{
  \infer
	{ (\evtjoin{\symbevtsys_1}{\symbevtsys_2},\symbstate) \# (\evtjoin{\symbevtsys'_1}{\symbevtsys_2},t) \# cs \in \compfunm(\symbenv) }
	{ \begin{tabular}{l}
		$\estran{\symbevtsys_1}{\symbstate}{\etranlabel}{\symbevtsys'_1}{t}$ \\
		$(\evtjoin{\symbevtsys'_1}{\symbevtsys_2},t) \# cs \in \compfunm(\symbenv)$
	  \end{tabular}
	}
}
&
\comprule{CptsMJoin2}{
  \infer
	{ (\evtjoin{\symbevtsys_1}{\symbevtsys_2},\symbstate) \# (\evtjoin{\symbevtsys_1}{\symbevtsys'_2},t) \# cs \in \compfunm(\symbenv) }
	{ \begin{tabular}{l}
		$\estran{\symbevtsys_2}{\symbstate}{\etranlabel}{\symbevtsys'_2}{t}$ \\
		$(\evtjoin{\symbevtsys_1}{\symbevtsys'_2},t) \# cs \in \compfunm(\symbenv)$
	  \end{tabular}
	}
}
\end{tabular}
\vspace{0.2cm}

\begin{tabular}{cc}
\comprule{CptsMJoinFin}{
  \infer
	{ (\evtjoin{\fin}{\fin},\symbstate) \# ({\fin},\symbstate) \# cs \in \compfunm(\symbenv) }
	{ ({\fin},\symbstate) \# cs \in \compfunm(\symbenv) }
}
&
\comprule{CptsMIterF}{
  \infer
	{ (\evtrec{b}{\symbevtsys},\symbstate) \# (\fin,\symbstate) \# cs \in \compfunm(\symbenv)  }
	{ \symbstate \notin b
	& (\fin,\symbstate) \# cs \in \compfunm(\symbenv) }
}
\end{tabular}
\vspace{0.2cm}


\begin{tabular}{c}
\comprule{CptsMIterTOne}{
  \infer
	{ (\evtrec{b}{\symbevtsys},\symbstate) \# (\ \textbf{lift-seq-cpt}\ (\evtrec{b}{\symbevtsys}) \ ((\symbevtsys,\symbstate) \# cs)\ ) \in \compfunm(\symbenv) }
	{ \symbstate \in b
	& (\symbevtsys,\symbstate) \# cs \in \compfunm(\symbenv)
	}
}
\end{tabular}
\vspace{0.2cm}

\begin{tabular}{c}
\comprule{CptsMIterTMore}{
  \infer
	{ (\evtrec{b}{\symbevtsys},\symbstate) \# (\ \textbf{lift-seq-cpt}\ (\evtrec{b}{\symbevtsys}) \ ((\symbevtsys,\symbstate) \# cs)\ ) @ (\evtrec{b}{\symbevtsys},t) \# cs' \in \compfunm(\symbenv) }
	{ \symbstate \in b
	& (\symbevtsys,\symbstate) \# cs \in \compfunm(\symbenv)
	& \estranconf{last ((\symbevtsys,\symbstate) \# cs)}{\etranlabel}{(\fin,t)}
	& (\evtrec{b}{\symbevtsys},t) \# cs' \in \compfunm(\symbenv) }
}
\end{tabular}

\caption{Modular Definition of Computation}
\label{fig:modular_comp}
\end{figure}

The first two rules are the same as those in the linear definition. The {\comprulename{CptsAct}} rule, however, is now replaced by fourteen rules which correspond to different kinds of component transitions. 
The rules in the modular semantics for events and for \emph{sequence}, \emph{choice}, and \emph{join} composition, directly follow the system event operational semantics for each construct as defined in {\figprefix} \ref{fig:semantics}.

We explain the detail of the last two rules for iteration. They unfold the iteration into the sequential execution of the event system being iterated. 
The {\comprulename{CptsMIterTOne}} rule represents the computations with one iteration, and {\comprulename{CptsMIterTMore}} those where the body has been iteratively executed for more than once. The $last$ function indicates the last element in a list and the ``@'' function concatenates two lists. 

Here, we use an auxiliary function \textbf{lift-seq-cpt} as follows. Given a computation $c$ and an event system $Q$, the function returns the same computation where the specification of each configuration has been sequentially composed with $Q$. The $map$ function applies the transformation (a lambda function) to each element of a list.
\[
\textbf{lift-seq-cpt} \ c \ Q \equiv map \ (\lambda\ (\symbevtsys,\symbstate).\ (\evtseq{\symbevtsys}{Q}, \symbstate))\ c
\]

\subsubsection{Equivalence of the Linear and Modular Computations}
The following theorem shows the equivalence in {\slang} between the linear and modular definitions for computation.

\begin{theorem}[Equivalence of Linear and Modular Computations]\label{thm:equiv_comp}
The sets of linear and modular computations are equivalent, i.e. $\compfun(\symbenv) = \compfunm(\symbenv)$.
\end{theorem}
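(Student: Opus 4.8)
The plan is to establish the set equality by proving the two inclusions $\compfunm(\symbenv) \subseteq \compfun(\symbenv)$ and $\compfun(\symbenv) \subseteq \compfunm(\symbenv)$ separately, each by rule induction on the corresponding inductive definition. The modular definition is structured by the syntactic form of the event system, while the linear one treats every component step uniformly via \comprulename{CptsAct}; this mismatch is what makes one direction routine and the other the real work.

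For $\compfunm(\symbenv) \subseteq \compfun(\symbenv)$ I would induct on the modular rules. The base and environment cases \comprulename{CptsMOne} and \comprulename{CptsMEnv} coincide syntactically with \comprulename{CptsOne} and \comprulename{CptsEnv}. Every modular rule that prepends a single configuration produced by one operational step — \comprulename{CptsMTrgEvt}, \comprulename{CptsMTrgEvtFin}, \comprulename{CptsMBasicEvt}, \comprulename{CptsMAtomEvt}, the \emph{sequence}, \emph{choice} and \emph{join} rules, and \comprulename{CptsMIterF} — closes immediately: the premise supplies an event-system transition, the induction hypothesis puts the tail in $\compfun(\symbenv)$, and one application of \comprulename{CptsAct} finishes the case. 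The only genuine work lies in \comprulename{CptsMIterTOne} and \comprulename{CptsMIterTMore}, which introduce the \textbf{lift-seq-cpt} construction. For these I would first prove an auxiliary lemma: if $c \in \compfun(\symbenv)$ then $\textbf{lift-seq-cpt}\ c\ Q \in \compfun(\symbenv)$, by induction on $c$, using \semrulename{EvtSeq} to mirror each component step of $c$ inside the sequential composition and \semrulename{EnvTran\_es} for environment steps. Together with the opening \semrulename{EvtIterT} transition $\estran{\evtrec{b}{\symbevtsys}}{\symbstate}{\etranlabel}{\evtseq{\symbevtsys}{(\evtrec{b}{\symbevtsys})}}{\symbstate}$ this handles \comprulename{CptsMIterTOne}; for \comprulename{CptsMIterTMore} I additionally need a concatenation lemma, stating that two linear computations joined at a valid transition again form a linear computation, where the junction transition is the \semrulename{EvtSeqFin} step turning $\evtseq{\fin}{(\evtrec{b}{\symbevtsys})}$ back into $\evtrec{b}{\symbevtsys}$.

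For the converse inclusion $\compfun(\symbenv) \subseteq \compfunm(\symbenv)$ I would induct on the linear rules. \comprulename{CptsOne} and \comprulename{CptsEnv} map directly to their modular counterparts, and all the difficulty concentrates in \comprulename{CptsAct}: given $\estran{\symbevtsys_1}{s}{\etranlabel}{\symbevtsys_2}{t}$ and, by induction hypothesis, $(\symbevtsys_2,t)\#cs \in \compfunm(\symbenv)$, I must rebuild $(\symbevtsys_1,s)\#(\symbevtsys_2,t)\#cs \in \compfunm(\symbenv)$. I would case-analyse on the operational rule deriving the transition. For every construct except iteration there is a matching modular rule that prepends exactly this configuration, so each case discharges by a single rule application. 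The iteration step \semrulename{EvtIterT}, however, unfolds $\evtrec{b}{\symbevtsys}$ into $\evtseq{\symbevtsys}{(\evtrec{b}{\symbevtsys})}$ one step at a time, whereas \comprulename{CptsMIterTOne} and \comprulename{CptsMIterTMore} bundle an entire body execution with its continuation, so there is no single modular rule to apply.

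The main obstacle is therefore this iteration-folding case. To overcome it I would prove a decomposition lemma for modular computations of a sequential composition: any $(\evtseq{\symbevtsys_1}{\symbevtsys_2},s)\#cs \in \compfunm(\symbenv)$ is either the \textbf{lift-seq-cpt} image of a modular computation of $\symbevtsys_1$ in which $\symbevtsys_1$ never reaches $\fin$, or splits as such a lifted prefix (up to the point where $\symbevtsys_1$ reaches $\fin$) followed by a modular computation of $\symbevtsys_2$. Specializing this to $\evtseq{\symbevtsys}{(\evtrec{b}{\symbevtsys})}$ lets me recognize exactly the shape consumed by \comprulename{CptsMIterTOne} (body not yet finished) or \comprulename{CptsMIterTMore} (body finished, continuation again an iteration), so the folded computation $(\evtrec{b}{\symbevtsys},s)\#\cdots$ can be reconstructed. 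Because the continuation after one finished body is a strictly shorter iteration computation, this recomposition is justified by well-founded induction on the length of the computation. I expect the bookkeeping around \textbf{lift-seq-cpt}, list concatenation (@), and the $last$ of the body computation in \comprulename{CptsMIterTMore} to be the most delicate part — and it is precisely this structural juggling that the modular definition is engineered to make tractable, which the paper credits with roughly halving the soundness proof.
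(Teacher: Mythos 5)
The paper never prints a proof of Theorem~\ref{thm:equiv_comp}; it is discharged only in the Isabelle sources. Your plan --- mutual inclusion by rule induction, with every construct except iteration closing through the one-to-one correspondence between the modular rules and \comprulename{CptsAct}, and the iteration cases handled via a lifting lemma for $\textbf{lift-seq-cpt}$, a concatenation lemma for linear computations, and a decomposition lemma for modular computations of $\evtseq{\symbevtsys_1}{\symbevtsys_2}$ --- is the natural argument and identifies exactly the auxiliary lemmas the modular definition forces; there is no indication the mechanization does anything structurally different.

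One step needs care. The lifting lemma as you state it, ``if $c \in \compfun(\symbenv)$ then the $\textbf{lift-seq-cpt}$ image of $c$ is in $\compfun(\symbenv)$,'' is false without a side condition: if $c$ contains a component step into $\fin$, the lifted list contains a configuration of the form $(\evtseq{\fin}{Q},t)$, and the step into it is licensed neither by \semrulename{EvtSeq} (which requires $\symbevtsys'_1 \neq \fin$) nor by \semrulename{EvtSeqFin} (which produces $Q$, not $\evtseq{\fin}{Q}$); your own proof sketch, which mirrors each step with \semrulename{EvtSeq}, would get stuck precisely there. The lemma must carry the hypothesis that no configuration of $c$ has specification $\fin$ --- which is exactly how \comprulename{CptsMIterTMore} is arranged, since the body's transition to $\fin$ is kept outside the lifted segment as the junction step. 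Correspondingly, in the \comprulename{CptsMIterTOne} case you must argue (or the definition must guarantee) that the body computation being lifted never reaches $\fin$, since otherwise that rule would manufacture lists that are not linear computations and the equality itself would fail; the same non-$\fin$ invariant must appear in the first branch of your decomposition lemma for the converse inclusion. With that condition threaded through, the rest of the plan, including the length induction used to recompose folded iterations, goes through.
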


This theorem is intensively used when proving the soundness of the rely-guarantee proof rules in {\slang}, especially the proof rule of iteration. 
From theorem~\ref{thm:equiv_comp}, it can be directly inferred that $\compfun(\symbenv,\symbevtsys) = \compfunm(\symbenv,\symbevtsys)$. 

\subsection{Rely-guarantee Proof System}
For compositional reasoning, we propose a rely-guarantee proof system for {\slang}. 
We first introduce the rely-guarantee specification and its validity. Then, we present a set of proof rules and its soundness with regards to the notion of validity. 

We consider the verification of two classes of properties in the rely-guarantee proof system of {\slang}: functional correctness by pre and post conditions and safety by invariants. 

\subsubsection{Validity}

A rely-guarantee specification in {\slang} is a quadruple $\rgconddefault$, where $pre$ is the precondition, $rely$ is the rely condition, $guar$ is the guarantee condition, and $post$ is the post condition. The assumption and commitment functions for event systems are denoted by $\assumefun$ and $\commitfun$ respectively. For each computation of an event system $\symbevtsys$, $\symbcomp \in \compfun(\symbenv,\symbevtsys)$, we use $\symbcomp_i$ to denote the configuration at index $i$. $\symbspec_{\symbcomp_i}$ and $\symbstate_{\symbcomp_i}$ represent the projection of each component in the tuple $\symbcomp_i =(\symbspec, \symbstate)$. The $last$ function indicates the last element in a list. 
\begin{equation} 
\label{eq:assume_commit}
\begin{aligned}
& \assumefun(\textbf{pre}, \textbf{rely}) \equiv \{\symbcomp \mid \symbstate_{\symbcomp_0} \in \textbf{pre} \wedge (\forall i < len(\symbcomp) - 1. \ (\symbspec_{\symbcomp_i} = \symbspec_{\symbcomp_{i+1}}) \longrightarrow (\symbstate_{\symbcomp_i},\symbstate_{\symbcomp_{i+1}}) \in \textbf{rely})\} \\
& \commitfun(\symbenv, \textbf{guar}, \textbf{post}) \equiv \{\symbcomp \mid \ (\forall i < len(\symbcomp) - 1. \ (\exists \etranlabel.\ \estranconf{\symbcomp_i}{\etranlabel}{\symbcomp_{i+1}}) \longrightarrow (\symbstate_{\symbcomp_i},\symbstate_{\symbcomp_{i+1}}) \in \textbf{guar}) \\
& \quad \quad \quad \quad \quad \quad \quad \quad \quad \quad \quad \quad 
\wedge (\symbspec_{last(\symbcomp)} = \fin \longrightarrow \symbstate_{last(\symbcomp)} \in \textbf{post})\} 
\end{aligned}
\end{equation}

The validity of rely-guarantee specification for event systems is defined as 
\[
\RGSATe{\symbevtsys}{\rgconddefaultb} \equiv \forall \symbstate. \ \compfun(\symbenv, \symbevtsys, \symbstate) \cap \assumefun(\textbf{pre}, \textbf{rely}) \subseteq \commitfun(\symbenv, \textbf{guar}, \textbf{post})
\]

Intuitively, validity represents that the set of computations $cpts$ starting at the configuration $(\symbevtsys, \symbstate)$, with $\symbstate \in \textbf{pre}$ and environment transitions in a computation $cpt \in cpts$ belonging to the rely relation $\textbf{rely}$, is a subset of the set of computations where component transitions belong to the guarantee relation $\textbf{guar}$ and if the event system terminates, then the final states belongs to $\textbf{post}$. 
Validity for parallel event systems is defined similarly.

\subsubsection{Proof Rules}
Now we present the proof rules of {\slang} as shown in {\figprefix} \ref{fig:proof_rules}, which give us a relational proof method for CRSs. We first define $\stablefun{f}{g} \equiv \forall x,y. \ x \in f \wedge (x, y) \in g \longrightarrow y \in f$. Thus, $\stablefun{pre}{rely}$ means that the precondition is stable when the rely condition holds. Rules may assume stability of the precondition with regards to the rely relation to ensure that the precondition holds during the environment transitions. 
The identity set is denoted as {\idset}, which is defined as $\{{p \mid \exists x .\ p = (x, x)}\}$. The universal set is denoted as {\univset}. 

We design three forms of proof rules according to the different specification levels: program, event systems, and parallel event systems. They are denoted by:

\begin{equation*} 
\begin{aligned}
(1)\ & \rgsatp{\symbprog}{\rgconddefault} \\
(2)\ & \rgsate{\symbevtsys}{\rgconddefault} \\
(3)\ & \rgsatpe{\symbpes}{\rgconddefault}
\end{aligned}
\end{equation*}

The {\slang} framework reuses the proof rules for programs in 3rd-party languages rather than defining them. 
We discuss the proof rules for event systems and parallel event systems in detail.

\begin{figure}
\small 

\begin{tabular}{cc}
\proofrule{RG-BasicEvt}
{
  \infer{ \rgsate{\event{\evtdef{ev}}}{\rgconddefault} }
		{ 
		\begin{tabular}{l}
			$\forall (l,g,P) \in \evtdef{ev}.\ \rgsatp{\symbprog}{\rgcond{pre \cap g}{rely}{guar}{post}}$ \\
			$\stablefun{pre}{rely} \quad \idset \subseteq guar$
		\end{tabular} 
		}
}
&
\proofrule{RG-TrgEvt\vspace{0.32cm}}
{
  \infer{ \rgsate{\anonevt{\symbprog}}{\rgconddefault} }
		{ \rgsatp{\symbprog}{\rgconddefault} }
}
\end{tabular}
\vspace{0.2cm}

\begin{tabular}{c}
\proofrule{RG-AtomEvt}
{
  \infer{ \rgsate{\atomevt{\evtdef{ev}}}{\rgconddefault} }
		{ 
		\begin{tabular}{l}
			$\forall (l,g,P) \in \evtdef{ev}. \forall V.\ \rgsatp{\symbprog}{\rgcond{pre \cap g \cap \{V\}}{\idset}{\univset}{\{s \mid (V,s) \in guar\} \cap post }}$ \\
			$\stablefun{pre}{rely} \quad \stablefun{post}{rely}$
		\end{tabular} 
    	}
}
\end{tabular}
\vspace{0.2cm}

\begin{tabular}{cc}
\proofrule{RG-Seq}
{
  \infer{ \rgsate{\evtseq{\symbevtsys_1}{\symbevtsys_2}}{\rgconddefault} }
		{ 
		\begin{tabular}{l}
			$\rgsate{\symbevtsys_1}{\rgcond{pre}{rely}{guar}{mid}} $ \\
			$\rgsate{\symbevtsys_2}{\rgcond{mid}{rely}{guar}{post}}$ 
		\end{tabular}
		}
}
&
\proofrule{RG-Choice}
{
  \infer{ \rgsate{\evtchoice{\symbevtsys_1}{\symbevtsys_2}}{\rgconddefault} }
		{ 
		\begin{tabular}{l}
			$\rgsate{\symbevtsys_1}{\rgconddefault}$ \\
			$\rgsate{\symbevtsys_2}{\rgconddefault}$
		\end{tabular}
		}
}
\end{tabular}
\vspace{0.2cm}

\begin{tabular}{c}
\proofrule{RG-Join}
{
  \infer{ \rgsate{\evtjoin{\symbevtsys_1}{\symbevtsys_2}}{\rgcond{pre_1 \cap pre_2}{rely}{guar}{post_1 \cap post_2}} }
		{ 
		\begin{tabular}{l}
			(1) $\rgsate{\symbevtsys_1}{\rgcond{pre_1}{rely_1}{guar_1}{post_1}}$ \quad (2) $\rgsate{\symbevtsys_2}{\rgcond{pre_2}{rely_2}{guar_2}{post_2}}$
			\\
			(3) $guar_1 \cup guar_2 \subseteq guar$ \quad (4) $rely \cup guar_2 \subseteq rely_1$ \quad (5) $rely \cup guar_1 \subseteq rely_2$ \quad (6) $\idset \subseteq guar$ 
		\end{tabular}
	    }
}
\end{tabular}
\vspace{0.2cm}

\begin{tabular}{c}
\proofrule{RG-Iter}
{
  \infer{ \rgsate{\evtrec{b}{\symbevtsys}}{\rgcond{inv}{rely}{guar}{post}} }
		{ \rgsate{\symbevtsys}{\rgcond{inv \cap b}{rely}{guar}{inv}}
		& inv \cap ~b \subseteq post & \idset \subseteq guar & \stablefun{inv}{rely} & \stablefun{post}{rely}
		}
}
\end{tabular}
\vspace{0.2cm}

\begin{tabular}{c}
\proofrule{RG-Conseq}
{
  \infer{ \rgsate{\symbevtsys}{\rgconddefault} }
		{ \rgsate{\symbevtsys}{\rgcond{pre'}{rely'}{guar'}{post'}} 
		& pre \subseteq pre' & rely \subseteq rely' & guar' \subseteq guar & post' \subseteq post}
}
\end{tabular}
\vspace{0.2cm}

\begin{tabular}{c}
\proofrule{RG-ParEvtSys}
{
  \infer{ \rgsatpe{\symbpes}{\rgconddefault} }
		{ 
		\begin{tabular}{l}
			$\forall \symbcore. \ \rgsate{\symbpes(\symbcore)}{\rgcond{Pre(\symbcore)}{Rely(\symbcore)}{Guar(\symbcore)}{Post(\symbcore)}}$ 
			\quad
			$\forall \symbcore \ \symbcore'. \ \symbcore \neq \symbcore' \longrightarrow Guar(\symbcore) \subseteq Rely(\symbcore')$\\
			$\forall \symbcore. \ pre \subseteq Pre(\symbcore)$ \quad $\forall \symbcore. \ rely \subseteq Rely(\symbcore)$ \quad 
			$\forall \symbcore. \ Guar(\symbcore) \subseteq guar$ \quad $(\bigcap \symbcore. \ Post(\symbcore)) \subseteq post$
		\end{tabular}
		}
}
\end{tabular}
%
%
%

\caption{Rely-guarantee Proof Rules of {\slang}}
\label{fig:proof_rules}
\end{figure}

\paragraph{Basic event}
As we explained in section~\ref{sec:events}, an event  $\event{\eventdef}$ is composed of the set of all possible instances of $(l,g,P)$ from the parameters domain. 
A basic event satisfies its rely-guarantee specification (rule {\proofrulename{RG-BasicEvt}}) if all parameter instantiation of the body $P\in \eventdef$ satisfies the specification, strengthening the precondition with the guard of the event. Since the occurrence of a basic event (see the {\semrulename{BasicEvt}} transition rule in {\figprefix} \ref{fig:semantics}) does not change the state, the guarantee relation must include the identity relation to accept stuttering transitions. 

\paragraph{Atomic event}
By the semantics of atomic events, a positive evaluation of its guard and the execution of its body is done atomically. Thus, the state transition caused by the complete execution of $\symbprog$ must satisfy the guarantee condition. This is presented in the pre and post conditions of $\symbprog$ in the assumptions of the {\proofrulename{RG-AtomEvt}} rule. We use a universally quantified variable $V$ to relate the state before and after the transformation. 
Changes during the execution of $\symbprog$ may arbitrarily modify intermediate states, hence $\symbprog$ must guarantee any possible transition. Thus the guarantee condition is $\univset$.
Since $\symbprog$ is executed atomically, the environment cannot change the state, i.e. the rely condition is the identity relation $\idset$. To ensure that both the pre and post conditions are preserved after environment transitions, we request stability of $pre$ and $post$. 

\paragraph{Triggered event}
The {\proofrulename{RG-TrgEvt}} rule says that a triggered event $\anonevt{\symbprog}$ satisfies the rely-guarantee specification if the program $\symbprog$ satisfies the specification. This rule is directly derived from the semantics for triggered events in {\figprefix} \ref{fig:semantics}, where triggered events modify the state according to how the program modifies the state. Note that this rule is not necessary in the proof system, since valid formulas on an event are tackled by  {\proofrulename{RG-BasicEvt}}, however we keep it for the sake of completeness.

\paragraph{Sequence of event systems}
The rules for the sequential composition (rule {\proofrulename{RG-Seq}}) is standard. 

\paragraph{Choice of event systems}
Regarding the semantics of the Choice of event systems, $\evtchoice{\symbevtsys_1}{\symbevtsys_2}$ behaves as $\symbevtsys_1$ or $\symbevtsys_2$. Therefore, $\evtchoice{\symbevtsys_1}{\symbevtsys_2}$ satisfies the rely-guarantee specification when both $\symbevtsys_1$ and $\symbevtsys_2$ satisfy the specification.  

\paragraph{Join of event systems}
The concurrent join of two event systems satisfies a rely-guarantee specification if 

\begin{itemize}
\item The two systems satisfy their specifications (Premise 1 and 2).
\item since a component transition of the join is performed by one of its systems, the guarantee condition \emph{guar} must be at least the union of the guarantee conditions of the two systems (Premise 3);
\item An environment transition for one system corresponds to a component transition of another system, or of a transition from the overall environment. Hence, the strongest rely condition for $\symbevtsys_1$ is $rely \cup guar_2$ (Premise 4), and it is the same as for $\symbevtsys_2$ (Premise 5);
\item Since the \semrulename{EvtJoinFin} rule in {\figprefix} \ref{fig:semantics} does not change the state, the guarantee relation must include the identity relation (Premise 6).
\item Finally, the pre and post conditions for the join are the intersection of pre and post conditions of the two systems, respectively. 
\end{itemize}

\paragraph{Iteration of event systems}
In the {\proofrulename{RG-Iter}} rule the precondition plays the role of the invariant; it must hold before and after the execution of the body at every iteration. 

\paragraph{Consequence rule}
The rule of consequence allows us to strengthen the assumptions and weaken the commitments. 

\paragraph{Parallel event systems}
The {\proofrulename{RG-ParEvtSys}} rule presents the parallel composition of a set of event systems, which are described as a function $\symbpes: \symbCore \rightarrow \symbevtsys$. The premises of this rule are similar to {\proofrulename{RG-Join}, extended to an arbitrary number of parallel event systems.


\subsubsection{Soundness}
Finally, the soundness theorem for a specification relates rely-guarantee specifications proven on the proof system with its validity.

\begin{theorem}[Soundness of Proof Rules for Event Systems]
if \ $\rgsate{\symbevtsys}{\rgconddefault}$, then $\RGSATe{\symbevtsys}{\rgconddefault}$. 
\end{theorem}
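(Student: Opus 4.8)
The plan is to proceed by rule induction on the derivation of $\rgsate{\symbevtsys}{\rgconddefault}$, establishing for each of the eight proof rules that its conclusion is valid whenever the premises are (these serving as induction hypotheses for the structural rules). Unfolding the definition of validity, for a fixed rule I would take an arbitrary state $\symbstate$ and an arbitrary computation $\symbcomp \in \compfun(\symbenv, \symbevtsys, \symbstate)$ lying in $\assumefun(\textbf{pre},\textbf{rely})$, and show $\symbcomp \in \commitfun(\symbenv,\textbf{guar},\textbf{post})$; that is, every component transition in $\symbcomp$ respects $\textbf{guar}$ and, whenever the final specification is $\fin$, the last state lies in $\textbf{post}$. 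Throughout I would move freely between $\compfun$ and $\compfunm$ using Theorem~\ref{thm:equiv_comp}, reasoning on the modular computations whose inductive structure mirrors the syntax of the event system under analysis.

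For the base cases (the event rules) the obligation reduces to the program-level soundness $\rgsatp{\symbprog}{\cdots} \Longrightarrow \RGSATp{\symbprog}{\cdots}$ supplied by the rely-guarantee adapter. For \proofrulename{RG-TrgEvt}, the semantics rules \semrulename{TriggeredEvt} and \semrulename{TriggeredEvtFin} put the computations of $\anonevt{\symbprog}$ in bijection with those of $\symbprog$, so the claim follows directly. For \proofrulename{RG-BasicEvt}, a computation begins with a state-preserving \semrulename{BasicEvt} step into $\anonevt{\symbprog}$: the requirement $\idset \subseteq guar$ discharges this stuttering step, stability of $pre$ under $rely$ ensures $pre \cap g$ still holds when the body starts, and the previous case finishes the proof. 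For \proofrulename{RG-AtomEvt}, the whole event is a single \semrulename{AtomEvt} step; instantiating the quantified $V$ with the pre-state and invoking the program judgement with rely $\idset$ and guarantee $\univset$ shows the big step lies in $guar$, while stability of $pre$ and $post$ handles the surrounding environment steps.

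For the compositional rules I would prove, per construct, a decomposition lemma splitting a modular computation of the composite into computations of its components. A computation of $\evtseq{\symbevtsys_1}{\symbevtsys_2}$ splits, via \comprulename{CptsMSeq} and \comprulename{CptsMSeqFin}, into a $\symbevtsys_1$-computation followed by a $\symbevtsys_2$-computation sharing the handover state $mid$ (rule \proofrulename{RG-Seq}); \proofrulename{RG-Choice} is immediate from the first component step; and \proofrulename{RG-Iter} unfolds through \comprulename{CptsMIterF}, \comprulename{CptsMIterTOne} and \comprulename{CptsMIterTMore} into a chain of body computations, each re-establishing the invariant $inv$, with $inv \cap \neg b \subseteq post$ yielding the postcondition on exit — this is precisely where the modular definition pays off. \proofrulename{RG-Conseq} is routine monotonicity of $\assumefun$ and $\commitfun$ in their arguments.

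The main obstacle will be \proofrulename{RG-Join}, the genuine-concurrency case. Here I would prove a projection lemma: from a modular computation $\symbcomp$ of $\evtjoin{\symbevtsys_1}{\symbevtsys_2}$ satisfying the join's assumption, I construct component computations of $\symbevtsys_1$ and $\symbevtsys_2$ by reading each \semrulename{EvtJoin1} step as a component step of $\symbevtsys_1$ (hence an environment step of $\symbevtsys_2$) and symmetrically for \semrulename{EvtJoin2}, with the ambient environment steps of $\symbcomp$ being environment steps of both. The crux is verifying that each projected computation satisfies its own assumption: an environment step seen by $\symbevtsys_1$ is either an ambient step (in $rely \subseteq rely_1$ by Premise~4) or a component step of $\symbevtsys_2$ (in $guar_2$ by the induction hypothesis, hence in $rely_1$ again by Premise~4), and symmetrically for $\symbevtsys_2$ via Premise~5. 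The induction hypotheses then place each projected computation in $\commitfun(\symbenv, guar_i, post_i)$, whence Premises~3 and~6 give $\textbf{guar}$ for every component step of the join — including the identity \semrulename{EvtJoinFin} step — and $post_1 \cap post_2 \subseteq post$ gives the postcondition. The delicate points are making the projection well-defined across the \semrulename{EvtJoinFin} transition and discharging the mutual dependence between the two components' assumptions and guarantees without circularity; the latter is resolved because guarantees are checked step-by-step along the already-fixed computation rather than assumed globally, so an induction on the length of $\symbcomp$ lets each component's guarantee be available exactly when the other needs it in its rely.
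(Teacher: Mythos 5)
Your proposal is correct and follows essentially the same route as the paper's (mechanized) proof: rule induction over the eight event-system proof rules, discharging the base cases via the adapter's program-level soundness assumption, and exploiting the equivalence of linear and modular computations (Theorem~\ref{thm:equiv_comp}) so that the structural cases — in particular iteration — decompose along the inductive shape of $\compfunm$, with the standard projection-plus-length-induction argument for \proofrulename{RG-Join}. The paper does not print the proof (it lives in the Isabelle sources), but the infrastructure it sets up is exactly the machinery your outline uses.
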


\begin{theorem}[Soundness of rule \proofrulename{RG-ParEvtSys}]
if \ $\rgsatpe{\symbpes}{\rgconddefault}$, then $\RGSATpe{\symbpes}{\rgconddefault}$. 
\end{theorem}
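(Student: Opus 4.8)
The plan is to observe that {\proofrulename{RG-ParEvtSys}} is the only rule that concludes a parallel judgment, so any derivation of $\rgsatpe{\symbpes}{\rgconddefault}$ ends with it; I may therefore assume its six premises, which I number (i)--(vi) in the order listed in {\figprefix} \ref{fig:proof_rules}. Combining premise~(i), $\forall \symbcore.\ \rgsate{\symbpes(\symbcore)}{\rgcond{Pre(\symbcore)}{Rely(\symbcore)}{Guar(\symbcore)}{Post(\symbcore)}}$, with the preceding soundness theorem for event systems gives that every component is \emph{valid}, i.e. $\forall \symbcore.\ \RGSATe{\symbpes(\symbcore)}{\rgcond{Pre(\symbcore)}{Rely(\symbcore)}{Guar(\symbcore)}{Post(\symbcore)}}$. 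Fixing an initial state $\symbstate \in pre$ and a computation $\symbcomp \in \compfun(\symbenv, \symbpes, \symbstate) \cap \assumefun(pre, rely)$, the goal reduces to showing $\symbcomp \in \commitfun(\symbenv, guar, post)$.

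The central device is \emph{projection}: writing $(\symbpes_i,\symbstate_i)$ for the $i$-th configuration of $\symbcomp$, let $\pi_\symbcore(\symbcomp)$ be the list obtained by replacing each $(\symbpes_i,\symbstate_i)$ with $(\symbpes_i(\symbcore),\symbstate_i)$. I would first establish two supporting lemmas. By inspection of the {\semrulename{Par}} rule, a parallel component step labelled by a core $\symbcore_0$ updates only $\symbpes(\symbcore_0)$ and leaves every other entry unchanged; hence in $\pi_\symbcore(\symbcomp)$ each parallel step becomes a genuine event-system step of $\symbpes(\symbcore)$ when $\symbcore=\symbcore_0$ and a stuttering (environment) step otherwise, so $\pi_\symbcore(\symbcomp) \in \compfun(\symbenv, \symbpes(\symbcore))$. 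Second, initial segments of computations are again computations, and projection commutes with taking prefixes; this lets me apply component validity to bounded prefixes of a projection.

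The heart of the argument is an induction on the index $i$ proving the invariant: for every $j < i$ and every core $\symbcore$, step $j$ of $\pi_\symbcore(\symbcomp)$ lies in $Rely(\symbcore)$ if it is an environment step and in $Guar(\symbcore)$ if it is a component step. In the inductive step I split on the kind of parallel step $i$. If it is an environment transition, then every projection sees an environment step whose state change lies in $rely \subseteq Rely(\symbcore)$ by premise~(iv) and $\symbcomp \in \assumefun(pre, rely)$. If it is a component transition by some $\symbcore_0$, I take the prefix of $\pi_{\symbcore_0}(\symbcomp)$ up to index $i+1$: by the induction hypothesis all its earlier environment steps lie in $Rely(\symbcore_0)$, and its first state lies in $pre \subseteq Pre(\symbcore_0)$ by premise~(iii), so this prefix belongs to $\assumefun(Pre(\symbcore_0), Rely(\symbcore_0))$. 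Applying the validity of component $\symbcore_0$ to this prefix (a computation of $\symbpes(\symbcore_0)$) shows its last step---step $i$---lies in $Guar(\symbcore_0)$. This discharges the obligation for $\symbcore_0$ itself, and for every other $\symbcore \neq \symbcore_0$ the same step is an environment step lying in $Guar(\symbcore_0) \subseteq Rely(\symbcore)$ by the non-interference premise~(ii).

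Once the invariant holds at every index, the two conjuncts of $\commitfun(\symbenv, guar, post)$ follow. Each parallel component step, performed by some core $\symbcore_0$, lies in $Guar(\symbcore_0) \subseteq guar$ by premise~(v). And if the parallel system terminates---$\symbpes(\symbcore)=\fin$ for all $\symbcore$ in the last configuration---then, the invariant now holding at all indices, each full projection $\pi_\symbcore(\symbcomp)$ belongs to $\assumefun(Pre(\symbcore), Rely(\symbcore))$, so by component validity its final state lies in $Post(\symbcore)$; hence the final state lies in $\bigcap_\symbcore Post(\symbcore) \subseteq post$ by premise~(vi). I expect the main obstacle to be the apparent circularity of premises (ii), (iv) and (v)---each component's rely is justified by the others' guarantees, which are themselves only justified once the relies hold. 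The resolution is that whether a step meets its guarantee is \emph{prefix-determined}: applying component validity to the bounded prefix ending at step $i+1$, rather than to the entire projection, lets the step-indexed induction close without circular reasoning.
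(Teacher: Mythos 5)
Your proposal is correct and follows essentially the same route as the paper's (mechanized, not printed) proof: project the parallel computation onto each core so that the other cores' steps become environment steps, break the rely/guarantee circularity by a step-indexed induction over prefixes using the non-interference premise $Guar(\symbcore) \subseteq Rely(\symbcore')$, and then discharge the guarantee and postcondition obligations from the component validities obtained via the event-system soundness theorem. This is the classical Xu/Nieto-style argument that the {\slang} development instantiates, and no step of your outline would fail.
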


\subsection{Invariant Verification}
In many cases, we would like to show that CRSs preserve certain data invariants. Since CRSs may not be closed systems, i.e. their environment may change the system state that is represented by rely conditions of CRSs, the reachable states of CRSs are dependent on both the initial states and the environment. The preservation of an invariant on a CRS is defined as follows.

\begin{definition}[Preservation of Invariants]
A CRS $\symbpes$ with static information $\symbenv$, starting up from a set of initial states $init$ under an environment $rely$, preserves an invariant $inv$ when its reachable states satisfy the predicate: 
\[
\forall \symbcomp.\ \symbcomp \in \compfun(\symbenv, \symbpes) \cap \assumefun(init, rely) \longrightarrow (\forall i<len(\symbcomp).\ inv(s_{\symbcomp_i}))
\]
\end{definition}

In this definition, $\symbcomp$ denotes an arbitrary computation of $\symbpes$ from a set of initial states $init$ and under an environment $rely$. It requires that all states in $\symbcomp$ satisfy the invariant $inv$.  

To show that $inv$ is preserved by a system $\symbpes$, it suffices to show the invariant verification theorem as follows. This theorem indicates that (1) the system satisfies its rely-guarantee specification $\rgcond{init}{rely}{guar}{post}$, (2) $inv$ initially holds in the set of initial states, and (3) each component transition as well as each environment transition preserves $inv$. Later, by the proof system of {\slang}, invariant verification is decomposed to the verification of individual events. 

\begin{theorem}[Invariant Verification]
\label{thm:invariant}
For formal specification $\symbpes$ and $\symbenv$, a state set $init$, a rely condition $rely$, and an invariant $inv$, if 
\begin{itemize}
\item $\rgsatpe{\symbpes}{\rgcond{init}{rely}{guar}{post}}$.
\item $init \subseteq \{s \mid inv(s)\}$.
\item $\stablefun{\{s \mid inv(s)\}}{rely}$ and $\stablefun{\{s \mid inv(s)\}}{guar}$ are satisfied.
\end{itemize}
then $inv$ is preserved by $\symbpes$ w.r.t. $\symbenv$, $init$ and $rely$. 
\end{theorem}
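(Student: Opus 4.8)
The plan is to reduce the statement to the rely-guarantee \emph{validity} of $\symbpes$ and then to propagate the invariant along an arbitrary computation by induction on its positions, using the two stability hypotheses to absorb the two kinds of steps a computation can take.

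First I would bridge the gap between provability and validity. The first hypothesis is $\rgsatpe{\symbpes}{\rgcond{init}{rely}{guar}{post}}$, so by the soundness theorem for rule \proofrulename{RG-ParEvtSys} I obtain the semantic judgement $\RGSATpe{\symbpes}{\rgcond{init}{rely}{guar}{post}}$, which unfolds to $\forall \symbstate.\ \compfun(\symbenv,\symbpes,\symbstate) \cap \assumefun(init,rely) \subseteq \commitfun(\symbenv,guar,post)$. Now fix an arbitrary $\symbcomp \in \compfun(\symbenv,\symbpes) \cap \assumefun(init,rely)$; since $\symbcomp$ is a computation of $\symbpes$ it lies in $\compfun(\symbenv,\symbpes,\symbstate_{\symbcomp_0})$, so instantiating the validity at $\symbstate_{\symbcomp_0}$ yields $\symbcomp \in \commitfun(\symbenv,guar,post)$. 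Thus every action step of $\symbcomp$ respects $guar$, and because $\symbcomp \in \assumefun(init,rely)$, every environment step respects $rely$; these are the two facts that feed the induction.

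Next I would prove $\forall i < len(\symbcomp).\ inv(\symbstate_{\symbcomp_i})$ by induction on the index $i$ (equivalently, by structural induction on the list $\symbcomp$). For the base case $i=0$, membership in $\assumefun(init,rely)$ gives $\symbstate_{\symbcomp_0} \in init$, and $init \subseteq \{s \mid inv(s)\}$ yields $inv(\symbstate_{\symbcomp_0})$. For the step, assume $inv(\symbstate_{\symbcomp_i})$ with $i+1 < len(\symbcomp)$ and consider the transition $\symbcomp_i \to \symbcomp_{i+1}$. By the inductive structure of $\compfun$ it is generated either by \comprulename{CptsEnv}, in which case $\symbspec_{\symbcomp_i} = \symbspec_{\symbcomp_{i+1}}$ and the assume clause delivers $(\symbstate_{\symbcomp_i},\symbstate_{\symbcomp_{i+1}}) \in rely$, or by \comprulename{CptsAct}, in which case an action transition exists and the commit clause delivers $(\symbstate_{\symbcomp_i},\symbstate_{\symbcomp_{i+1}}) \in guar$. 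In the first case $\stablefun{\{s \mid inv(s)\}}{rely}$ with the induction hypothesis gives $inv(\symbstate_{\symbcomp_{i+1}})$; in the second case $\stablefun{\{s \mid inv(s)\}}{guar}$ does the same. This closes the induction and establishes preservation of $inv$.

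The only non-routine ingredient is the dichotomy used in the step: I must know that every consecutive pair of a computation is covered by the rely clause or the guar clause, i.e. that a non-environment step genuinely witnesses the $\exists \etranlabel$ in the definition of $\commitfun$. This is a step-characterisation (inversion) lemma on the inductive definition of $\compfun$: a step from \comprulename{CptsEnv} keeps the specification fixed, triggering the assume clause, while a step from \comprulename{CptsAct} supplies the transition $\pestran{\symbpes}{\symbstate_{\symbcomp_i}}{\actk{\symbact}{\symbcore}}{\symbpes'}{\symbstate_{\symbcomp_{i+1}}}$ directly, triggering the commit clause; at least one therefore always fires. Once this lemma is in place the remainder is bookkeeping, and I expect no difficulty from the postcondition component of $\commitfun$, as it is simply unused in this argument.
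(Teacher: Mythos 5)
Your proposal is correct and follows essentially the same route the paper intends: apply the soundness of \proofrulename{RG-ParEvtSys} to pass from $\rgsatpe{\symbpes}{\cdot}$ to validity, then propagate $inv$ along an arbitrary computation by induction, using the $\assumefun$/$\commitfun$ dichotomy (a step either keeps the specification fixed and falls under $rely$, or carries a component transition and falls under $guar$) together with the two stability hypotheses. The inversion fact you flag as the only non-routine ingredient is indeed the crux, and it is secured by the interface assumption that program transitions never leave the program unchanged, so the environment/component cases are exactly the two clauses of $\assumefun$ and $\commitfun$.
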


\section{Language Integration by Adapter Pattern}
\label{sect:langintegrate}
{\slang} parametrizes the language and its rely-guarantee proof system at the program level using a rely-guarantee interface and thus allows to easily reuse existing rely-guarantee frameworks. This reusability is achieved by using an \emph{adapter} design pattern~\cite{Gamma95}. This section first overviews the idea of language integration. We then present the rely-guarantee interface and its adapters for languages. Finally, we show how the {\slang} framework is implemented in Isabelle/HOL.

\subsection{Integration Overview}
The adapter pattern is a software design pattern that allows the interface of an existing class to be used as another interface \cite{Gamma95} by defining a separate adapter class that converts the (incompatible) interface of a class (\emph{adaptee}) into another interface (\emph{target}) that clients require. It is often used to make existing classes work with others without modifying their source code. 

To implement a flexible integration of languages used for the definition of the event bodies, {\slang} provides a rely-guarantee interface that program languages respect. The interface is an abstraction for common rely-guarantee components required by {\slang}. These components are represented as a set of \emph{parameters} and their \emph{assumptions}. The language, semantics, proof rules and soundness proof of {\slang} in {\sectprefix} \ref{sect:picore} are developed using this interface. The architecture of language integration into {\slang} is shown in {\figprefix} \ref{fig:integrate}. 

\begin{figure}
\begin{center}
\includegraphics[width=5.0in]{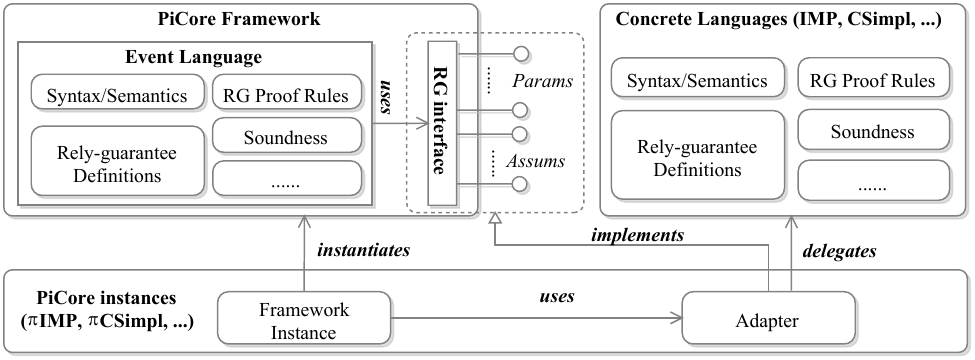}
\end{center}
\caption{Languages Integration in {\slang}}
\label{fig:integrate}
\end{figure}

When integrating a 3rd-party language into {\slang}, the language is the adaptee, and the rely-guarantee interface is the target. It is necessary to provide a \emph{rely-guarantee adapter} to bridge the differences of rely-guarantee components between {\slang} and the language. The adapter implements the rely-guarantee interface by delegating functionality of the event language to the integrated language. 
Following this adapter, a 3rd-party language and its rely-guarantee proof systems are integrated into {\slang} as a new instance. Since these languages may have been developed independently from {\slang}, they are not necessarily completely consistent with the {\slang} interface. 
This architecture makes it possible to integrate existing languages without modifying their specification, semantics, and rely-guarantee proof system.

\subsection{Rely-guarantee Interface and Adapter}
The rely-guarantee interface requires specifications and assumptions for four differentiated elements: language definition (syntax and semantics), rely-guarantee definitions (validity), rely-guarantee proof rules, and their soundness. 
The detailed information is discussed as follows.

\begin{enumerate}

\item \emph{interface types}: they are data types used by the parameters. These are the types of the program, the program state and the static configuration for programs. 

\item \emph{interface parameters and types}: 
as a parametric framework, {\slang} does not define the syntax for the languages used in event bodies. However, it requires the representation for program termination, which is denoted as $\cmdfinal$ in {\slang} (Parameter 1 in {\tableprefix} \ref{tbl:interface}).
{\slang} also needs the transition relation representing the program language semantics (Parameters 2). Moreover, the interface requires the components related to the validity of a rely-guarantee specification (Parameter 3) and the proof rules (Parameter 4). 

\item \emph{interface assumptions}: 
to reason about event behaviours, {\slang} assumes that 
\begin{enumerate}
\item The program $\cmdfinal$ cannot take a step to another state (Assumption 1 in {\tableprefix} \ref{tbl:interface}).
\item if a program $\symbprog$ takes a component transition, $\symbprog$ is changed in the next configuration (Assumption 2).
\item the definition of the validity is similar to those in {\slang}, and are relaxed to be not necessarily equivalent (Assumption 3). 
\item the validity is usually defined by three functions, i.e. computation ($\compfun(\symbenv,\symbprog,\symbstate)$), assumption ($\assumefun_p(pre,rely)$) and commitment ($\commitfun_p(\symbenv, guar,post)$). The functions have the same form as of those in {\eqprefix} \ref{eq:assume_commit} (Page \pageref{eq:assume_commit}) by using the program transition relation (Parameter 2). {\slang} requires that the rely-guarantee proof rules in languages are sound (Assumption 4). 
\end{enumerate}

\end{enumerate}

\begin{table}[t]
  \centering
  \small 
  \caption{Parameters and Their Assumptions of \slang} 
  \begin{tabular} {|c|c|c|c|}
    \hline
    \textbf{No.} & \textbf{Parameter} & \textbf{Notation of Parameter} & \textbf{Assumption of Parameter} \\
    \hline
    (1) & Terminal statement & $\cmdfinal$ & \multirow{2}{*}{\tabincell{l}{(1) $\neg(\ptran{\cmdfinal}{s}{P}{t})$ \\ (2) $\neg(\ptran{P}{s}{P}{t})$}} \\
    \cline{1-3}
    (2) & Program transition & $\ptran{P}{s}{Q}{t}$ &  \\
    \hline
    (3) & Validity & $\RGSATp{\symbprog}{\rgconddefault}$ &
    \tabincell{l}{(3) $\RGSATp{\symbprog}{\rgconddefault}$ \\ $\Longrightarrow \forall \symbstate.\ \compfun(\symbenv,\symbprog,\symbstate) \cap \assumefun_p(pre,rely)$ \\ $\quad \quad \quad \quad \quad \subseteq \commitfun_p(\symbenv, guar,post)$}
    \\
    \hline
    (4) & Proof rule & $\rgsatp{P}{\rgconddefault}$ & \tabincell{l}{(4) $\rgsatp{P}{\rgconddefault}$ \\ $\Longrightarrow \RGSATp{\symbprog}{\rgconddefault}$} \\
    \hline
  \end{tabular}
  \label{tbl:interface}
\end{table}

The rely-guarantee adapter for a concrete language implements the rely-guarantee interface by delegating the functionality of the event language to the integrated language. 
In the adapter, we first need to instantiate the interface types to be \emph{adapter types}. 
For each interface parameter, there is a corresponding definition (or function) in the adapter to instantiate the parameter. 
Moreover, the adapter provides the necessary set of \emph{adapter conclusions} (lemmas and theorems) to show that it satisfies the interface assumptions. 

\subsection{Isabelle/HOL Implementation}

In the mechanized implementation of {\slang} in Isabelle/HOL, we design a set of hierarchical locales, where the final locale is shown in {\figprefix} \ref{fig:rginterface_impl}. 
The interface types are represented as ${\isacharprime}prog$ for the type of programs, ${\isacharprime}s$ for the type of program states, and ${\isacharprime}Env$ for the type of static configuration for programs represented by $\symbenv$. 
The four interface parameters are represented as $fin{\isacharunderscore}com$, $ptran$, $prog{\isacharunderscore}validity$ and  $rghoare{\isacharunderscore}p$ respectively. The four interface assumptions are represented as $none{\isacharunderscore}no{\isacharunderscore}tran{\isacharprime}$, $ptran{\isacharunderscore}neq$, $prog{\isacharunderscore}validity{\isacharunderscore}def$ and $rgsound{\isacharunderscore}p$ respectively. 

\begin{figure}[t] 
\begin{flushleft}
\isabellestyle{sl} 
\begin{isabellec} 

\isacommand{type{\isacharunderscore}synonym}\isamarkupfalse%
\ {\isacharparenleft}{\isacharprime}s{\isacharcomma}{\isacharprime}prog{\isacharparenright}\ pconf\ {\isacharequal}\ {\isachardoublequoteopen}{\isacharprime}prog\ {\isasymtimes}\ {\isacharprime}s{\isachardoublequoteclose}


\isacommand{locale}\isamarkupfalse \ PiCore\ {\isacharequal}

\ \ \isakeyword{fixes}\ fin{\isacharunderscore}com\ {\isacharcolon}{\isacharcolon}\ {\isachardoublequoteopen}{\isacharprime}prog{\isachardoublequoteclose}

\ \ \isakeyword{fixes}\ ptran\ {\isacharcolon}{\isacharcolon}\ {\isachardoublequoteopen}{\isacharprime}Env\ {\isasymRightarrow}\ {\isacharparenleft}{\isacharparenleft}{\isacharprime}s{\isacharcomma}{\isacharprime}prog{\isacharparenright}\ pconf\ {\isasymtimes}\ {\isacharparenleft}{\isacharprime}s{\isacharcomma}{\isacharprime}prog{\isacharparenright}\ pconf{\isacharparenright}\ set{\isachardoublequoteclose}

\ \ \isakeyword{fixes}\ prog{\isacharunderscore}validity\ {\isacharcolon}{\isacharcolon}\ {\isachardoublequoteopen}{\isacharprime}Env\ {\isasymRightarrow}\ {\isacharprime}prog\ {\isasymRightarrow}\ {\isacharprime}s\ set\ {\isasymRightarrow}\ {\isacharparenleft}{\isacharprime}s\ {\isasymtimes}\ {\isacharprime}s{\isacharparenright}\ set\ {\isasymRightarrow}\ {\isacharparenleft}{\isacharprime}s\ {\isasymtimes}\ {\isacharprime}s{\isacharparenright}\ set\ {\isasymRightarrow}\ {\isacharprime}s\ set\ {\isasymRightarrow}\ bool{\isachardoublequoteclose}

\quad \quad \quad \quad \quad \quad \quad \quad \quad 
{\isacharparenleft}{\isachardoublequoteopen}{\isacharunderscore}\ {\isasymTurnstile}\ {\isacharunderscore}\ sat\isactrlsub p\ {\isacharbrackleft}{\isacharunderscore}{\isacharcomma}\ {\isacharunderscore}{\isacharcomma}\ {\isacharunderscore}{\isacharcomma}\ {\isacharunderscore}{\isacharbrackright}{\isachardoublequoteclose} {\isacharparenright}

\ \ \isakeyword{fixes}\ rghoare{\isacharunderscore}p\ {\isacharcolon}{\isacharcolon}\ {\isachardoublequoteopen}{\isacharprime}Env\ {\isasymRightarrow}\ {\isacharbrackleft}{\isacharprime}prog{\isacharcomma}\ {\isacharprime}s\ set{\isacharcomma}\ {\isacharparenleft}{\isacharprime}s\ {\isasymtimes}\ {\isacharprime}s{\isacharparenright}\ set{\isacharcomma}\ {\isacharparenleft}{\isacharprime}s\ {\isasymtimes}\ {\isacharprime}s{\isacharparenright}\ set{\isacharcomma}\ {\isacharprime}s\ set{\isacharbrackright}\ {\isasymRightarrow}\ bool{\isachardoublequoteclose}
{\isacharparenleft}{\isachardoublequoteopen}{\isacharunderscore}\ {\isasymturnstile}\ {\isacharunderscore}\ sat\isactrlsub p\ {\isacharbrackleft}{\isacharunderscore}{\isacharcomma}\ {\isacharunderscore}{\isacharcomma}\ {\isacharunderscore}{\isacharcomma}\ {\isacharunderscore}{\isacharbrackright}{\isachardoublequoteclose} {\isacharparenright}

\ \ \isakeyword{assumes}\ none{\isacharunderscore}no{\isacharunderscore}tran{\isacharprime}{\isacharcolon}\ {\isachardoublequoteopen}{\isacharparenleft}{\isacharparenleft}fin{\isacharunderscore}com{\isacharcomma}\ s{\isacharparenright}{\isacharcomma}{\isacharparenleft}P{\isacharcomma}t{\isacharparenright}{\isacharparenright}\ {\isasymnotin}\ ptran\ {\isasymGamma}{\isachardoublequoteclose}

\ \ \isakeyword{assumes}\ ptran{\isacharunderscore}neq{\isacharcolon}\ {\isachardoublequoteopen}{\isacharparenleft}{\isacharparenleft}P{\isacharcomma}\ s{\isacharparenright}{\isacharcomma}{\isacharparenleft}P{\isacharcomma}t{\isacharparenright}{\isacharparenright}\ {\isasymnotin}\ ptran\ {\isasymGamma}{\isachardoublequoteclose}

\ \ \isakeyword{assumes}\ prog{\isacharunderscore}validity{\isacharunderscore}def{\isacharcolon}\ {\isachardoublequoteopen}{\isasymGamma}\ {\isasymTurnstile}\ P\ sat\isactrlsub p\ {\isacharbrackleft}pre{\isacharcomma}\ rely{\isacharcomma}\ guar{\isacharcomma}\ post{\isacharbrackright}

\quad \quad \quad \quad \quad \quad
{\isasymLongrightarrow}\ 
{\isasymforall}s{\isadigit{0}}{\isachardot}\ cpts{\isacharunderscore}from\ {\isacharparenleft}ptran\ {\isasymGamma}{\isacharparenright}\ {\isacharparenleft}P{\isacharcomma}s{\isadigit{0}}{\isacharparenright}\ {\isasyminter}\ assume\ pre\ rely\ {\isasymsubseteq}\ commit\ {\isacharparenleft}ptran\ {\isasymGamma}{\isacharparenright}\ {\isacharbraceleft}fin{\isacharunderscore}com{\isacharbraceright}\ guar\ post

\ \ \isakeyword{assumes}\ rgsound{\isacharunderscore}p{\isacharcolon}\ {\isachardoublequoteopen}{\isasymGamma}\ {\isasymturnstile}\ P\ sat\isactrlsub p\ {\isacharbrackleft}pre{\isacharcomma}\ rely{\isacharcomma}\ guar{\isacharcomma}\ post{\isacharbrackright}\ {\isasymLongrightarrow}\ {\isasymGamma}\ {\isasymTurnstile}\ P\ sat\isactrlsub p\ {\isacharbrackleft}pre{\isacharcomma}\ rely{\isacharcomma}\ guar{\isacharcomma}\ post{\isacharbrackright}{\isachardoublequoteclose}

\isakeyword{begin}%

...... {\color{ACMGreen} // we omit here the {\slang} implementation using parameters and assumptions. }

\isakeyword{end}%
\end{isabellec} 
	\caption{Isabelle/HOL Implementation of {\slang} Rely-guarantee Interface} 
	\label{fig:rginterface_impl}
\end{flushleft}
\end{figure}


Finally, the interpretation of the {\slang} locale is shown as follows, where \emph{finI}, \emph{ptranI}, \emph{validityI} and \emph{rghoareI} are adapter parameters of a concrete language $I$ passed to the locale. The \textbf{interpretation} command in Isabelle will generate proof goals showing the satisfaction of assumptions. 
$$
\isacommand{interpretation} \ {\slang} \ finI \ ptranI \ validityI \ rghoareI
$$

\section{Integrating Concrete Languages}
\label{sect:integrate}

Using the {\slang} framework and its Isabelle/HOL implementation, the remaining work to integrate a concrete language is to design its adapter.  We have created two {\slang} instances by interpreting the {\slang} locale using adapters for {\implang} and {\csimpllang}. This section presents the two adapters, together with an overview of the integrated language for each adapter. 

\subsection{{\pccsimpl}: Integrating {\csimpllang} Language}

\subsubsection{{\csimpllang} Overview} 
The {\csimpllang} language and its rely-guarantee proof system are presented in detail in \cite{Sanan17}. 
The abstract syntax of {\csimpllang} is defined as in {\figprefix} \ref{fig:csimpl_syntax} in terms of states, of type ${\isacharprime}s$; a set of fault types, of type ${\isacharprime}f$; a set of procedure names of type ${\isacharprime}p$, and a set of simulation events ${\isacharprime}e$. 

In order to capture all aspects of abrupt termination, assertions, and function calls, the program state in {\csimpllang} is modelled as a datatype $xstate$, which is composed of four different constructors to represent a regular execution, a failed assertion, an exceptional state and a state where a call to a non-defined function is made. 
Type $({\isacharprime}s,{\isacharprime}p,{\isacharprime}f,{\isacharprime}e)\ config$ defines the configuration used in its transition semantics and $({\isacharprime}s,{\isacharprime}p,{\isacharprime}f,{\isacharprime}e)\ body$ denoted as an environment $\csimplprocenv$ defines the procedure declarations as a partial function from the set ${\isacharprime}p$ of procedure names to the body of the procedures. $({\isacharprime}s,{\isacharprime}p,{\isacharprime}f,{\isacharprime}e)\ confs$ defines the type of computations. To support reasoning about procedure invocations, {\csimpllang} uses the notation $\csimplprocrg$ to maintain the rely-guarantee specification for procedures. The validity in {\csimpllang} requires that each procedure in $\csimplprocrg$ satisfies its specification.

\begin{figure}[t] 
\begin{flushleft}
\isabellestyle{sl}
	\begin{isabellebody} 
		\small 
		\isacommand{datatype}\isamarkupfalse%
		\ {\isacharparenleft}{\isacharprime}s{\isacharcomma}\ {\isacharprime}p{\isacharcomma}\ {\isacharprime}f{\isacharcomma}\ {\isacharprime}e{\isacharparenright}\ com\ {\isacharequal} 
		\ \ Skip {\isacharbar}\ Throw {\isacharbar}\ Basic\ \ {\isachardoublequoteopen}{\isacharprime}s\ {\isasymRightarrow}\ {\isacharprime}s{\isachardoublequoteclose}\ \ {\isacharprime}e\ option \ \ {\isacharbar}\ Spec\ \ {\isachardoublequoteopen}{\isacharparenleft}{\isacharprime}s\ {\isasymtimes}\ {\isacharprime}s{\isacharparenright}\ set {\isachardoublequoteclose}\ \ {\isachardoublequoteopen}{\isacharprime}e\ option{\isachardoublequoteclose}
		
		\ \ {\isacharbar}\ Seq\ \ {\isachardoublequoteopen}{\isacharparenleft}{\isacharprime}s\ {\isacharcomma}{\isacharprime}p{\isacharcomma}\ {\isacharprime}f{\isacharcomma}\ {\isacharprime}e{\isacharparenright}\ com{\isachardoublequoteclose}\ \ {\isachardoublequoteopen}{\isacharparenleft}{\isacharprime}s{\isacharcomma}\ {\isacharprime}p{\isacharcomma}\ {\isacharprime}f{\isacharcomma}\ {\isacharprime}e{\isacharparenright}\ com{\isachardoublequoteclose}\ {\isacharbar}\ Await\ \ {\isachardoublequoteopen}{\isacharprime}s\ bexp{\isachardoublequoteclose}\ \ {\isacharprime}e\ option \ \ {\isachardoublequoteopen}{\isacharparenleft}{\isacharprime}s{\isacharcomma}\ {\isacharprime}p{\isacharcomma}\ {\isacharprime}f{\isacharcomma}\ {\isacharprime}e{\isacharparenright}\ com{\isachardoublequoteclose} 
		
		\ \ {\isacharbar}\ Cond\ \ {\isachardoublequoteopen}{\isacharprime}s\ bexp{\isachardoublequoteclose}\ \ {\isachardoublequoteopen}{\isacharparenleft}{\isacharprime}s{\isacharcomma}\ {\isacharprime}p{\isacharcomma}\ {\isacharprime}f{\isacharcomma}\ {\isacharprime}e{\isacharparenright}\ com{\isachardoublequoteclose}\ \ \ {\isachardoublequoteopen}{\isacharparenleft}{\isacharprime}s{\isacharcomma}\ {\isacharprime}p{\isacharcomma}\ {\isacharprime}f{\isacharcomma}\ {\isacharprime}e{\isacharparenright}\ com{\isachardoublequoteclose}
		
		\ \ {\isacharbar}\ While\ \ {\isachardoublequoteopen}{\isacharprime}s\ bexp{\isachardoublequoteclose}\ \ {\isachardoublequoteopen}{\isacharparenleft}{\isacharprime}s{\isacharcomma}\ {\isacharprime}p{\isacharcomma}\ {\isacharprime}f{\isacharcomma}\ {\isacharprime}e{\isacharparenright}\ com{\isachardoublequoteclose} {\isacharbar}\ Call\ \ {\isachardoublequoteopen}{\isacharprime}p{\isachardoublequoteclose}\ 
		\ \ {\isacharbar}\ DynCom\ \ {\isachardoublequoteopen}{\isacharprime}s\ {\isasymRightarrow}\ {\isacharparenleft}{\isacharprime}s{\isacharcomma}\ {\isacharprime}p{\isacharcomma}\ {\isacharprime}f{\isacharcomma}\ {\isacharprime}e{\isacharparenright}\ com{\isachardoublequoteclose}
		
		\ \ {\isacharbar}\ Guard\ \ {\isachardoublequoteopen}{\isacharprime}f{\isachardoublequoteclose}\ \ {\isachardoublequoteopen}{\isacharprime}s\ bexp{\isachardoublequoteclose}\ \ {\isachardoublequoteopen}{\isacharparenleft}{\isacharprime}s{\isacharcomma}\ {\isacharprime}p{\isacharcomma}\ {\isacharprime}f{\isacharcomma}\ {\isacharprime}e{\isacharparenright}\ com{\isachardoublequoteclose}\ 
		\ \ {\isacharbar}\ Catch\ {\isachardoublequoteopen}{\isacharparenleft}{\isacharprime}s{\isacharcomma}\ {\isacharprime}p{\isacharcomma}\ {\isacharprime}f{\isacharcomma}\ {\isacharprime}e{\isacharparenright}\ com{\isachardoublequoteclose}\ \ {\isachardoublequoteopen}{\isacharparenleft}{\isacharprime}s{\isacharcomma}\ {\isacharprime}p{\isacharcomma}\ {\isacharprime}f{\isacharcomma}\ {\isacharprime}e{\isacharparenright}\ com{\isachardoublequoteclose}

		\isacommand{datatype}\isamarkupfalse%
		\ {\isacharparenleft}{\isacharprime}s{\isacharcomma}{\isacharprime}f{\isacharparenright}\ xstate\ {\isacharequal}\ Normal\ {\isacharprime}s\ {\isacharbar}\ Abrupt\ {\isacharprime}s\ {\isacharbar}\ Fault\ {\isacharprime}f\ {\isacharbar}\ Stuck
		
		\isacommand{type{\isacharunderscore}synonym}\isamarkupfalse%
		{\isacharparenleft}{\isacharprime}s{\isacharcomma}\ {\isacharprime}p{\isacharcomma}\ {\isacharprime}f{\isacharcomma}\ {\isacharprime}e{\isacharparenright}\ config\ {\isacharequal}\ {\isachardoublequoteopen}{\isacharparenleft}{\isacharprime}s{\isacharcomma}\ {\isacharprime}p{\isacharcomma}\ {\isacharprime}f{\isacharcomma}\ {\isacharprime}e{\isacharparenright}com\ \ {\isasymtimes}\ {\isacharparenleft}{\isacharprime}s{\isacharcomma}{\isacharprime}f{\isacharparenright}\ xstate{\isachardoublequoteclose}
		
		\isacommand{type{\isacharunderscore}synonym}\isamarkupfalse%
		\ {\isacharparenleft}{\isacharprime}s{\isacharcomma}\ {\isacharprime}p{\isacharcomma}\ {\isacharprime}f{\isacharcomma}\ {\isacharprime}e{\isacharparenright}\ body\ {\isacharequal}\ {\isachardoublequoteopen}{\isacharprime}p\ {\isasymRightarrow}\ {\isacharparenleft}{\isacharprime}s{\isacharcomma}\ {\isacharprime}p{\isacharcomma}\ {\isacharprime}f{\isacharcomma}\ {\isacharprime}e{\isacharparenright}\ com\ option{\isachardoublequoteclose}	
		
		\isacommand{type{\isacharunderscore}synonym}\isamarkupfalse%
\ {\isacharparenleft}{\isacharprime}s{\isacharcomma}{\isacharprime}p{\isacharcomma}{\isacharprime}f{\isacharcomma}{\isacharprime}e{\isacharparenright}\ confs\ {\isacharequal}\ {\isachardoublequoteopen}{\isacharparenleft}{\isacharprime}s{\isacharcomma}{\isacharprime}p{\isacharcomma}{\isacharprime}f{\isacharcomma}{\isacharprime}e{\isacharparenright}\ body\ {\isasymtimes}{\isacharparenleft}{\isacharparenleft}{\isacharprime}s{\isacharcomma}{\isacharprime}p{\isacharcomma}{\isacharprime}f{\isacharcomma}{\isacharprime}e{\isacharparenright}\ config{\isacharparenright}\ list{\isachardoublequoteclose}	
	\end{isabellebody}%
	\caption{Syntax and State Definition of the {\csimpllang} Language in Isabelle/HOL \cite{Sanan17}} 
	\label{fig:csimpl_syntax}
\end{flushleft}
\end{figure}

%
%
%

\subsubsection{Rely-guarantee Adapter for {\csimpllang}}

\paragraph{Types}
Instantiating the interface types in {\csimpllang} is straightforward. The interface type of programs ${\isacharprime}prog$ is implemented as type ${\isacharparenleft}{\isacharprime}s{\isacharcomma}\ {\isacharprime}p{\isacharcomma}\ {\isacharprime}f{\isacharcomma}\ {\isacharprime}e{\isacharparenright}\ com$, and interface type of program state ${\isacharprime}s$ as type ${\isacharparenleft}{\isacharprime}s{\isacharcomma}{\isacharprime}f{\isacharparenright}\ xstate$. 
The procedure declarations $\csimplprocenv$ and their rely-guarantee specification $\csimplprocrg$ are the static configuration of {\csimpllang}. Thus, we use type $\csimplprocenv \times \csimplprocrg$ to instantiate the environment ${\isacharprime}Env$ in {\slang}. 

\paragraph{Parameters}
The interface parameter of the terminal statement is instantiated as the $Skip$ command in {\csimpllang}. 
The program transition in {\csimpllang} is defined as $\csimpltran{P}{s}{Q}{t}$. 
The interface parameter of program transition is instantiated as $(\Gamma,\Theta) \vdash_{cI} (P,s) \rightarrow (Q,t) \equiv \csimpltran{P}{s}{Q}{t}$ by delegating it to the {\csimpllang} semantics. 

Instantiating the interface parameter of the validity has to consider the \emph{computation}, \emph{assumption} and \emph{commitment} functions. The validity function in {\csimpllang} is shown in {\figprefix} \ref{fig:csimpl_validity}. The rely-guarantee specification in {\csimpllang} is in the form $\rgcond{pre}{rely}{guar}{(q,a)}$ with a set of \emph{Fault} states $F$, where the postcondition $(q,a)$ is a pair of state sets. The set $q$ constrains the final state if the program terminates as \emph{Skip} representing a normal state, whilst $a$ constrains abrupt terminations in an exception with the command \emph{Throw}. The set $F$ constrains that the final state of the program should not be in \emph{Fault} states in $F$. 
For procedure invocations, {\csimpllang} defines another validity function using the general one, which also requires that each procedure satisfies its rely-guarantee specification.

First, the computation function in {\csimpllang} is defined by using the program transition and the environment transition. As discussed above, the program transition of {\csimpllang} is adapted straightforwardly. 
{\csimpllang} semantics for programs can transit from a \emph{Normal} state to a different type. However, it does not allow transitions from a non \emph{Normal} state to any other state. 
Therefore, the environment transition in {\csimpllang} is defined as follows. 
To adapt the restricted environment transition, we first define the environment transition in the adapter as $(\Gamma,\Theta) \vdash_{cI} (P, s) \rightsquigarrow (Q,t) \equiv P = Q$, which allows any state transition and is compatible with that in the rely-guarantee interface. Then, we restrict the rely condition in the definition of proof rules in the adapter to bridge this difference, which will be discussed later. Based on the transition functions, the computation function $\compfun$ of the adapter is defined in the same form as in {\csimpllang}. 
\[
\left\{
\begin{aligned}
& \csimplenvtran{P}{Normal\ s}{P}{t} \\
& (\forall t'.\ t \neq Normal\ t') \Longrightarrow \csimplenvtran{P}{t}{P}{t}
\end{aligned}
\right.
\]

Second, the validity of {\csimpllang} only concerns the preconditions over \emph{Normal} states. Thus, the \emph{assumption} function of {\csimpllang} as shown in {\figprefix} \ref{fig:csimpl_validity} restricts the starting state of a computation to \emph{Normal}. 
For type consistency, {\slang} does not impose that restriction, but rather it is enforced by the adapter to bridge the difference. So in the adapter, we define the \emph{assumption} function as the same as in {\slang}. Then, we restrict the precondition in the definition of proof rules in the adapter, which will be discussed later.

\begin{figure}[t] 
\begin{flushleft}
\isabellestyle{sl}
\begin{isabellec}

assum {\isacharparenleft}pre, \ rely{\isacharparenright}\ {\isasymequiv}\ {\isacharbraceleft}c{\isachardot}\ snd{\isacharparenleft}{\isacharparenleft}snd\ c{\isacharparenright}{\isacharbang}{\isadigit{0}}{\isacharparenright}\ {\isasymin}\ Normal\ {\isacharbackquote}\ pre\ {\isasymand}\ {\isacharparenleft}{\isasymforall}i{\isachardot}\ Suc\ i{\isacharless}length\ {\isacharparenleft}snd\ c{\isacharparenright}\ {\isasymlongrightarrow}

\ \ \ \ \ \ \ \ \ \ \ \ \ \ \ \ \ {\isacharparenleft}fst\ c{\isacharparenright}{\isasymturnstile}\isactrlsub c{\isacharparenleft}{\isacharparenleft}snd\ c{\isacharparenright}{\isacharbang}i{\isacharparenright}\ \ $\rightsquigarrow$\ {\isacharparenleft}{\isacharparenleft}snd\ c{\isacharparenright}{\isacharbang}{\isacharparenleft}Suc\ i{\isacharparenright}{\isacharparenright}\ {\isasymlongrightarrow}\
{\isacharparenleft}snd{\isacharparenleft}{\isacharparenleft}snd\ c{\isacharparenright}{\isacharbang}i{\isacharparenright}{\isacharcomma}\ snd{\isacharparenleft}{\isacharparenleft}snd\ c{\isacharparenright}{\isacharbang}{\isacharparenleft}Suc\ i{\isacharparenright}{\isacharparenright}{\isacharparenright}\ {\isasymin}\ \ rely{\isacharparenright}{\isacharbraceright}

comm\ {\isacharparenleft}guar{\isacharcomma}\ {\isacharparenleft}q{\isacharcomma}a{\isacharparenright}{\isacharparenright}\ F \ {\isasymequiv}\ {\isacharbraceleft}c{\isachardot}\ snd\ {\isacharparenleft}last\ {\isacharparenleft}snd\ c{\isacharparenright}{\isacharparenright}\ {\isasymnotin}\ Fault\ {\isacharbackquote}\ F\ \ {\isasymlongrightarrow}\ {\isacharparenleft}{\isasymforall}i{\isachardot}\ Suc\ i{\isacharless}length\ {\isacharparenleft}snd\ c{\isacharparenright}\ {\isasymlongrightarrow}

\ \ \ \ \ \ \ \ \ \ \ \ \ \ \ \ \ {\isacharparenleft}fst\ c{\isacharparenright}{\isasymturnstile}\isactrlsub c{\isacharparenleft}{\isacharparenleft}snd\ c{\isacharparenright}{\isacharbang}i{\isacharparenright}\ \ {\isasymrightarrow}\ {\isacharparenleft}{\isacharparenleft}snd\ c{\isacharparenright}{\isacharbang}{\isacharparenleft}Suc\ i{\isacharparenright}{\isacharparenright}\ {\isasymlongrightarrow}\ {\isacharparenleft}snd{\isacharparenleft}{\isacharparenleft}snd\ c{\isacharparenright}{\isacharbang}i{\isacharparenright}{\isacharcomma}\ snd{\isacharparenleft}{\isacharparenleft}snd\ c{\isacharparenright}{\isacharbang}{\isacharparenleft}Suc\ i{\isacharparenright}{\isacharparenright}{\isacharparenright}\ {\isasymin}\ guar{\isacharparenright}\ {\isasymand}

\ \ \ \ \ \ \ \ \ \ \ \ \ \ \ \ \ {\isacharparenleft}final\ {\isacharparenleft}last\ {\isacharparenleft}snd\ c{\isacharparenright}{\isacharparenright}\ \ {\isasymlongrightarrow}\ {\isacharparenleft}{\isacharparenleft}fst\ {\isacharparenleft}last\ {\isacharparenleft}snd\ c{\isacharparenright}{\isacharparenright}\ {\isacharequal}\ Skip\ {\isasymand}\ snd\ {\isacharparenleft}last\ {\isacharparenleft}snd\ c{\isacharparenright}{\isacharparenright}\ {\isasymin}\ Normal\ {\isacharbackquote}\ q{\isacharparenright}{\isacharparenright}\ {\isasymor}

\quad \quad \quad \quad \quad \quad \quad 
\quad \quad \quad \quad \quad \quad \quad \ \ 
{\isacharparenleft}fst\ {\isacharparenleft}last\ {\isacharparenleft}snd\ c{\isacharparenright}{\isacharparenright}\ {\isacharequal}\ Throw\ {\isasymand}\ snd\ {\isacharparenleft}last\ {\isacharparenleft}snd\ c{\isacharparenright}{\isacharparenright}\ {\isasymin}\ Normal\ {\isacharbackquote}\ a{\isacharparenright}{\isacharparenright}{\isacharbraceright}

{\isasymGamma}\ {\isasymTurnstile}\isactrlbsub {\isacharslash}F\isactrlesub \ Pr\ sat\ {\isacharbrackleft}pre{\isacharcomma}\ rely{\isacharcomma}\ guar{\isacharcomma}\ q{\isacharcomma}a{\isacharbrackright}\ {\isasymequiv}\ {\isasymforall}s{\isachardot}\ cp\ {\isasymGamma}\ Pr\ s\ {\isasyminter}\ assum{\isacharparenleft}pre{\isacharcomma}\ rely{\isacharparenright}\ {\isasymsubseteq}\ comm{\isacharparenleft}guar{\isacharcomma}\ {\isacharparenleft}q{\isacharcomma}a{\isacharparenright}{\isacharparenright}\ F

{\isasymGamma}{\isacharcomma}{\isasymTheta}\ {\isasymTurnstile}\isactrlbsub {\isacharslash}F\isactrlesub \ Pr\ sat\ {\isacharbrackleft}pre{\isacharcomma}\ rely{\isacharcomma}\ guar{\isacharcomma}\ q{\isacharcomma}a{\isacharbrackright}\ {\isasymequiv}

\quad \quad \quad \quad 
{\isacharparenleft}{\isasymforall}{\isacharparenleft}c{\isacharcomma}p{\isacharcomma}R{\isacharcomma}G{\isacharcomma}q{\isacharcomma}a{\isacharparenright}{\isasymin}\ {\isasymTheta}{\isachardot}\ {\isasymGamma}\ {\isasymTurnstile}\isactrlbsub {\isacharslash}F\isactrlesub \ {\isacharparenleft}Call\ c{\isacharparenright}\ sat\ {\isacharbrackleft}p{\isacharcomma}\ R{\isacharcomma}\ G{\isacharcomma}\ q{\isacharcomma}a{\isacharbrackright}{\isacharparenright}\ {\isasymlongrightarrow}\
{\isasymGamma}\ {\isasymTurnstile}\isactrlbsub {\isacharslash}F\isactrlesub \ Pr\ sat\ {\isacharbrackleft}pre{\isacharcomma}\ rely{\isacharcomma}\ guar{\isacharcomma}\ q{\isacharcomma}a{\isacharbrackright}

\end{isabellec}
	\caption{The Validity in the {\csimpllang} Language in Isabelle/HOL \cite{Sanan17}} 
	\label{fig:csimpl_validity}
\end{flushleft}
\end{figure}

Third, referring to semantics of the final state in the validity of {\csimpllang}, the \emph{commitment} function in {\csimpllang} is defined as shown in {\figprefix} \ref{fig:csimpl_validity}. {\slang} does restrict the final statement to \emph{Skip}, thus exceptions have to be handled at the program level. In the adapter, we define the same \emph{commitment} function as in {\slang}. Then, we restrict the postcondition in the definition of proof rules in the adapter, which will be discussed later. 


Finally, the validity parameter of the {\csimpllang} adapter is define as $({\isasymGamma},{\isasymTheta})\ {\isasymTurnstile}\isactrlsub I\ P\ sat\isactrlsub p\ {\isacharbrackleft}pre{\isacharcomma}\ rely{\isacharcomma}\ guar{\isacharcomma}\ q{\isacharbrackright}$, which has the same form as in the rely-guarantee interface. 

Considering the interface parameter of the proof rule, based on the definition of proof rules {{\isasymGamma}{\isacharcomma}{\isasymTheta}\ {\isasymturnstile}\isactrlbsub {\isacharslash}F\isactrlesub \ P\ sat\ {\isacharbrackleft}pre{\isacharcomma}\ rely{\isacharcomma}\ guar{\isacharcomma}\ q{\isacharcomma}a{\isacharbrackright}} in {\csimpllang}, we adapt proof rules as follows. (1) The validity in {\csimpllang} only concerns preconditions of \emph{Normal} states, so we restrict the precondition $pre$ to \emph{Normal}. (2) Programs of an event body cannot throw exceptions to the event level, so final states when reaching the final statement \emph{Skip} are \emph{Normal}. Thus, we restrict the postcondition $q$ to \emph{Normal}. (3) Events assume the normal execution of their program body, and furthermore the program cannot fall into a \emph{Fault} state. So we assume the \emph{Fault} set $F$ to be empty. In addition, the program $P$ should satisfy its rely-guarantee specification in {\csimpllang}. The abrupt postcondition is set to \emph{False}, to ensure that the CSimpl program never ends in an exceptional state. (4) The environment transition in {\csimpllang} does not allow transitions from a non \emph{Normal} state to a different 	state, we represent it in the rely condition $rely$. (5) Finally, the rely-guarantee specification for each procedure in $\Theta$ has to be satisfied.

{ \isabellestyle{sl}
\begin{isabellebody}
\fontsize{8pt}{0cm} 
\quad {\isachardoublequoteopen}({\isasymGamma},{\isasymTheta})\ {\isasymturnstile}\isactrlsub I\ P\ sat\isactrlsub p\ {\isacharbrackleft}pre{\isacharcomma}\ rely{\isacharcomma}\ guar{\isacharcomma}\ q{\isacharbrackright}\ {\isasymequiv}\ 
$\overbrace{\text{{\isacharparenleft}pre\ {\isasymsubseteq}\ Normal\ {\isacharbackquote}\ UNIV{\isacharparenright}}}^{(1)}$\ {\isasymand}\ 
$\overbrace{\text{{\isacharparenleft}q\ {\isasymsubseteq}\ Normal\ {\isacharbackquote}\ UNIV{\isacharparenright}}}^{(2)}$
\ {\isasymand} \isanewline
$\overbrace{\text{{\isacharparenleft}{\isasymGamma}{\isacharcomma}{\isasymTheta}\ {\isasymturnstile}\isactrlbsub {\isacharslash}{\isacharbraceleft}{\isacharbraceright}\isactrlesub \ P\ sat\ {\isacharbrackleft}{\isacharbraceleft}s $\mid$ Normal\ s\ {\isasymin}\ pre{\isacharbraceright}{\isacharcomma}\ rely{\isacharcomma}\ guar{\isacharcomma}\ {\isacharbraceleft}s $\mid$ Normal\ s\ {\isasymin}\ q{\isacharbraceright}{\isacharcomma}\ \{\ \} {\isacharbrackright}{\isacharparenright}}}^{(3)}$
{\isasymand}\ \isanewline
$\overbrace{\text{{\isacharparenleft}{\isasymforall}{\isacharparenleft}s{\isacharcomma}t{\isacharparenright}{\isasymin}rely{\isachardot}\ s\ {\isasymnotin}\ Normal\ {\isacharbackquote}\ UNIV\ {\isasymlongrightarrow}\ s\ {\isacharequal}\ t{\isacharparenright}}}^{(4)}$
{\isasymand}\  
$\overbrace{\text{{\isacharparenleft}{\isasymforall}{\isacharparenleft}c{\isacharcomma}p{\isacharcomma}R{\isacharcomma}G{\isacharcomma}q{\isacharcomma}a{\isacharparenright}{\isasymin}\ {\isasymTheta}{\isachardot}\ {\isasymGamma}{\isacharcomma}{\isacharbraceleft}{\isacharbraceright}\ {\isasymturnstile}\isactrlbsub {\isacharslash}{\isacharbraceleft}{\isacharbraceright}\isactrlesub \ {\isacharparenleft}Call\ c{\isacharparenright}\ sat\ {\isacharbrackleft}p{\isacharcomma}\ R{\isacharcomma}\ G{\isacharcomma}\ q{\isacharcomma}a{\isacharbrackright}{\isacharparenright}}}^{(5)}$
{\isachardoublequoteclose}
\end{isabellebody}
}

\paragraph{Assumptions}
As we discuss the adapter parameters of {\csimpllang}, it is straightforward to show that the first three interface assumptions ({\tableprefix} \ref{tbl:interface}) are satisfied in the adapter. To show the fourth assumption, it is enough to prove that

{ \isabellestyle{sl} \centering
\begin{isabellebody}

({\isasymGamma},{\isasymTheta})\ {\isasymturnstile}\isactrlsub I\ P\ sat\isactrlsub p\ {\isacharbrackleft}pre{\isacharcomma}\ rely{\isacharcomma}\ guar{\isacharcomma}\ q{\isacharbrackright} $\Longrightarrow$ ({\isasymGamma},{\isasymTheta})\ {\isasymTurnstile}\isactrlsub I\ P\ sat\isactrlsub p\ {\isacharbrackleft}pre{\isacharcomma}\ rely{\isacharcomma}\ guar{\isacharcomma}\ q{\isacharbrackright}

\end{isabellebody}
}


\subsection{{\pcimp}: Integrating {\implang} Language}
\label{subsect:pcimp}
\subsubsection{{\implang} Overview}
{\implang} is a simple imperative language with a rely-guarantee proof system \cite{Nieto03} provided by the Hoare\_Parallel library of Isabelle/HOL. We briefly overview the language since we will use it and its {\slang} instance for the case studies later. 
The abstract and concrete syntax of {\implang} is shown as follows. Since there is no terminal statement in the abstract syntax, the configuration used in the semantics of {\implang} is defined as ${\isacharparenleft}{\isacharparenleft}{\isacharprime}s\ com{\isacharparenright}\ option{\isacharparenright}\ {\isasymtimes}\ {\isacharprime}s$. The terminal statement is denoted as the Isabelle/HOL construct \emph{None}. However, the proof rules of {\implang} denoted as $\rgsat{P}{\rgcond{pre}{rely}{guar}{post}}$ still use the program $P$ of type ${\isacharprime}s\ com$. 

\isabellestyle{sl}
	\begin{isabellebody} 
		\small 
		\isacommand{datatype}\isamarkupfalse%
		\ {\isacharprime}s\ com\ {\isacharequal} 
		\ \ Basic\ {\isachardoublequoteopen}{\isacharprime}s\ {\isasymRightarrow}\ {\isacharprime}s{\isachardoublequoteclose} 
		\quad {\isacharparenleft}{\isasymacute}{\isacharunderscore}\ {\isacharcolon}{\isacharequal}\ {\isacharunderscore}{\isacharparenright}
		
\quad \quad \quad \quad \quad \quad \quad \ \ \ 
{\isacharbar}\ Seq\ {\isachardoublequoteopen}{\isacharprime}s\ com{\isachardoublequoteclose}\ {\isachardoublequoteopen}{\isacharprime}s\ com{\isachardoublequoteclose} 
\quad {\isacharparenleft}{\isacharunderscore}{\isacharsemicolon}{\isacharsemicolon}\ {\isacharunderscore}{\isacharparenright}

\quad \quad \quad \quad \quad \quad \quad \ \ \ 
{\isacharbar}\ Await\ {\isachardoublequoteopen}{\isacharprime}s\ bexp{\isachardoublequoteclose}\ {\isachardoublequoteopen}{\isacharprime}s\ com{\isachardoublequoteclose} 
\quad {\isacharparenleft}\textbf{AWAIT}\ {\isacharunderscore}\ \textbf{THEN}\ {\isacharunderscore}\ \textbf{END}{\isacharparenright}
		
\quad \quad \quad \quad \quad \quad \quad \ \ \  
{\isacharbar}\ Cond\ {\isachardoublequoteopen}{\isacharprime}s\ bexp{\isachardoublequoteclose}\ {\isachardoublequoteopen}{\isacharprime}s\ com{\isachardoublequoteclose}\ \ {\isachardoublequoteopen}{\isacharprime}s\ com{\isachardoublequoteclose}
\quad {\isacharparenleft}\textbf{IF}\ {\isacharunderscore}\ \textbf{THEN}\ {\isacharunderscore}\ \textbf{ELSE}\ {\isacharunderscore}\textbf{FI}{\isacharparenright}

\quad \quad \quad \quad \quad \quad \quad \ \ \  
{\isacharbar}\ While\ {\isachardoublequoteopen}{\isacharprime}s\ bexp{\isachardoublequoteclose}\ {\isachardoublequoteopen}{\isacharprime}s\ com{\isachardoublequoteclose}
\quad {\isacharparenleft}\textbf{WHILE}\ {\isacharunderscore}\ \textbf{DO}\ {\isacharunderscore}\ \textbf{OD}{\isacharparenright}

\isacommand{type{\isacharunderscore}synonym}\isamarkupfalse%
		{\isacharprime}s\ conf\ {\isacharequal}\ {\isachardoublequoteopen}{\isacharparenleft}{\isacharparenleft}{\isacharprime}s\ com{\isacharparenright}\ option{\isacharparenright}\ {\isasymtimes}\ {\isacharprime}s{\isachardoublequoteclose}
		
	\end{isabellebody}%

\subsubsection{Rely-guarantee Adapter for {\implang}}
In the {\implang} adapter, we use ${\isacharparenleft}{\isacharprime}s\ com{\isacharparenright}\ option$ to instantiate the interface type of programs ${\isacharprime}prog$. Since there is no static information $\symbenv$ in {\implang}, we omit this information in the program transition and validity of the adapter. 
We define the proof rules as $\rgsatpI{(Some \ P)}{\rgcond{pre}{rely}{guar}{post}}$ using the original {\implang} proof rules and thus reuse all of soundness proofs to prove the interface assumption (4) in {\tableprefix} \ref{tbl:interface} (Page \pageref{tbl:interface}).  
Other definitions of rely-guarantee components in {\implang} respect to the {\slang} interface are straightforward, and we omit the details here. The interested reader can refer to the Isabelle/HOL sources. 

\section{Formal Verification of Concurrent Memory Management}
\label{sect:mem}

Aiming at the highest evaluation assurance level (EAL 7) evaluation of Common Criteria (CC) \cite{cc}, we develop a low-level design specification of the Zephyr concurrent memory management by using {\pcimp}, the instance of {\slang} for the {\implang} language. The specification closely follows the Zephyr C code, and thus is able to do the \emph{code-to-spec} review required by the EAL 7 evaluation, covering all the data structures and imperative statements present in the implementation. 

The verification conducted in this case study is on Zephyr v1.8.0, released in 2017. The C code of the buddy memory management is $\approx$ 400 lines, not counting blank lines and comments. During the formal verification, we found 3 bugs in the C code of Zephyr: \emph{an incorrect block split}, \emph{an incorrect return}, and \emph{non-termination of a loop} in the \emph{k\_mem\_pool\_alloc} service. 
The first two bugs are critical and have been repaired in the latest release of Zephyr. 

{\slang} has support for concurrent OS kernels features like modelling shared-variable concurrency of multiple threads, interruptable execution of handlers, self-suspending threads, and rescheduling. 
An OS kernel can be considered as a CRS of peer reaction structure. It makes an initialization and enters in an \emph{idle} loop until it receives an interruption which is handled by an interrupt handler. After finishing the execution of the handler, the OS kernel returns to the idle loop waiting for new interruptions. 
OS kernels may provide a large number of services to applications. The peer reaction construct in {\slang} can easily model the behaviour of OS kernels.

\subsection{Concurrent Buddy Memory Management of Zephyr}

In Zephyr, a memory pool is a kernel object that allows memory blocks to be dynamically allocated, from a designated memory region, and released back into the pool. Its definition in the C code is shown as follows. 
A memory pool's buffer ($*buf$) is an $n\_max$-size array of blocks of $max\_sz$ bytes at level $0$, with no wasted space between them. The size of the buffer is thus $n\_max \times max\_sz$ bytes long. Zephyr tries to accomplish a memory request by splitting available blocks into smaller ones fitting as best as possible the requested size. 
Each ``level 0'' block is a quad-block that can be split into four smaller ``level 1'' blocks of equal size. Likewise, each level 1 block is itself a quad-block that can be split again. At each level, the four smaller blocks become \emph{buddies} or \emph{partners} to each other. The block size at level $l$ is thus $max\_sz / 4 ^ l$.

\lstdefinestyle{customc}{
  belowcaptionskip=1\baselineskip,
  breaklines=true,
  frame=none, 
  xleftmargin=0pt, 
  language=C,
  numbers=none,
  stepnumber=1,
  showstringspaces=false,
  basicstyle=\scriptsize\ttfamily, 
  keywordstyle=\bfseries\color{green!40!black},
  commentstyle=\itshape\color{purple!40!black},
  identifierstyle=\color{blue},
  stringstyle=\color{orange},
}
\lstset{escapechar=@,style=customc}
\vspace{-4mm}
\begin{minipage}[t]{1.0\textwidth}
\begin{minipage}[t]{0.48\textwidth}
\begin{lstlisting}
struct k_mem_block_id {
  u32_t pool : 8;
  u32_t level : 4;
  u32_t block : 20;
};
struct k_mem_pool_lvl {
  union {
    u32_t *bits_p;
    u32_t bits;
  };
  sys_dlist_t free_list;
};
\end{lstlisting}
\end{minipage}
~
\begin{minipage}[t]{0.48\textwidth}
\begin{lstlisting}
struct k_mem_block {
  void *data;
  struct k_mem_block_id id;
};
struct k_mem_pool {
  void *buf;
  size_t max_sz;
  u16_t n_max;
  u8_t n_levels;
  u8_t max_inline_level;
  struct k_mem_pool_lvl *levels;
  _wait_q_t wait_q;
};
\end{lstlisting}
\end{minipage}
\end{minipage}

The pool is initially configured with the parameters  $n\_max$ and $max\_sz$, together with a third parameter $min\_sz$. $min\_sz$ defines the minimum size for an allocated block and must be at least $4 \times X$ ($X > 0$) bytes long. Memory pool blocks are recursively split into quarters until blocks of the minimum size are obtained, at which point no further split can occur. 
The depth at which $min\_sz$ blocks are allocated is $n\_levels$ and satisfies that $n\_max = min\_sz \times 4 ^ {n\_levels}$.

Every memory block is composed of a $level$; a $block$ index within the level, ranging from $0$ to $(n\_max \times 4 ^ {level}) - 1$; and the $data$ representing the block start address, which is equal to $buf + (max\_sz / 4 ^ {level}) \times block$. We use a tuple $(level, block)$ to uniquely represent a block within a pool $p$.

A memory pool keeps track of how its buffer space has been split using a linked list \emph{free\_list} with the start address of the free blocks in each level. To improve the performance of coalescing partner blocks, memory pools maintain a bitmap at each level to indicate the allocation status of each block in the level.  This structure is represented by a C union of an integer \emph{bits} and an array \emph{bits\_p}. The implementation can allocate the bitmaps at levels smaller than $max\_inlinle\_levels$ using only an integer \emph{bits}. However, the number of blocks in levels higher than $max\_inlinle\_levels$ make necessary to allocate the bitmap information using the array \emph{bits\_map}. 
In such a design, the levels of bitmaps actually form a forest of complete quadtrees. 
The bit $i$ in the bitmap of level $j$ is set to $1$ for the block $(i,j)$ iff it is a free block, i.e. it is in the free list at level $i$. Otherwise, the bitmap for such block is set to $0$.

\lstdefinestyle{customc}{
  belowcaptionskip=1\baselineskip,
  breaklines=true,
  frame=single, 
  xleftmargin=0pt, 
  language=C,
  numbers=left,
  stepnumber=1,
  showstringspaces=false,
  basicstyle=\scriptsize\ttfamily, 
  keywordstyle=\bfseries\color{green!40!black},
  commentstyle=\itshape\color{purple!40!black},
  identifierstyle=\color{blue},
  stringstyle=\color{orange},
}
\lstset{escapechar=@,style=customc}

\begin{figure}[t]
\begin{lstlisting}
static int pool_alloc(struct k_mem_pool *p,struct k_mem_block *block,size_t size)
{
  ..... //calcuate lsizes[], alloc_l and free_l
  if (alloc_l < 0 || free_l < 0) {
    block->data = NULL;
    return -ENOMEM;
  }
  blk = alloc_block(p, free_l, lsizes[free_l]);
  if (!blk) { return -EAGAIN; }
  /* Iteratively break the smallest enclosing block... */
  for (from_l = free_l; level_empty(p, alloc_l) && from_l < alloc_l; 
  	   from_l++) {
    blk = break_block(p, blk, from_l, lsizes);
  }
  block->id.level = alloc_l; //assign block level to the variable *block
  ......  //assign other block info to the variable *block
  return 0;
}

int k_mem_pool_alloc(struct k_mem_pool *p, struct k_mem_block *block, size_t size, s32_t timeout)
{
  ...... // initialize local vars, calculate the end time for timeout. 
  while (1) {
    ret = pool_alloc(p, block, size);
    if (ret == 0 || timeout == K_NO_WAIT || 
        ret == -EAGAIN || (ret && ret != -ENOMEM)) {
      return ret;
    }
    key = irq_lock();
    _pend_current_thread(&p->wait_q, timeout);
    _Swap(key);
    ...... //if timeout > 0, break the loop if time out
  }
  return -EAGAIN;
}
\end{lstlisting}
\caption{The C Source Code of Memory Allocation in Zephyr v1.8.0}
\label{fig:mem_alloc_code}
\zipfigaft
\end{figure}

Zephyr provides two kernel services \emph{k\_mem\_pool\_alloc} and \emph{k\_mem\_pool\_free}, for memory allocation and release respectively. 
The main part of the C code of \emph{k\_mem\_pool\_alloc} is shown in {\figprefix} \ref{fig:mem_alloc_code}. 
When an application requests for a memory block, Zephyr first computes $alloc\_l$ and $free\_l$. $alloc\_l$ is the level with the size of the smallest block that will satisfy the request, and $free\_l$, with $free\_l \leqslant alloc\_l$, is the lowest level where there are free memory blocks. Since the services are concurrent, when the service tries to allocate a free block \emph{blk} from level $free\_l$ (Line 8), blocks at that level may be allocated or merged into a bigger block by other concurrent threads. In such case, the service will back out (Line 9) and tell the main function \emph{k\_mem\_pool\_alloc} to retry. If $blk$ is successfully locked for allocation, then it is broken down to level $alloc\_l$ (Lines 11 - 14). 
The allocation service \emph{k\_mem\_pool\_alloc} supports a \emph{timeout} parameter to allow threads waiting for that pool for a period of time when the call does not succeed. If the allocation fails (Line 24) and the timeout is not \emph{K\_NO\_WAIT}, the thread is suspended (Line 30) in a linked list \emph{wait\_q}, and the context is switched to another thread (Line 31).

Interruptions are always enabled in both services with the exception of the code for the functions \emph{alloc\_block} and \emph{break\_block}, which invoke \emph{irq\_lock} and \emph{irq\_unlock}  to respectively enable and disable interruptions. 
Similar to \emph{k\_mem\_pool\_alloc}, the execution of \emph{k\_mem\_pool\_free} is interruptable too. 

\subsection{Defining Structures and Properties of Buddy Memory Pools}

As a specification at the design level, we use abstract data types to represent the complete structure of memory pools. 
We use an abstract reference \emph{ref} in Isabelle to define pointers to memory pools. Starting addresses of memory blocks, memory pools, and unsigned integers in the implementation are defined as \emph{natural} numbers (\emph{nat}). Linked lists used in the implementation for the elements \emph{levels} and \emph{free\_list}, together with the bitmaps used in \emph{bits} and \emph{bits\_p}, are defined as a \emph{list} type. 
C \emph{structs} are modelled in Isabelle as \emph{records} of the same name as the implementation and comprising the same data. There are two exceptions to this: (1) $k\_mem\_block\_id$ and $k\_mem\_block$ are merged in one single record, (2) the union in the struct $k\_mem\_pool\_lvl$ is replaced by a single list representing the bitmap, and thus \emph{max\_inline\_level} is removed. 

\begin{figure}[t]
\begin{center}
\includegraphics[width=4.8in]{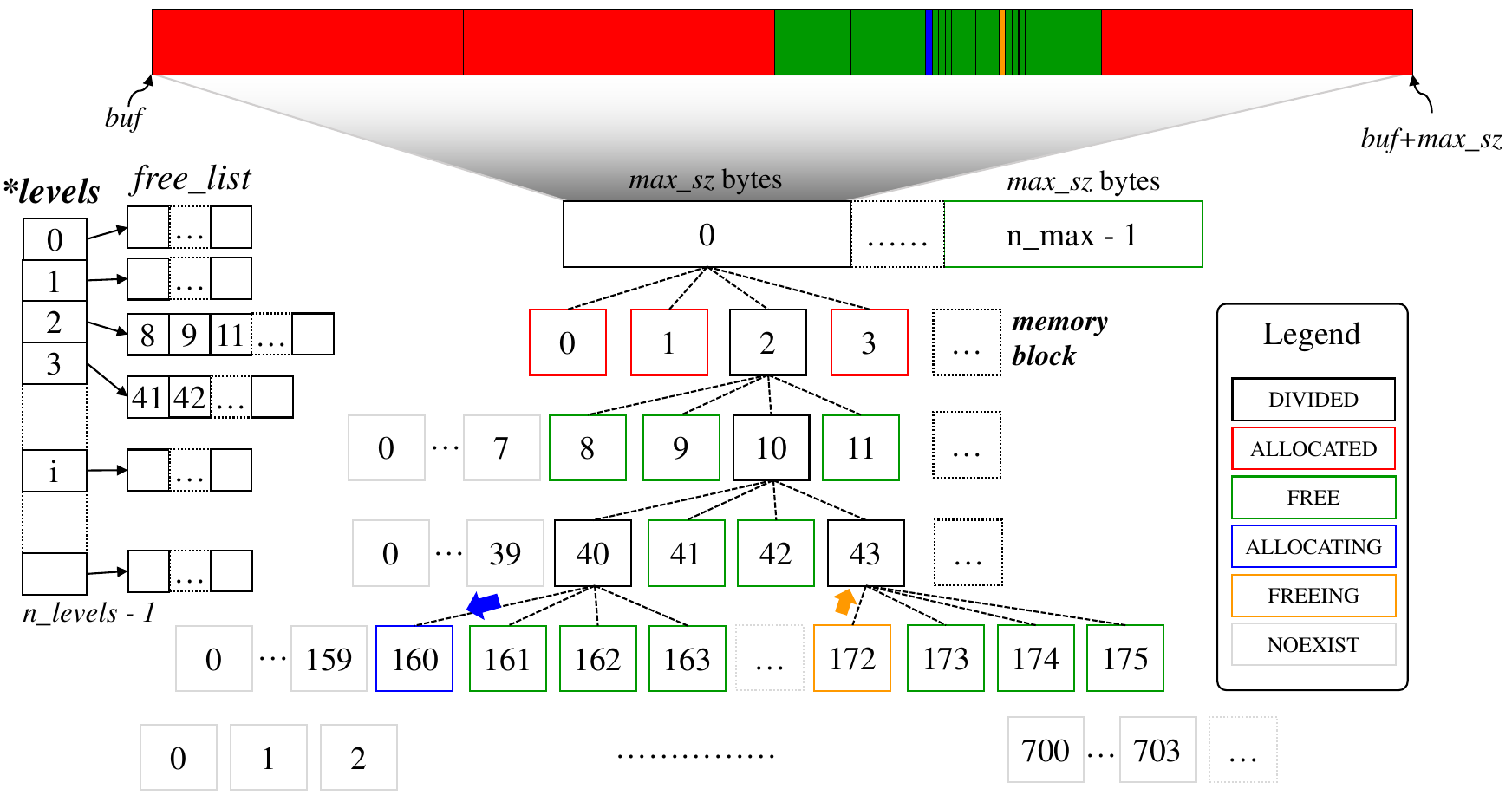}
\end{center}
\caption{Structure of Memory Pools}
\label{fig:mempool}
\zipfigaft
\end{figure}

The Zephyr implementation makes use of a bitmap to represent the state of a memory block. The bit $j$ of the bitmap for level a $i$ is set to $1$ iff the memory address of the memory block $(i,j)$ is in the free list at level $i$. A bit $j$ at a level $i$ is set to $0$ under the following conditions: (1) its corresponding memory block is allocated (\emph{ALLOCATED}), (2) the memory block has been split (\emph{DIVIDED}), (3) the memory block is being split in the allocation service (\emph{ALLOCATING}) (Line 13 in {\figprefix} \ref{fig:mem_alloc_code}), (4) the memory block is being coalesced in the release service (\emph{FREEING}), and (5) the memory block does not exist (\emph{NOEXIST}). Instead of only using a binary representation, our formal specification models the bitmap using a datatype \emph{BlockState} that is composed of these cases together with \emph{FREE}. The reason for this decision is to simplify proving that the bitmap shape is well-formed. In particular, this representation makes it less complex to verify the case in which the descendant of a free block is a non-free block. This is the case where the last free block has not been split, and therefore lower levels do not exist. 
We illustrate the structure of a memory pool in {\figprefix} \ref{fig:mempool}. The top of the figure shows the real memory of the first block at level $0$.

The structural properties clarify the constraints on and consistency of quadtrees, free block lists, the memory pool configuration, and waiting threads. All of them are thought of as invariants on the kernel state and have been formally verified on the formal specification in Isabelle/HOL. 

\paragraph{\textbf{Well-shaped bitmaps.}}
We say that the logical memory block $j$ at a level $i$ physically exists iff the bitmap $j$ for the level $i$ is \emph{ALLOCATED}, \emph{FREE}, \emph{ALLOCATING}, or \emph{FREEING}, represented by the predicate $is\_memblock$. We do not consider blocks marked as \emph{DIVIDED} as physical blocks since it is only a logical block containing other blocks. 
Threads may split and coalesce memory blocks. A valid forest is defined by the following rules: (1) the parent bit of an existing memory block is \emph{DIVIDED} and its child bits are \emph{NOEXIST}, denoted by the predicate $noexist\_bits$ that checks for a given bitmap $b$ and a position $j$ that nodes $b!j$ to $b!(j+3)$ are set as \emph{NOEXIST}; (2) the parent bit of a \emph{DIVIDED} block is also \emph{DIVIDED}; and (3) the child bits of a \emph{NOEXIST} bit are also \emph{NOEXIST} and its parent can not be a \emph{DIVIDED} block. The property is defined as the predicate \isacommand{inv{\isacharunderscore}bitmap}($s$), where $s$ is the state.

There are two additional properties on bitmaps. First, the address space of any memory pool can not be empty, i.e., the bits at level 0 have to be different from \emph{NOEXIST}. Second, the allocation algorithm may split a memory block into smaller ones, but not those blocks at the lowest level (i.e. level $n\_levels - 1$); therefore the bits at the lowest level cannot not be \emph{DIVIDED}. The first property is defined as \isacommand{inv{\isacharunderscore}bitmap{\isadigit{0}}}($s$) and the second as \isacommand{inv{\isacharunderscore}bitmapn}($s$).

\paragraph{\textbf{Consistency of the memory configuration.}}
The configuration of a memory pool is set when it is initialized. Since the minimum block size is aligned to 4 bytes, there must exist an $n > 0$ such that the maximum size of a pool is equal to $4 \times n \times 4 ^ {n\_levels}$, relating the number of levels of a level 0 block with its maximum size. Moreover, the number of blocks at level 0 and the number of levels have to be greater than zero, since the memory pool cannot be empty. The number of levels is equal to the length of the  pool $levels$ list. Finally, the length of the bitmap at level $i$ should be $n\_max \times 4 ^ i$. This property is defined as \isacommand{inv{\isacharunderscore}mempool{\isacharunderscore}info}($s$).

\paragraph{\textbf{Memory partition property.}}
Memory blocks partition the pool they belong to, in this context, there are two critical properties that the allocation and release of blocks must preserve: the absence of any overlapping blocks and the absence of memory leaks. For a memory block of index $j$ at level $i$, its address space is the interval $[j \times (max\_sz / 4 ^ i), (j + 1) \times (max\_sz / 4 ^ i))$. For any relative memory address $addr$ in the memory domain of a memory pool, and hence $addr < n\_max * max\_sz$, there is one and only one memory block whose address space contains $addr$. Here, we use relative address for $addr$.  The property is defined as \isacommand{mem{\isacharunderscore}part}(s).

From the invariants of the bitmap, we derive the general property for the memory partition. 

\begin{theorem}[Memory Partition]
For any kernel state $s$, If the memory pools in $s$ are consistent in their configuration, and their bitmaps are well-shaped, the memory pools satisfy the partition property in $s$: 
\[
\textbf{inv\_mempool\_info}(s) \wedge \textbf{inv\_bitmap}(s) \wedge \textbf{inv\_bitmap0}(s) \wedge \textbf{inv\_bitmapn}(s) \Longrightarrow \textbf{mem\_part}(s)
\]
\end{theorem}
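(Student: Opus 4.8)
The plan is to reduce the statement, which is phrased about arbitrary relative addresses, to a structural fact about the forest of complete quadtrees that the bitmap invariants guarantee, and then to prove \emph{existence} and \emph{uniqueness} of the covering block separately. Throughout, write $M$, $N$ and $L$ for the pool parameters \emph{max\_sz}, \emph{n\_max} and \emph{n\_levels}, so that block $j$ at level $i$ has address interval $I(i,j) = [\, j\cdot(M/4^i),\ (j+1)\cdot(M/4^i)\,)$, and the four children of $(i,j)$ are $(i+1,4j),\dots,(i+1,4j+3)$. First I would set up the arithmetic backbone: from \textbf{inv\_mempool\_info}($s$) we have $M = 4\cdot n\cdot 4^{L}$ for some $n>0$, so $4^i$ divides $M$ exactly for every $i \le L$ and the natural-number division $M/4^i$ behaves like real division. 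Using this I would prove the \emph{interval-decomposition lemma}: for every level $i < L$ and index $j$, the interval $I(i,j)$ is the disjoint union $\bigcup_{k=0}^{3} I(i+1,4j+k)$. The consequence I actually use is that, for a fixed address $\mathit{addr} < N\cdot M$, the index $j_i = \mathit{addr} \div (M/4^i)$ is the \emph{only} index at level $i$ whose interval contains $\mathit{addr}$, and that these indices form a root-to-leaf chain, i.e. $j_i = j_{i+1}\div 4$, so $(i,j_i)$ is the parent of $(i+1,j_{i+1})$ in the quadtree.

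For existence, starting from level $0$ the block $(0,j_0)$ is not \emph{NOEXIST} by \textbf{inv\_bitmap0}($s$). I would descend the chain by induction on the level, bounded by $L$: if $(i,j_i)$ satisfies \emph{is\_memblock} we are done; if it is \emph{DIVIDED}, then by the forest rules in \textbf{inv\_bitmap}($s$) (the child of a \emph{NOEXIST} node is \emph{NOEXIST}, equivalently the child of a \emph{DIVIDED} node is not \emph{NOEXIST}) the next block $(i+1,j_{i+1})$ is again not \emph{NOEXIST}, so the induction invariant is maintained. The descent cannot pass the lowest level $L-1$ as \emph{DIVIDED}, because \textbf{inv\_bitmapn}($s$) forbids \emph{DIVIDED} bits there; hence the chain terminates at some physical block $(i^\ast,j_{i^\ast})$ satisfying \emph{is\_memblock}, whose interval contains $\mathit{addr}$.

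For uniqueness, any block whose interval contains $\mathit{addr}$ must be one of the chain members $(i,j_i)$, by the ``only index'' property above; hence two distinct covering blocks are necessarily in a strict ancestor/descendant relation. It therefore suffices to show that a physical block has no physical strict ancestor: by \textbf{inv\_bitmap}($s$) the parent of any existing (non-\emph{NOEXIST}) block is \emph{DIVIDED}, and the parent of a \emph{DIVIDED} block is \emph{DIVIDED}; so every strict ancestor of an \emph{is\_memblock} node is \emph{DIVIDED} and thus not itself a physical block. Combining this with existence yields exactly one physical covering block, which is \textbf{mem\_part}($s$).

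The main obstacle I anticipate is not the forest reasoning but the natural-number arithmetic that underlies it: proving the interval-decomposition lemma and the index alignment $j_i = j_{i+1}\div 4$ requires that all the divisions $M/4^i$ are exact and commute with the splitting, which is precisely what \textbf{inv\_mempool\_info}($s$) is there to supply. Getting the theorem prover to discharge these divisibility and floor/div identities cleanly, and threading the bound $i \le L-1$ through the descending induction so that \textbf{inv\_bitmapn} applies at the right place, is where most of the effort will go; the existence and uniqueness arguments themselves are short once the chain of indices and its parent relation are available as lemmas.
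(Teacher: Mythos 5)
The paper states this theorem without a written proof (it is discharged only in the Isabelle sources), so there is no textual argument to compare against; judged on its own, your proof is correct and is exactly the argument the four invariants are designed to support. The decomposition into (i) an arithmetic layer showing that $\textbf{inv\_mempool\_info}$ makes every division $M/4^i$ exact, so that the level-$i$ intervals tile the pool and the covering indices $j_i=\mathit{addr}\div(M/4^i)$ form a unique root-to-leaf chain with $j_i=j_{i+1}\div 4$, (ii) existence by descending that chain from $\textbf{inv\_bitmap0}$ through \emph{DIVIDED} nodes until $\textbf{inv\_bitmapn}$ forces termination at a physical block, and (iii) uniqueness from the fact that every strict ancestor of a physical block is \emph{DIVIDED}, is sound and complete. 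One small citation slip: you justify ``the child of a \emph{DIVIDED} node is not \emph{NOEXIST}'' as equivalent to ``the child of a \emph{NOEXIST} node is \emph{NOEXIST}'', but those are independent statements; the fact you need is the contrapositive of the other half of the paper's rule (3), namely that the parent of a \emph{NOEXIST} bit cannot be \emph{DIVIDED}. Since both clauses are part of $\textbf{inv\_bitmap}$ this does not affect correctness, but the dependency should be attributed to the right clause. Your closing assessment is also accurate: in the mechanization the real effort sits in the exact-division and floor-division identities and in threading the level bound $i\le n\_levels-1$ through the descent so that $\textbf{inv\_bitmapn}$ fires at the leaf level, not in the forest reasoning itself.
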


Together with the memory partition property, pools must also satisfy the following:

\paragraph{\textbf{No partner fragmentation.}}
The memory release algorithm in Zephyr coalesces free partner memory blocks into blocks as large as possible for all the descendants from the root level, without including it. Thus, a memory pool does not contain four \emph{FREE} partner bits.

\paragraph{\textbf{Validity of free block lists.}}
The free list at one level keeps the starting address of free memory blocks. The memory management ensures that the addresses in the list are valid, i.e., they are different from each other and aligned to the \emph{block size}, which at a level $i$ is given by ($max\_sz / 4 ^ i$). 
Moreover, a memory block is in the free list iff the corresponding bit of the bitmap is \emph{FREE}. 

\paragraph{\textbf{Non-overlapping of memory pools.}}
The memory spaces of the set of pools defined in a system must be disjoint, so the memory addresses of a pool do not belong to the memory space of any other pool. 

\paragraph{\textbf{Other properties.}} 
The state of a suspended thread in \emph{wait\_q} has to be consistent with the threads waiting for a memory pool. Threads can only be blocked once, and those threads waiting for available memory blocks have to be in a \emph{BLOCKED} state.
During allocation and free of a memory block, blocks of the tree may temporally be manipulated during the coalescing and division process. Only one thread can manipulate a give bloack at a time, and the state bit of a block being temporally manipulated has to be \emph{FREEING} or \emph{ALLOCATING}. 

\subsection{Formalizing Zephyr Memory Management}

\subsubsection{Event-based Execution Model of Zephyr}
\begin{figure}[t]
\begin{center}
\includegraphics[width=5in]{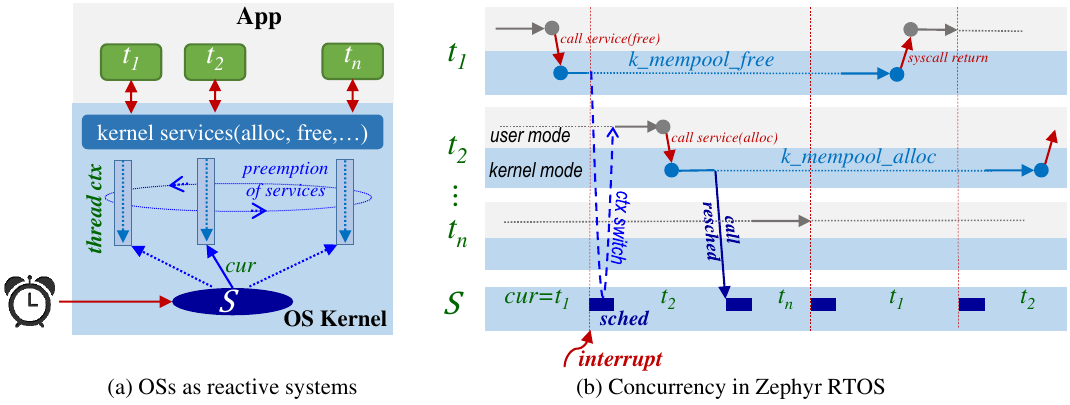}
\end{center}
\caption{An Execution Model of Zephyr Memory Management}
\label{fig:kernel_model}
\zipfigaft
\end{figure}

As shown in {\figprefix} \ref{fig:kernel_model} (a), if we do not consider its initialization, an OS kernel can be seen as a reactive system that is in an \emph{idle} loop until it receives an interruption (e.g. syscalls, timer interrupts) which is handled by an interruption handler. 
Whilst interrupt handlers execution is atomic in sequential kernels, it can be interrupted in concurrent kernels~\cite{Chen16,Xu16} allowing services invoked by threads to be interrupted and resumed later. 
In the execution model of Zephyr, we consider a scheduler $\mathcal{S}$ and a set of threads $t_1, ..., t_n$. In this model, the execution of the scheduler is atomic since kernel services can not interrupt it. But kernel services can be interrupted via the scheduler, i.e., the execution of a memory service invoked by a thread $t_i$ may be interrupted by the kernel scheduler to execute a thread $t_j$. {\figprefix}\ref{fig:kernel_model} (b) illustrates the Zephyr execution model, where solid lines represent execution steps of the threads/kernel services and dotted lines mean the suspension of the thread/code. For instance, the execution of \emph{k\_mempool\_free} in thread $t_1$ is interrupted by the scheduler, and the context is switched to thread $t_2$ which invokes \emph{k\_mempool\_alloc}. During the execution of $t_2$, the kernel service may suspend the thread and switch to another thread $t_n$ by calling \emph{rescheduling}. Later, the execution is switched back to $t_1$ and continues the execution of \emph{k\_mempool\_free} in a different state from when it was interrupted.

Each user thread $t_i$ invokes the allocation and release services; thus the event system for $t_i$ is 
\begin{equation*}
\begin{aligned}
esys_{t_i} \equiv & \evtrec{\mathbf{True}}{((\stmtevthead{mem\_pool\_free}{blk}{t_i})} \\
 & \quad \quad \quad\ \ \symbchoice (\stmtevthead{mem\_pool\_alloc}{p,sz,tmout}{t_i}))
\end{aligned}
\end{equation*}

which is the iteration of \emph{alloc} or \emph{free} events, where the input parameters for these events correspond with the arguments of the service implementation in the C code. 
Events are parametrized by a thread identifier $t_i$ used to control access to the execution context of the thread invoking it. 
Together with threads, we model the event service for the scheduler $esys_{sched}$ consisting of a unique event \emph{sched} whose argument is a thread $t$ to be scheduled when it is in the \emph{READY} state. 
The formal specification of the memory management is thus defined as follows, which is much simpler than the specification obtained from a non-event oriented language. 
\[
Kernel\_Spec \equiv \lambda \symbcore. \ \textbf{case} \ \symbcore \ \textbf{of} \ (\isasymT \ t_i)\ \isasymRightarrow\ esys_{t_i} \ \isacharbar \ \textbf{sched} \ \isasymRightarrow\ esys_{sched}
\]

The execution of a service executed by a thread can be stopped by the scheduler to be resumed later. This behaviour is modelled by using a global variable $cur$ that indicates the thread being currently has been scheduled and is being executed, and conditioning the execution of parametrized events in $t$ only when $t$ is scheduled. This is achieved by using the expression 
\[ \stmtirq{t}{p} \equiv \stmtawait{cur = t}{p}\]
so an event invoked by a thread $t$ only progresses when $t$ is scheduled. This scheme allows using rely-guarantee for concurrent execution of threads on mono-core architectures, where only the scheduled thread is able to modify the memory.

\subsubsection{Specification of Memory Management Services}
The C code of Zephyr uses the recursive function \emph{free\_block} to coalesce free partner blocks and the \emph{break} statement to stop the execution of a loop statements, which are not supported by the {\implang} language in {\pcimp}. The formal specification overcomes this by transforming the recursion into a loop controlled by the recursion condition, and using a control variable to exit loops with breaks when the condition to execute the loop break is satisfied. 
The memory management services use the atomic body \emph{irq\_lock(); P; irq\_unlock();} to keep interruption handlers \emph{reentrant} by disabling interruptions. 
We simplify this behaviour in the specification using an \textbf{ATOM} statement, avoiding that the service is interrupted at that point. The rest of the formal specification closely follows the implementation, where variables are modified using higher order functions changing the state as the code does it. The reason for using Isabelle/HOL functions is that {\slang} does not provide a semantic for expressions, using instead state transformer relying on high order functions to change the state.

\begin{figure}[t]
\begin{flushleft}
\begin{isabellec} \fontsize{7pt}{0cm} 
\ \ 1 \ \isacommand{WHILE}\ {\isasymacute}free{\isacharunderscore}block{\isacharunderscore}r\ t\ \isacommand{DO}\isanewline
\ \ 2 \ \quad t\ {\isactrlenum} \ {\isasymacute}lsz\ {\isacharcolon}{\isacharequal}\ {\isasymacute}lsz\ {\isacharparenleft}t\ {\isacharcolon}{\isacharequal}\ {\isasymacute}lsizes\ t\ {\isacharbang}\ {\isacharparenleft}{\isasymacute}lvl\ t{\isacharparenright}{\isacharparenright}{\isacharsemicolon}{\isacharsemicolon}\isanewline
\ \ 3 \ \quad t\ {\isactrlenum} \ {\isasymacute}blk\ {\isacharcolon}{\isacharequal}\ {\isasymacute}blk\ {\isacharparenleft}t\ {\isacharcolon}{\isacharequal}\ block{\isacharunderscore}ptr\ {\isacharparenleft}{\isasymacute}mem{\isacharunderscore}pool{\isacharunderscore}info\ {\isacharparenleft}pool\ b{\isacharparenright}{\isacharparenright}\ {\isacharparenleft}{\isasymacute}lsz\ t{\isacharparenright}\ {\isacharparenleft}{\isasymacute}bn\ t{\isacharparenright}{\isacharparenright}{\isacharsemicolon}{\isacharsemicolon}\isanewline
\ \ 4 \ \quad t\ {\isactrlenum} \ \isacommand{ATOM}\isanewline
\ \ 5 \ \quad \quad {\isasymacute}mem{\isacharunderscore}pool{\isacharunderscore}info\ {\isacharcolon}{\isacharequal}\ set{\isacharunderscore}bit{\isacharunderscore}free\ {\isasymacute}mem{\isacharunderscore}pool{\isacharunderscore}info\ {\isacharparenleft}pool\ b{\isacharparenright}\ {\isacharparenleft}{\isasymacute}lvl\ t{\isacharparenright}\ {\isacharparenleft}{\isasymacute}bn\ t{\isacharparenright}{\isacharsemicolon}{\isacharsemicolon}\isanewline
\ \ 6 \ \quad \quad {\isasymacute}freeing{\isacharunderscore}node\ {\isacharcolon}{\isacharequal}\ {\isasymacute}freeing{\isacharunderscore}node\ {\isacharparenleft}t\ {\isacharcolon}{\isacharequal}\ None{\isacharparenright}{\isacharsemicolon}{\isacharsemicolon}\isanewline
\ \ 7 \ \quad \quad \isacommand{IF}\ {\isasymacute}lvl\ t\ {\isachargreater}\ {\isadigit{0}}\ {\isasymand}\ partner{\isacharunderscore}bits\ {\isacharparenleft}{\isasymacute}mem{\isacharunderscore}pool{\isacharunderscore}info\ {\isacharparenleft}pool\ b{\isacharparenright}{\isacharparenright}\ {\isacharparenleft}{\isasymacute}lvl\ t{\isacharparenright}\ {\isacharparenleft}{\isasymacute}bn\ t{\isacharparenright}\ \isacommand{THEN}\isanewline
\ \ 8 \ \quad \quad \quad \isacommand{FOR}\ {\isasymacute}i\ {\isacharcolon}{\isacharequal}\ {\isasymacute}i{\isacharparenleft}t\ {\isacharcolon}{\isacharequal}\ {\isadigit{0}}{\isacharparenright}{\isacharsemicolon}\ {\isasymacute}i\ t\ {\isacharless}\ {\isadigit{4}}{\isacharsemicolon}\ {\isasymacute}i\ {\isacharcolon}{\isacharequal}\ {\isasymacute}i{\isacharparenleft}t\ {\isacharcolon}{\isacharequal}\ {\isasymacute}i\ t\ {\isacharplus}\ {\isadigit{1}}{\isacharparenright}\ \isacommand{DO}\isanewline
\ \ 9 \ \quad \quad \quad \quad {\isasymacute}bb\ {\isacharcolon}{\isacharequal}\ {\isasymacute}bb\ {\isacharparenleft}t\ {\isacharcolon}{\isacharequal}\ {\isacharparenleft}{\isasymacute}bn\ t\ div\ {\isadigit{4}}{\isacharparenright}\ {\isacharasterisk}\ {\isadigit{4}}\ {\isacharplus}\ {\isasymacute}i\ t{\isacharparenright}{\isacharsemicolon}{\isacharsemicolon}\isanewline
10 \ \quad \quad \quad \quad {\isasymacute}mem{\isacharunderscore}pool{\isacharunderscore}info\ {\isacharcolon}{\isacharequal}\ set{\isacharunderscore}bit{\isacharunderscore}noexist\ {\isasymacute}mem{\isacharunderscore}pool{\isacharunderscore}info\ {\isacharparenleft}pool\ b{\isacharparenright}\ {\isacharparenleft}{\isasymacute}lvl\ t{\isacharparenright}\ {\isacharparenleft}{\isasymacute}bb\ t{\isacharparenright}{\isacharsemicolon}{\isacharsemicolon}\isanewline
11 \ \quad \quad \quad \quad {\isasymacute}block{\isacharunderscore}pt\ {\isacharcolon}{\isacharequal}\ {\isasymacute}block{\isacharunderscore}pt\ {\isacharparenleft}t\ {\isacharcolon}{\isacharequal}\ block{\isacharunderscore}ptr\ {\isacharparenleft}{\isasymacute}mem{\isacharunderscore}pool{\isacharunderscore}info\ {\isacharparenleft}pool\ b{\isacharparenright}{\isacharparenright}\ {\isacharparenleft}{\isasymacute}lsz\ t{\isacharparenright}\ {\isacharparenleft}{\isasymacute}bb\ t{\isacharparenright}{\isacharparenright}{\isacharsemicolon}{\isacharsemicolon}\isanewline
12 \ \quad \quad \quad \quad \isacommand{IF}\ {\isasymacute}bn\ t\ {\isasymnoteq}\ {\isasymacute}bb\ t\ {\isasymand}\ block{\isacharunderscore}fits\ {\isacharparenleft}{\isasymacute}mem{\isacharunderscore}pool{\isacharunderscore}info\ {\isacharparenleft}pool\ b{\isacharparenright}{\isacharparenright}\ {\isacharparenleft}{\isasymacute}block{\isacharunderscore}pt\ t{\isacharparenright}\ {\isacharparenleft}{\isasymacute}lsz\ t{\isacharparenright}\ \isacommand{THEN}\isanewline
13 \ \quad \quad \quad \quad \quad {\isasymacute}mem{\isacharunderscore}pool{\isacharunderscore}info\ {\isacharcolon}{\isacharequal}\ {\isasymacute}mem{\isacharunderscore}pool{\isacharunderscore}info\ {\isacharparenleft}{\isacharparenleft}pool\ b{\isacharparenright}\ {\isacharcolon}{\isacharequal}\ \isanewline
14 \ \quad \quad \quad \quad \quad \quad \quad remove{\isacharunderscore}free{\isacharunderscore}list\ {\isacharparenleft}{\isasymacute}mem{\isacharunderscore}pool{\isacharunderscore}info\ {\isacharparenleft}pool\ b{\isacharparenright}{\isacharparenright}\ {\isacharparenleft}{\isasymacute}lvl\ t{\isacharparenright}\ {\isacharparenleft}{\isasymacute}block{\isacharunderscore}pt\ t{\isacharparenright}{\isacharparenright}\isanewline
15 \ \quad \quad \quad \quad \isacommand{FI}\isanewline
16 \ \quad \quad \quad \isacommand{ROF}{\isacharsemicolon}{\isacharsemicolon}\isanewline
17 \ \quad \quad \quad {\isasymacute}lvl\ {\isacharcolon}{\isacharequal}\ {\isasymacute}lvl\ {\isacharparenleft}t\ {\isacharcolon}{\isacharequal}\ {\isasymacute}lvl\ t\ {\isacharminus}\ {\isadigit{1}}{\isacharparenright}{\isacharsemicolon}{\isacharsemicolon}\isanewline
18 \ \quad \quad \quad {\isasymacute}bn\ {\isacharcolon}{\isacharequal}\ {\isasymacute}bn\ {\isacharparenleft}t\ {\isacharcolon}{\isacharequal}\ {\isasymacute}bn\ t\ div\ {\isadigit{4}}{\isacharparenright}{\isacharsemicolon}{\isacharsemicolon}\isanewline
19 \ \quad \quad \quad {\isasymacute}mem{\isacharunderscore}pool{\isacharunderscore}info\ {\isacharcolon}{\isacharequal}\ set{\isacharunderscore}bit{\isacharunderscore}freeing\ {\isasymacute}mem{\isacharunderscore}pool{\isacharunderscore}info\ {\isacharparenleft}pool\ b{\isacharparenright}\ {\isacharparenleft}{\isasymacute}lvl\ t{\isacharparenright}\ {\isacharparenleft}{\isasymacute}bn\ t{\isacharparenright}{\isacharsemicolon}{\isacharsemicolon}\isanewline
20 \ \quad \quad \quad {\isasymacute}freeing{\isacharunderscore}node\ {\isacharcolon}{\isacharequal}\ {\isasymacute}freeing{\isacharunderscore}node\ {\isacharparenleft}t\ {\isacharcolon}{\isacharequal}\ Some\ {\isasymlparr}pool\ {\isacharequal}\ {\isacharparenleft}pool\ b{\isacharparenright}{\isacharcomma}\ level\ {\isacharequal}\ {\isacharparenleft}{\isasymacute}lvl\ t{\isacharparenright}{\isacharcomma}\ \isanewline
21 \ \ \ \ \ \ \ \ \ \ \ \ \ \ \ \ \ \ \ \ \ block\ {\isacharequal}\ {\isacharparenleft}{\isasymacute}bn\ t{\isacharparenright}{\isacharcomma}\ 
data\ {\isacharequal}\ block{\isacharunderscore}ptr\ {\isacharparenleft}{\isasymacute}mem{\isacharunderscore}pool{\isacharunderscore}info\ {\isacharparenleft}pool\ b{\isacharparenright}{\isacharparenright}\ \isanewline
22 \ \ \ \ \ \ \ \ \ \ \ \ \ \ \ \ \ \ \ \ \ \ \  {\isacharparenleft}{\isacharparenleft}{\isacharparenleft}ALIGN{\isadigit{4}}\ {\isacharparenleft}max{\isacharunderscore}sz\ {\isacharparenleft}{\isasymacute}mem{\isacharunderscore}pool{\isacharunderscore}info\ {\isacharparenleft}pool\ b{\isacharparenright}{\isacharparenright}{\isacharparenright}{\isacharparenright}\ div\ {\isacharparenleft}{\isadigit{4}}\ {\isacharcircum}\ {\isacharparenleft}{\isasymacute}lvl\ t{\isacharparenright}{\isacharparenright}{\isacharparenright}{\isacharparenright}\ 
{\isacharparenleft}{\isasymacute}bn\ t{\isacharparenright}\ {\isasymrparr}{\isacharparenright}\isanewline
23 \ \quad \quad \isacommand{ELSE}\isanewline
24 \ \quad \quad \quad \isacommand{IF}\ block{\isacharunderscore}fits\ {\isacharparenleft}{\isasymacute}mem{\isacharunderscore}pool{\isacharunderscore}info\ {\isacharparenleft}pool\ b{\isacharparenright}{\isacharparenright}\ {\isacharparenleft}{\isasymacute}blk\ t{\isacharparenright}\ {\isacharparenleft}{\isasymacute}lsz\ t{\isacharparenright}\ \isacommand{THEN}\isanewline
25 \ \quad \quad \quad \quad {\isasymacute}mem{\isacharunderscore}pool{\isacharunderscore}info\ {\isacharcolon}{\isacharequal}\ {\isasymacute}mem{\isacharunderscore}pool{\isacharunderscore}info\ {\isacharparenleft}{\isacharparenleft}pool\ b{\isacharparenright}\ {\isacharcolon}{\isacharequal}\ \isanewline
26 \ \quad \quad \quad \quad \quad append{\isacharunderscore}free{\isacharunderscore}list\ {\isacharparenleft}{\isasymacute}mem{\isacharunderscore}pool{\isacharunderscore}info\ {\isacharparenleft}pool\ b{\isacharparenright}{\isacharparenright}\ {\isacharparenleft}{\isasymacute}lvl\ t{\isacharparenright}\ {\isacharparenleft}{\isasymacute}blk\ t{\isacharparenright}\ {\isacharparenright}\isanewline
27 \ \quad \quad \quad \isacommand{FI}{\isacharsemicolon}{\isacharsemicolon}\isanewline
28 \ \quad \quad \quad {\isasymacute}free{\isacharunderscore}block{\isacharunderscore}r\ {\isacharcolon}{\isacharequal}\ {\isasymacute}free{\isacharunderscore}block{\isacharunderscore}r\ {\isacharparenleft}t\ {\isacharcolon}{\isacharequal}\ False{\isacharparenright}\isanewline
29 \ \quad \quad \isacommand{FI}\isanewline
30 \ \quad \isacommand{END}\isanewline
31 \ \isacommand{OD}
\end{isabellec} 
\end{flushleft}
\caption{The {\slang} Specification of \emph{free\_block}}
\label{fig:freeblock}
\end{figure}

{\figprefix} \ref{fig:freeblock} illustrates the {\slang} specification of the \emph{free\_block} function invoked by \emph{k\_mem\_pool\_free} when releasing a memory block. 
We refer readers to {\appendixprefix} \ref{appx:mempoolfree_c} for the main part of the C code of the \emph{k\_mem\_pool\_free} service and {\appendixprefix} \ref{appx:mempoolfree} for the complete specification of the service.

The code accesses  the following variables: $lsz$, $lsize$, and $lvl$ to keep information about the current level; $blk$, $bn$, and $bb$ to represent the address and number of the block currently being accessed; $freeing\_node$ to represent the node being freeing; and $i$ to iterate blocks. Additionally, the model includes the component $free\_block\_r$ to model the recursion condition. To simplify the representation, the model uses predicates and functions to access and modify the state. Due to space constraints, we are unable to provide a detailed explanation of these functions. However, the name of the functions can help the reader to better understand their functionality. 

In the C code,  \emph{free\_block} is a recursive function with two conditions: (1) the block being released belongs to a level higher than zero, since blocks at level zero cannot be merged; and (2) the partners bits of the block being released are FREE, so they can be merged into a bigger block. We represent (1) with the predicate ${\isasymacute}lvl\ t\ {\isachargreater}\ {\isadigit{0}}$ and (2) with the predicate $partner\_bit\_free$. The formal specification follows the same structure translating the recursive function into a loop that is controlled by a variable mimicking the recursion. 

The formal specification for  \emph{free\_block} first releases an allocated memory block $bn$ setting it to \emph{FREEING}. Then, the loop statement sets \emph{free\_block} to \emph{FREE} (Line 5), and also checks that the iteration/recursive condition holds in Line 7. If the condition holds, the partner bits are set  to \emph{NOEXIST}, and remove their addresses from the free list for this level (Lines 12 - 14). Then, it sets the parent block bit to \emph{FREEING} (Lines 17 - 22), and updates the variables controlling the current block and level numbers, before going back to the beginning of the loop again. If the iteration condition is not true, it sets the bit to \emph{FREE} and add the block to the free list (Lines 24 - 28) and sets the loop condition to false to end the procedure. 
This function is illustrated in {\figprefix} \ref{fig:mempool}. A thread releases the block $172$, and since its partner blocks (block $173 - 175$) are free, Zephyr coalesces the four blocks and sets their parent block $43$ as \emph{FREEING}. The coalescence continues iteratively if the partners of block $43$ are all free. 

\subsection{Correctness and Rely-guarantee Proof}

Using the compositional reasoning of {\slang}, correctness of Zephyr memory management can be specified and verified with the rely-guarantee specification of each event.
A pair pre-post-condition for a kernel service specifies its functional correctnes.
Invariant preservation, memory configuration, and separation of local variables are specified in the guarantee condition of each service. 
The guarantee condition for both memory services is defined as:

\begin{isabellec}
\isacommand{Mem{\isacharunderscore}pool{\isacharunderscore}alloc{\isacharunderscore}guar}\ t\ {\isasymequiv} 
$\overbrace{Id}^{(1)}$
 \ {\isasymunion}\ {\isacharparenleft}
$\overbrace{gvars\_conf\_stable}^{(2)}$
 {\isasyminter}\ \isanewline
\quad {\isacharbraceleft}{\isacharparenleft}s{\isacharcomma}r{\isacharparenright}{\isachardot}\ {\isacharparenleft}
$\overbrace{cur\ s\ {\isasymnoteq}\ Some\ t\ {\isasymlongrightarrow}\ gvars{\isacharunderscore}nochange\ s\ r\ {\isasymand}\ lvars{\isacharunderscore}nochange\ t\ s\ r}^{(3.1)}$
{\isacharparenright}\isanewline
\quad {\isasymand}\ {\isacharparenleft}
$\overbrace{cur\ s\ {\isacharequal}\ Some\ t\ {\isasymlongrightarrow}\ inv\ s\ {\isasymlongrightarrow}\ inv\ r}^{(3.2)}$
{\isacharparenright}\ {\isasymand}\ {\isacharparenleft}
$\overbrace{{\isasymforall}t{\isacharprime}{\isachardot}\ t{\isacharprime}\ {\isasymnoteq}\ t\ {\isasymlongrightarrow}\ lvars{\isacharunderscore}nochange\ t{\isacharprime}\ s\ r}^{(4)}$
{\isacharparenright}\ {\isacharbraceright}{\isacharparenright}
\end{isabellec}

This relation states that the \emph{alloc} and \emph{free} services may not change the state (1), e.g., a blocked await or selecting branch on a conditional statement. If it changes the state then: (2) the static configuration of memory pools in the model do not change; (3.1) if the scheduled thread is not the thread invoking the event then variables for that thread do not change, since it is blocked in an \emph{Await}; (3.2) if it is, then the relation preserves the memory invariant, and consequently each step of the event needs to preserve the invariant $inv$; (4) a thread does not change the local variables of other threads.

Using the {\slang} proof rules, we verify that all the events reserve the invariant. Additionally, we prove that when starting in a valid memory configuration given by the invariant, then the service returns a valid memory block with size bigger or equal than the requested capacity, when the returned value is not an error code. The following postcondition specifies the property: 

\begin{isabellec}
\isacommand{Mem{\isacharunderscore}pool{\isacharunderscore}alloc{\isacharunderscore}pre}\ t\ {\isasymequiv}\ {\isacharbraceleft}s{\isachardot}\ inv\ s\ {\isasymand}\ allocating{\isacharunderscore}node\ s\ t\ {\isacharequal}\ None\ {\isasymand}\ freeing{\isacharunderscore}node\ s\ t\ {\isacharequal}\ None{\isacharbraceright}

\isacommand{Mem{\isacharunderscore}pool{\isacharunderscore}alloc{\isacharunderscore}post}\ t\ p\ sz\ timeout\ {\isasymequiv}\ {\isacharbraceleft}s{\isachardot}\ inv\ s\ {\isasymand}\ allocating{\isacharunderscore}node\ s\ t\ {\isacharequal}\ None\ {\isasymand}\ freeing{\isacharunderscore}node\ s\ t\ {\isacharequal}\ None\isanewline
\ \ \ \ \ \ {\isasymand}\ {\isacharparenleft}timeout\ {\isacharequal}\ FOREVER\ {\isasymlongrightarrow}\isanewline
\quad \quad \quad {\isacharparenleft}ret\ s\ t\ {\isacharequal}\ ESIZEERR\ {\isasymand}\ mempoolalloc{\isacharunderscore}ret\ s\ t\ {\isacharequal}\ None\ {\isasymor}\isanewline
\quad \quad \quad ret\ s\ t\ {\isacharequal}\ OK\ {\isasymand}\ {\isacharparenleft}{\isasymexists}mblk{\isachardot}\ mempoolalloc{\isacharunderscore}ret\ s\ t\ {\isacharequal}\ Some\ mblk\ {\isasymand}\ mblk{\isacharunderscore}valid\ s\ p\ sz\ mblk{\isacharparenright}{\isacharparenright}{\isacharparenright}\isanewline
\ \ \ \ \ \ {\isasymand}\ {\isacharparenleft}timeout\ {\isacharequal}\ NOWAIT\ {\isasymlongrightarrow}\isanewline
\quad \quad \quad {\isacharparenleft}{\isacharparenleft}ret\ s\ t\ {\isacharequal}\ ENOMEM\ {\isasymor}\ ret\ s\ t\ {\isacharequal}\ ESIZEERR{\isacharparenright}\ {\isasymand}\ mempoolalloc{\isacharunderscore}ret\ s\ t\ {\isacharequal}\ None{\isacharparenright} {\isasymor} \isanewline
\quad \quad \quad {\isacharparenleft}ret\ s\ t\ {\isacharequal}\ OK\ {\isasymand}\ {\isacharparenleft}{\isasymexists}mblk{\isachardot}\ mempoolalloc{\isacharunderscore}ret\ s\ t\ {\isacharequal}\ Some\ mblk\ {\isasymand}\ mblk{\isacharunderscore}valid\ s\ p\ sz\ mblk{\isacharparenright}{\isacharparenright}{\isacharparenright}\isanewline
\ \ \ \ \ \ {\isasymand}\ {\isacharparenleft}timeout\ {\isachargreater}\ {\isadigit{0}}\ {\isasymlongrightarrow}\isanewline
\quad \quad \quad {\isacharparenleft}{\isacharparenleft}ret\ s\ t\ {\isacharequal}\ ETIMEOUT\ {\isasymor}\ ret\ s\ t\ {\isacharequal}\ ESIZEERR{\isacharparenright}\ {\isasymand}\ mempoolalloc{\isacharunderscore}ret\ s\ t\ {\isacharequal}\ None{\isacharparenright} {\isasymor} \isanewline
\quad \quad \quad {\isacharparenleft}ret\ s\ t\ {\isacharequal}\ OK\ {\isasymand}\ {\isacharparenleft}{\isasymexists}mblk{\isachardot}\ mempoolalloc{\isacharunderscore}ret\ s\ t\ {\isacharequal}\ Some\ mblk \
{\isasymand}\ mblk{\isacharunderscore}valid\ s\ p\ sz\ mblk{\isacharparenright}{\isacharparenright}{\isacharparenright}{\isacharbraceright}
\end{isabellec}

If a thread requests a memory block in mode \emph{FOREVER}, it may successfully allocate a valid memory block, or fail (\emph{ESIZEERR}) if the requested size is larger than the size of the memory pool. If the thread is requesting a memory pool in mode \emph{NOWAIT}, it may also get the result of \emph{ENOMEM} if there are no available blocks. But if the thread is requesting the service in mode \emph{TIMEOUT}, it will get the result of \emph{ETIMEOUT} if there are no available blocks in \emph{timeout} milliseconds.

In {\slang}, verification of a rely-guarantee specification proving a property is carried out by inductively applying the proof rules for each system event and discharging the proof obligations the rules generate. Typically, these proof obligations require to prove stability of the pre and postcondition to check that changes of the environment preserve them, and to show that a statement modifying a state from the precondition gets a state belonging to the postcondition. 
A detailed proof sketch of the \emph{k\_mempool\_free} service is shown in {\appendixprefix} \ref{appx:mempoolfree}.

To prove loop termination, loop invariants are parametrized with a logical variable $\alpha$. It suffices to show total correctness of a loop statement by the following proposition where $loopinv(\alpha)$ is the parametrize invariant, in which the logical variable is used to find a convergent relation to show that the number of iterations of the loop is finite. 
\[ 
\begin{aligned} 
&\rgsatpI{(Some \ P)}{\rgcond{loopinv(\alpha) \cap \{s.\ \alpha > 0\}}{R}{G}{\exists \beta < \alpha. \ loopinv(\beta)}} \\
& \wedge\  loopinv(\alpha) \cap \{s.\ \alpha > 0\} \subseteq \{s.\ b\} \ \wedge \ loopinv(0) \subseteq \{s.\ \neg b\} \\
& \wedge \ \forall s \in loopinv(\alpha).\ (s,t) \in R \longrightarrow \exists \beta \leqslant \alpha. \ t \in loopinv(\beta)
\end{aligned}
\]

For instance, to prove termination of the loop statement in \emph{free\_block} shown in {\figprefix} \ref{fig:freeblock}, we define the loop invariant with the logical variable $\alpha$ as follows. 

\begin{isabellec}
\isacommand{mp{\isacharunderscore}free{\isacharunderscore}loopinv}\ t\ b\ {\isasymalpha}\ {\isasymequiv} {\isacharbraceleft}s.\ inv\ s {\isasymand} level\ b\ {\isacharless}\ length\ {\isacharparenleft}lsizes\ s\ t{\isacharparenright}\isanewline
\quad {\isasymand}\ {\isacharparenleft}{\isasymforall}ii{\isacharless}length\ {\isacharparenleft}lsizes\ s\ t{\isacharparenright}{\isachardot}\ lsizes\ s\ t\ {\isacharbang}\ ii\ {\isacharequal}\ {\isacharparenleft}max{\isacharunderscore}sz\ {\isacharparenleft}mem{\isacharunderscore}pool{\isacharunderscore}info\ s\ {\isacharparenleft}pool\ b{\isacharparenright}{\isacharparenright}{\isacharparenright}\ div\ {\isacharparenleft}{\isadigit{4}}\ {\isacharcircum}\ ii{\isacharparenright}{\isacharparenright}\isanewline
\quad {\isasymand}\ bn\ s\ t\ {\isacharless}\ length\ {\isacharparenleft}bits\ {\isacharparenleft}levels\ {\isacharparenleft}mem{\isacharunderscore}pool{\isacharunderscore}info\ s\ {\isacharparenleft}pool\ b{\isacharparenright}{\isacharparenright}{\isacharbang}{\isacharparenleft}lvl\ s\ t{\isacharparenright}{\isacharparenright}{\isacharparenright}\isanewline
\quad {\isasymand}\ bn\ s\ t\ {\isacharequal}\ {\isacharparenleft}block\ b{\isacharparenright}\ div\ {\isacharparenleft}{\isadigit{4}}\ {\isacharcircum}\ {\isacharparenleft}level\ b\ {\isacharminus}\ lvl\ s\ t{\isacharparenright}{\isacharparenright} {\isasymand} lvl\ s\ t\ {\isasymle}\ level\ b

\quad {\isasymand}\ {\isacharparenleft}free{\isacharunderscore}block{\isacharunderscore}r\ s\ t\ {\isasymlongrightarrow}\  {\isacharparenleft}{\isasymexists}blk{\isachardot}\ freeing{\isacharunderscore}node\ s\ t\ {\isacharequal}\ Some\ blk\ {\isasymand}\ pool\ blk\ {\isacharequal}\ pool\ b\isanewline
\quad \quad \quad \quad \quad \quad \quad \quad \quad \quad \quad \quad \quad \quad \quad 
{\isasymand}\ level\ blk\ {\isacharequal}\ lvl\ s\ t\ {\isasymand}\ block\ blk\ {\isacharequal}\ bn\ s\ t{\isacharparenright}

\quad \quad \quad \quad \quad \quad \quad \quad \quad \quad {\isasymand}\ alloc{\isacharunderscore}memblk{\isacharunderscore}data{\isacharunderscore}valid\ s\ {\isacharparenleft}pool\ b{\isacharparenright}\ {\isacharparenleft}the\ {\isacharparenleft}freeing{\isacharunderscore}node\ s\ t{\isacharparenright}{\isacharparenright}{\isacharparenright}

\quad {\isasymand}\ {\isacharparenleft}{\isasymnot}\ free{\isacharunderscore}block{\isacharunderscore}r\ s\ t\ {\isasymlongrightarrow}\ freeing{\isacharunderscore}node\ s\ t\ {\isacharequal}\ None{\isacharparenright} \ {\isasymand}\ ... \ {\isacharbraceright}\ {\isasyminter} 

\quad {\isacharbraceleft}s.\ {\isasymalpha}\ {\isacharequal}\ {\isacharparenleft}\textbf{if}\ freeing{\isacharunderscore}node\ s\ t\ {\isasymnoteq}\ None\ \textbf{then}\ lvl\ s\ t\ {\isacharplus}\ {\isadigit{1}}\ \textbf{else}\ {\isadigit{0}}{\isacharparenright}\ {\isacharbraceright}
\end{isabellec}

$freeing\_node$ and $lvt$ are local variables respectively storing the node being free and the level that the node belongs to. 
In the body of the loop, if $lvl\ t\ {\isachargreater}\ {\isadigit{0}}$ and $partner\_bit$ is \emph{true}, then $lvl = lvl - 1$ at the end of the body. Otherwise, $freeing\_node \ t= None$. So at the end of the loop body, $\alpha$ decreases or $\alpha = 0$. If $\alpha = 0$, we have $freeing\_node\ t = None$, and thus the negation of the loop condition  $\neg free\_block\_r \ t$, concluding termination of \emph{free\_block}. 

\section{Verified Translation from BPEL to {\slang}}
\label{sect:bpel}

In the case study of Zephyr concurrent memory management, we have shown the simplicity of the formal specification of {\slang} for concurrent OS kernels as well as the power of {\slang} for compositional reasoning of CRSs with complex structures and algorithms. 
This section presents another class of CRSs, i.e. concurrent business processes in BPEL language \cite{bpel}, which has more complex event patterns. This kind of CRSs has a flow reaction structure. 
Due to the wide application of business process programming, especially in safety/security critical systems, it is essential to ensure correctness and safety/security of business processes in order to avoid errors that may cause critical losses to the involved organizations. 

In this case study, we create an abstract syntax for BPEL and implement a formally verified translation from BPEL into {\slang}, rather than use the {\slang} to model the BPEL programs directly. This approach alleviates the efforts of manually modeling BPEL programs by automatic translation and provides high confidence with correctness proof. By the correct translation, the functional correctness and safety/security of BPEL programs can be compositionally verified in {\slang}. 

The correctness of the translation or compilation is usually defined as a simulation relation between the source and target languages, \cite{Leroy2009,LiangFF12}. Thus, we define a strong bisimulation relation for the correctness of the translation from BPEL to {\slang}. {\slang} is designed having reaction structures in mind, and it is expressive enough to simulate each small step of BPEL processes. 

In this section, we first introduce the BPEL language in brief. Then, we present an abstract syntax and its operational semantics of a core subset of BPEL in Isabelle/HOL. Next, we discuss the translation of BPEL into {\pcimp} and its correctness proof. 

\subsection{Background of BPEL}
Service-oriented computing represents a paradigm for building distributed computing applications over the Internet, where services are the basic blocks and applications are constructed by interoperations among services. From a business perspective, services composition drastically reduces the cost and risks of building new business applications in the sense that existing business logics are represented as Web services and could be reused \cite{Sheng14}. Service compositions are programmed as executable business processes in languages like BPEL \cite{bpel}, which is an imperative, XML-based language. 
In BPEL, activities are nested within concurrency, compensation and event handling constructs that cause an overwhelming number of execution paths. 

In {\figprefix} \ref{fig:bpel_example}, we illustrate the structure of business processes in BPEL with a simple example adopted from the BPEL standard. The labels in blue color show the type of each activity. 
Solid arrows in black color represent sequencing. Solid arrows in red color represent control links used for synchronization across concurrent activities. 
Activities in BPEL perform the process logic. Basic activities are those describing elemental steps of the process behaviour, such as \textbf{invoke}, \textbf{receive}, \textbf{reply}, \textbf{assign}, \textbf{wait}. The \textbf{wait} activity specifies a delay for a certain period of time or until a certain deadline is reached. 
Structured activities encode control-flow logic, and therefore can contain other basic and/or structured activities recursively, such as \textbf{sequence}, \textbf{if}, \textbf{while}, \textbf{repeatUntil}, \textbf{forEach}, \textbf{pick}, \textbf{flow}. The \textbf{pick} activity waits for the occurrence of exactly one event from a set of events, then executes the activity associated with that event. The \textbf{flow} activity provides concurrency and synchronization using \emph{links}.

\begin{figure}
\begin{center}
\includegraphics[width=3.2in]{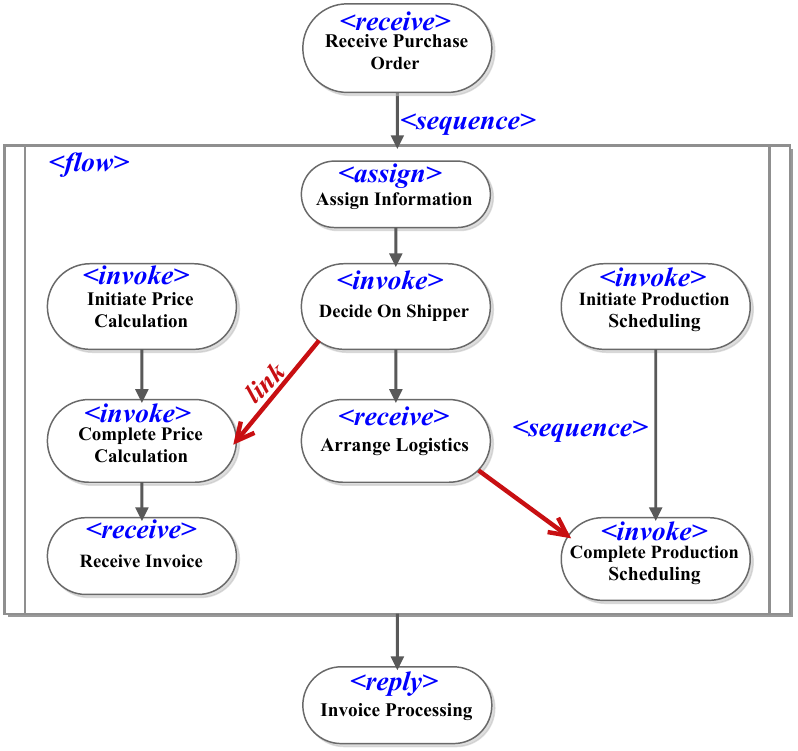}
\end{center}
\caption{An Example of BPEL \cite{bpel}}
\label{fig:bpel_example}
\end{figure}


\subsection{Formalizing the BPEL Language in Isabelle/HOL}

We design an abstract syntax of a core subset of the BPEL language as shown in {\figprefix} \ref{fig:bpel_syntax}. 
The \textbf{invoke} activity is used to call a Web Service offered by service providers, which is specified by the service's partner link \emph{ptlink}, port type \emph{pttype} and operation \emph{op}. The invocation returns normally, and the state is updated by the invocation result, which is specified as a mapping \emph{spec} of states. It also may return a fault message identified by a \emph{QName}. 
The \textbf{catch} construct attached to the \textbf{invoke} activity provides a way to define a set of custom fault handling activities, where each of them is defined to intercept a specific kind of fault. A default handler \textbf{catchAll} clause can be added to catch any fault not caught by a more specific fault handler. 
Whenever a \textbf{catchAll} fault handler is missing, it must be implicitly created. Thus, we assume there is a \textbf{catchAll} handler for each \textbf{invoke} activity.

\begin{figure}[t] 
\begin{flushleft}
\isabellestyle{sl}
	\begin{isabellebody} 
		\small 


\isacommand{record}\isamarkupfalse%
\ {\isacharparenleft}{\isacharprime}s{\isacharcomma}{\isacharprime}l{\isacharcomma}{\isacharprime}c{\isacharparenright}\ Flow{\isacharunderscore}Ele\ {\isacharequal}\ targets\ {\isacharcolon}{\isacharcolon}\ {\isachardoublequoteopen}{\isacharparenleft}{\isacharparenleft}{\isacharparenleft}{\isacharprime}s{\isacharcomma}{\isacharprime}l{\isacharcomma}{\isacharprime}c{\isacharparenright}\ State\ {\isasymRightarrow}\ bool{\isacharparenright}\ {\isasymtimes}\ {\isacharprime}l\ list{\isacharparenright}\ option{\isachardoublequoteclose}\ 

\quad \quad \quad \quad \quad \quad \quad \quad \quad \quad \quad 
sources\ {\isacharcolon}{\isacharcolon}\ {\isachardoublequoteopen}{\isacharparenleft}{\isacharprime}l\ {\isasymtimes}\ {\isacharparenleft}{\isacharparenleft}{\isacharprime}s{\isacharcomma}{\isacharprime}l{\isacharcomma}{\isacharprime}c{\isacharparenright}\ State\ {\isasymRightarrow}\ bool{\isacharparenright}{\isacharparenright}\ list\ option{\isachardoublequoteclose}

\isacommand{datatype}\isamarkupfalse%
\ {\isacharparenleft}{\isacharprime}s{\isacharcomma}{\isacharprime}l{\isacharcomma}{\isacharprime}c{\isacharparenright}\ Activity\ {\isacharequal}\ \isanewline
\quad \ \ \textbf{\textit{Invoke}} \ {\isacharparenleft}fe{\isacharcolon}{\isachardoublequoteopen}{\isacharparenleft}{\isacharprime}s{\isacharcomma}{\isacharprime}l{\isacharcomma}{\isacharprime}c{\isacharparenright}\ Flow{\isacharunderscore}Ele{\isachardoublequoteclose}{\isacharparenright}\ {\isacharparenleft}ptlink{\isacharcolon}NCName{\isacharparenright}\ {\isacharparenleft}pttype{\isacharcolon}QName{\isacharparenright}\ {\isacharparenleft}op{\isacharcolon}NCName{\isacharparenright}

\quad \quad \quad \quad \quad  {\isacharparenleft}spec{\isacharcolon}{\isachardoublequoteopen}{\isacharparenleft}{\isacharprime}s{\isacharcomma}{\isacharprime}l{\isacharcomma}{\isacharprime}c{\isacharparenright}\ State\ {\isasymRightarrow}\ {\isacharparenleft}{\isacharprime}s{\isacharcomma}{\isacharprime}l{\isacharcomma}{\isacharprime}c{\isacharparenright}\ State{\isachardoublequoteclose}{\isacharparenright}

\quad \quad \quad \quad \quad  {\isacharparenleft}catches{\isacharcolon}{\isachardoublequoteopen}{\isacharparenleft}QName\ {\isasymtimes}\ {\isacharparenleft}{\isacharparenleft}{\isacharprime}s{\isacharcomma}{\isacharprime}l{\isacharcomma}{\isacharprime}c{\isacharparenright}\ Activity{\isacharparenright}{\isacharparenright}\ list{\isachardoublequoteclose}{\isacharparenright}\ {\isacharparenleft}catchall{\isacharcolon}{\isachardoublequoteopen}{\isacharparenleft}{\isacharprime}s{\isacharcomma}{\isacharprime}l{\isacharcomma}{\isacharprime}c{\isacharparenright}\ Activity{\isachardoublequoteclose}{\isacharparenright}

\quad {\isacharbar}\ \textbf{\textit{Receive}} \ {\isacharparenleft}fe{\isacharcolon}{\isachardoublequoteopen}{\isacharparenleft}{\isacharprime}s{\isacharcomma}{\isacharprime}l{\isacharcomma}{\isacharprime}c{\isacharparenright}\ Flow{\isacharunderscore}Ele{\isachardoublequoteclose}{\isacharparenright}\ {\isacharparenleft}ptlink{\isacharcolon}NCName{\isacharparenright}\ {\isacharparenleft}pttype{\isacharcolon}QName{\isacharparenright}\ {\isacharparenleft}op{\isacharcolon}NCName{\isacharparenright}

\quad \quad \quad \quad \quad  {\isacharparenleft}spec{\isacharcolon}{\isachardoublequoteopen}{\isacharparenleft}{\isacharprime}s{\isacharcomma}{\isacharprime}l{\isacharcomma}{\isacharprime}c{\isacharparenright}\ State\ {\isasymRightarrow}\ {\isacharparenleft}{\isacharprime}s{\isacharcomma}{\isacharprime}l{\isacharcomma}{\isacharprime}c{\isacharparenright}\ State{\isachardoublequoteclose}{\isacharparenright}

\quad {\isacharbar}\ \textbf{\textit{Reply}} \ {\isacharparenleft}fe{\isacharcolon}{\isachardoublequoteopen}{\isacharparenleft}{\isacharprime}s{\isacharcomma}{\isacharprime}l{\isacharcomma}{\isacharprime}c{\isacharparenright}\ Flow{\isacharunderscore}Ele{\isachardoublequoteclose}{\isacharparenright}\ {\isacharparenleft}ptlink{\isacharcolon}NCName{\isacharparenright}\ {\isacharparenleft}pttype{\isacharcolon}QName{\isacharparenright}\ {\isacharparenleft}op{\isacharcolon}NCName{\isacharparenright}

\quad {\isacharbar}\ \textbf{\textit{Assign}} \ {\isacharparenleft}fe{\isacharcolon}{\isachardoublequoteopen}{\isacharparenleft}{\isacharprime}s{\isacharcomma}{\isacharprime}l{\isacharcomma}{\isacharprime}c{\isacharparenright}\ Flow{\isacharunderscore}Ele{\isachardoublequoteclose}{\isacharparenright}\ \ \ {\isachardoublequoteopen}{\isacharparenleft}{\isacharprime}s{\isacharcomma}{\isacharprime}l{\isacharcomma}{\isacharprime}c{\isacharparenright}\ State\ {\isasymRightarrow}\ {\isacharparenleft}{\isacharprime}s{\isacharcomma}{\isacharprime}l{\isacharcomma}{\isacharprime}c{\isacharparenright}\ State{\isachardoublequoteclose}

\quad
{\isacharbar}\ \textbf{\textit{Wait}} \ {\isacharparenleft}fe{\isacharcolon}{\isachardoublequoteopen}{\isacharparenleft}{\isacharprime}s{\isacharcomma}{\isacharprime}l{\isacharcomma}{\isacharprime}c{\isacharparenright}\ Flow{\isacharunderscore}Ele{\isachardoublequoteclose}{\isacharparenright}\ Time

\quad {\isacharbar}\ \textbf{\textit{Empty}} \ {\isacharparenleft}fe{\isacharcolon}{\isachardoublequoteopen}{\isacharparenleft}{\isacharprime}s{\isacharcomma}{\isacharprime}l{\isacharcomma}{\isacharprime}c{\isacharparenright}\ Flow{\isacharunderscore}Ele{\isachardoublequoteclose}{\isacharparenright}

\quad
{\isacharbar}\ \textbf{\textit{Seq}} \ \ {\isachardoublequoteopen}{\isacharparenleft}{\isacharprime}s{\isacharcomma}{\isacharprime}l{\isacharcomma}{\isacharprime}c{\isacharparenright}\ Activity{\isachardoublequoteclose}\ \ \ {\isachardoublequoteopen}{\isacharparenleft}{\isacharprime}s{\isacharcomma}{\isacharprime}l{\isacharcomma}{\isacharprime}c{\isacharparenright}\ Activity{\isachardoublequoteclose}

\quad {\isacharbar}\ \textbf{\textit{If}} \ \ {\isacharparenleft}cond{\isacharcolon}{\isachardoublequoteopen}{\isacharparenleft}{\isacharprime}s{\isacharcomma}{\isacharprime}l{\isacharcomma}{\isacharprime}c{\isacharparenright}\ State\ set{\isachardoublequoteclose}{\isacharparenright}\ \ \ {\isachardoublequoteopen}{\isacharparenleft}{\isacharprime}s{\isacharcomma}{\isacharprime}l{\isacharcomma}{\isacharprime}c{\isacharparenright}\ Activity{\isachardoublequoteclose}\ \ \ {\isachardoublequoteopen}{\isacharparenleft}{\isacharprime}s{\isacharcomma}{\isacharprime}l{\isacharcomma}{\isacharprime}c{\isacharparenright}\ Activity{\isachardoublequoteclose}

\quad {\isacharbar}\ \textbf{\textit{While}} \ \ {\isacharparenleft}cond{\isacharcolon}{\isachardoublequoteopen}{\isacharparenleft}{\isacharprime}s{\isacharcomma}{\isacharprime}l{\isacharcomma}{\isacharprime}c{\isacharparenright}\ State\ set{\isachardoublequoteclose}{\isacharparenright}\ \ \ {\isachardoublequoteopen}{\isacharparenleft}{\isacharprime}s{\isacharcomma}{\isacharprime}l{\isacharcomma}{\isacharprime}c{\isacharparenright}\ Activity{\isachardoublequoteclose}

\quad
{\isacharbar}\ \textbf{\textit{Flow}}  \ {\isachardoublequoteopen}{\isacharparenleft}{\isacharprime}s{\isacharcomma}{\isacharprime}l{\isacharcomma}{\isacharprime}c{\isacharparenright}\ Activity{\isachardoublequoteclose}\ \ \ {\isachardoublequoteopen}{\isacharparenleft}{\isacharprime}s{\isacharcomma}{\isacharprime}l{\isacharcomma}{\isacharprime}c{\isacharparenright}\ Activity{\isachardoublequoteclose}

\quad {\isacharbar}\ \textbf{\textit{Pick}} \ \ {\isachardoublequoteopen}{\isacharparenleft}{\isacharparenleft}{\isacharprime}s{\isacharcomma}{\isacharprime}l{\isacharcomma}{\isacharprime}c{\isacharparenright}\ EventHandler{\isacharparenright}{\isachardoublequoteclose} \ \ {\isachardoublequoteopen}{\isacharparenleft}{\isacharparenleft}{\isacharprime}s{\isacharcomma}{\isacharprime}l{\isacharcomma}{\isacharprime}c{\isacharparenright}\ EventHandler{\isacharparenright}{\isachardoublequoteclose}

\quad {\isacharbar}\ \textbf{\textit{\bpelfin}}


\isakeyword{and}\ {\isacharparenleft}{\isacharprime}s{\isacharcomma}{\isacharprime}l{\isacharcomma}{\isacharprime}c{\isacharparenright}\ EventHandler\ {\isacharequal}

\quad \ \ \textbf{\textit{OnMessage}} \ {\isacharparenleft}ptlink{\isacharcolon}NCName{\isacharparenright}\ {\isacharparenleft}pttype{\isacharcolon}QName{\isacharparenright}\ {\isacharparenleft}op{\isacharcolon}NCName{\isacharparenright}

\quad \quad \ \ \ \ \ \ \ \ \ \ \ \ {\isacharparenleft}spec{\isacharcolon}{\isachardoublequoteopen}{\isacharparenleft}{\isacharprime}s{\isacharcomma}{\isacharprime}l{\isacharcomma}{\isacharprime}c{\isacharparenright}\ State\ {\isasymRightarrow}\ {\isacharparenleft}{\isacharprime}s{\isacharcomma}{\isacharprime}l{\isacharcomma}{\isacharprime}c{\isacharparenright}\ State{\isachardoublequoteclose}{\isacharparenright}\ \ \ {\isachardoublequoteopen}{\isacharparenleft}{\isacharprime}s{\isacharcomma}{\isacharprime}l{\isacharcomma}{\isacharprime}c{\isacharparenright}\ Activity{\isachardoublequoteclose}

\quad {\isacharbar}\ \textbf{\textit{OnAlarm}} \ Time\ \ \ {\isachardoublequoteopen}{\isacharparenleft}{\isacharprime}s{\isacharcomma}{\isacharprime}l{\isacharcomma}{\isacharprime}c{\isacharparenright}\ Activity{\isachardoublequoteclose}

\textbf{\textit{repeatUtil}}\ c\ P $\equiv$ \textbf{\textit{Seq}} \ P \ (\textbf{\textit{While}}\ c\ P)

\textbf{\textit{forEach}}\ m\ n\ P $\equiv$ 
\textbf{\textit{if}}\ m = n \textbf{\textit{then}}\ P

\quad \quad \quad \quad \quad \quad \quad 
\textbf{\textit{else}}\ \textbf{\textit{if}}\ m > n \ \textbf{\textit{then}}\ \textbf{\textit{\bpelfin}}

\quad \quad \quad \quad \quad \quad \quad 
\textbf{\textit{else}}\ \textbf{\textit{Seq}} \ P (\textbf{\textit{forEach}}\ (m + 1)\ n\ P)

	\end{isabellebody}%
	\caption{Syntax of BPEL in Isabelle/HOL} 
	\label{fig:bpel_syntax}
\end{flushleft}
\end{figure}

A business process provides services to its partners through inbound message activities (e.g. \textbf{receive} and \textbf{pick}) and corresponding \textbf{reply} activities. 
A \textbf{receive} activity specifies a partner link \emph{ptlink} to receive messages, the port type \emph{pttype} and operation \emph{op} that it expects the partner to invoke. A \textbf{receive} activity is blocked until it receives a message and the state is updated by assigning the message to a variable. 
The \textbf{reply} activity is used to send a response to a request previously accepted through an inbound message activity such as a \textbf{receive} activity. This activity only queries the current state for sending messages; however, the state is not changed.  

The \textbf{assign} activity can be used to copy data from one variable to another, as well as to construct and insert new data using expressions. A state mapping specifies this activity. The \textbf{wait} activity specifies a delay until a certain deadline is reached. The \textbf{empty} activity specifies an activity that does nothing, for example when a fault needs to be caught and suppressed. Another use of \textbf{empty} is to provide a synchronization point in a \textbf{flow} structure. 

The \textbf{if}, \textbf{seq} and \textbf{while} activities are standard, and \textbf{repeatUntil} is similar to \emph{do \{ ... \} while(b)} statement. 
The \textbf{forEach} activity has two execution modes. In sequential mode, when its parallel attribute is \emph{false}, the \textbf{forEach} activity will execute its contained activity exactly $N+1$ times where $N = endv - startv$. Otherwise, the activity is a parallel \textbf{forEach}. The enclosed activity must be concurrently executed $N+1$ times. Essentially, it is the same as a \textbf{flow} activity with $N+1$ copies of the \textbf{forEach}'s enclosed activity as children. Thus, we could only consider sequential \textbf{forEach} activities for formal verification. 

The \textbf{pick} activity waits for the occurrence of exactly one event from a set of events, then executes the activity associated with that event. After an event has been selected, the other events are no longer accepted by that \textbf{pick}. The activity is comprised of a set of event handlers, each containing an event-activity pair in two forms: (1) The \textbf{onMessage} is similar to a \textbf{receive} activity, in that it waits for the receipt of an inbound message. (2) The \textbf{onAlarm} corresponds to a timer-based alarm. 

A \textbf{flow} activity completes when all of the activities enclosed by the \textbf{flow} have completed.
The synchronization of concurrent activities is specified by a \emph{flow element} attached to basic activities. 
An activity can declare itself to be the source of one or more links by including one or more \emph{source} elements (\emph{sources}). 
Each \emph{source} element can specify an optional \emph{transition condition} as a guard for the specified link. After executing the activity, the \emph{source}'s specified link is flagged as \emph{fired}, if the transition condition of a \emph{source} is evaluated to \emph{true}. 
Similarly, an activity can declare itself to be the target of one or more links by including one or more \emph{target} elements (\emph{targets}). \emph{Targets}, as a whole, can specify an optional \emph{join condition}, which is a boolean expression. If the condition is \emph{true}, the activity must be blocked until all links specified in \emph{targets} are fired. If no \emph{join condition} is specified, the condition is the disjunction of the link status of all incoming links of this activity. 
Finally, we add {\bpelfin} to denote the termination of BPEL execution. 

Typical transition rules of BPEL are shown in {\figprefix} \ref{fig:bpel_semantics}. The invoked service nondeterministically decides the occurrence of faults when a BPEL process invoking a Web services, so we choose one fault handler to be executed (\bpelsemrulename{InvokeFault}). Other rules refer to the explanation of each activity above.

\begin{figure}
\small 

\begin{tabular}{c}
\bpelsemrule{InvokeSuc}
{
	\infer{ \bpeltran{\textbf{Invoke}\ fls\ ptl\ ptt\ opn\ spc\ cts\ cta}{s}{\bpelfin}{t} }
		  { \text{targets\_sat}\ (targets\ fls)\ s 
		  & r = spc\ s 
		  & t = \text{fire\_sources}\ (sources\ fls)\ r }
}
\end{tabular}
\vspace{0.1cm}

\begin{tabular}{c}
\bpelsemrule{InvokeFault}
{
	\infer{ \bpeltran{\textbf{Invoke}\ fls\ ptl\ ptt\ opn\ spc\ cts\ cta}{s}{evh}{t} }
		  { 
		  \begin{tabular}{l}
			$\text{targets\_sat}\ (targets\ fls)\ s$ \quad $t = \text{fire\_sources}\ (sources\ fls)\ s$ \quad
			$evh \in set\ (map\ snd\ cts) \cup \{cta\}$
		  \end{tabular}
		  }
}
\end{tabular}
\vspace{0.1cm}

\begin{tabular}{cc}
\bpelsemrule{PickR}
{
	\infer{ \bpeltran{\textbf{Pick}\ P \ Q}{s}{Q'}{t} }
		  { \bpelehtran{Q}{s}{Q'}{t} }
}
&
\bpelsemrule{FlowR}
{
	\infer{ \bpeltran{\textbf{Flow}\ P\ Q}{s}{\textbf{Flow}\ P\ O'}{t} }
		  { \bpeltran{Q}{s}{Q'}{t} }
}
\end{tabular}
\vspace{0.1cm}

\begin{tabular}{c}
\bpelsemrule{FlowFin}
{
	\infer{ \bpeltran{\textbf{Flow}\ \bpelfin\ \bpelfin}{s}{\bpelfin}{s} }
		  { - }
}
\end{tabular}
\vspace{0.1cm}

\begin{tabular}{cc}
\bpelsemrule{OnMessage}
{
	\infer{ \bpelehtran{\textbf{OnMessage}\ ptl\ ptt\ opn\ spc\ A}{s}{A}{t} }
		  { t = spc\ s }
}
&
\bpelsemrule{OnAlarm}
{
	\infer{ \bpelehtran{\textbf{OnAlarm}\ time\ A}{s}{A}{s} }
		  { time > tick\ s }
}
\end{tabular}

\caption{Operational Semantics of BPEL}
\label{fig:bpel_semantics}
\end{figure}

\subsection{Translating BPEL into {\slang}}
We have developed a translator for the formal verification of BPEL programs using the {\slang} framework. In order to show that {\slang} is expressive to model concurrent business processes in BPEL, we use the event compositions to represent complex activities, rather than the complex statements in integrated languages such as {\csimpllang}. We only use the \emph{Basic} statement in the {\implang} language ({\sectprefix} \ref{subsect:pcimp}) to change the state.  
For the formal verification, we use the same state definition in the translated {\slang} specification as in the BPEL program. 
BPEL composes Web services in the \emph{orchestration} manner \cite{Peltz03}, which always represents control from one party's perspective.
Formal verification of BPEL can only concern one BPEL program, rather than a set of composed programs, which would be a Web services \emph{choreography}. 
Moreover, a BPEL program composing a set of external services has embedded concurrent internal sub-processes. Therefore, we use event systems to represent BPEL processes but not parallel event systems. In this way, the events translated from BPEL activities do not have the parameter ``$\textbf{OF}\ t$'' to indicate the execution context. 

We implement a \emph{compile} function to translate a BPEL process into a {\slang} specification. 
We first introduce the translation of the \textbf{invoke} activity as shown in {\figprefix} \ref{fig:compile_invoke}. Other translation rules are shown in {\tableprefix} \ref{tbl:translate_basic}, which will be discussed later.


%
%

For each basic activity, the translated events are guarded by a condition 
\[
{\isasymacute}{\isacharparenleft}{\isasymlambda}s{\isachardot}\ targets{\isacharunderscore}sat\ {\isacharparenleft}targets\ fls{\isacharparenright}\ s{\isacharparenright}
\]
which constrains the state that all targeted links of the activity are fired. Meanwhile, in the body of the translated events, the links sourced by the activity are fired by a statement 
\[
Basic\ {\isacharparenleft}{\isasymlambda}s{\isachardot}\ fire{\isacharunderscore}sources\ {\isacharparenleft}sources\ fls{\isacharparenright}\ s{\isacharparenright}
\]
where $Basic\ spc$ represents an atomic state transformation, for example, an assignment, a multiple assignments, or the $SKIP$ command ($SKIP \equiv Basic\ Id$).

For the \textbf{invoke} activity, we use the \textbf{$\textbf{Event}_{suc}$} and \textbf{$\textbf{Event}_{fail}$} events to simulate the cases of normal and faulty invocation respectively. Since the execution of a \textbf{invoke} activity is one step in BPEL semantics, we use atomic events (\textbf{EVENT\isactrlsub A}) to model it. In the normal case, the event updates the state by the invocation result and flags its source links. When the invocation returns a fault, it is nondeterministically resolved by special handlers and the default handler. 
For the list of handlers in an invocation, we use the \lchoice{}operator to compose the events in a list by the choice construct. Then, we use the $\oplus$ and \lchoice{}operators to simulate the nondeterministism among successful invocation, failed invocations caught by the default and custom fault handlers. The $\textbf{Event}_{suc}$ event represents the successful invocation in {\slang}. The sequence $\evtseq{ \textbf{Event\isactrlsub {fail}} }{ \textbf{compile}(cta) }$ represents a failed invocation and the default handler  ($cta$) deals with the fault. The ``map\ (($\evtseq{\textbf{Event\isactrlsub {fail}}}{}$) $\circ$ \textbf{compile})\ (map\ snd\ cts)'' expression constructs a list of event sequence, each of which is $\evtseq{ \textbf{Event\isactrlsub {fail}} }{ \textbf{compile}(ct) }$ corresponding to each $(name,ct) \in cts$. 
The partner link, port type and operation name are used to construct the event name.

\begin{figure}
\begin{flushleft}
\isabellestyle{sl}
\small
$
\left\{
\begin{aligned}
& \lchoice{[]} = \fin \\
& \lchoice{[a]} = a \\
& \lchoice{(a \ \# \ b \ \# \ xs)} = \evtchoice{a}{(\lchoice{(b \ \# \ xs)})}
\end{aligned}
\right.
$
\vspace{0.2cm} 

\begin{minipage}{.49\linewidth}
{\slshape
\textbf{Event\isactrlsub {suc}} {\isasymequiv} 
\textbf{EVENT\isactrlsub A}\ {\isacharparenleft}{\isacharprime}{\isacharprime}Invoke{\isacharprime}{\isacharprime}{\isacharat}ptl{\isacharat}ptt{\isacharat}opn{\isacharparenright}

\quad \quad \quad \quad \quad
\textbf{WHEN}\ {\isacharparenleft}{\isasymacute}{\isacharparenleft}{\isasymlambda}s{\isachardot}\ targets{\isacharunderscore}sat\ {\isacharparenleft}targets\ fls{\isacharparenright}\ s{\isacharparenright}{\isacharparenright}\ 

\quad \quad \quad \quad \quad
\textbf{THEN}

\quad \quad \quad \quad \quad \quad
{\isacharparenleft}Basic\ spc{\isacharparenright}{\isacharsemicolon}{\isacharsemicolon}

\quad \quad \quad \quad \quad \quad
{\isacharparenleft}Basic\ {\isacharparenleft}{\isasymlambda}s{\isachardot}\ fire{\isacharunderscore}sources\ {\isacharparenleft}sources\ fls{\isacharparenright}\ s{\isacharparenright}{\isacharparenright}

\quad \quad \quad \quad \quad
\textbf{END} 
}

\end{minipage}
\begin{minipage}{.5\linewidth}
\vspace{-3mm}
{\slshape
\textbf{Event\isactrlsub {fail}} {\isasymequiv} 
\textbf{EVENT\isactrlsub A}\ {\isacharparenleft}{\isacharprime}{\isacharprime}Invoke{\isacharprime}{\isacharprime}{\isacharat}ptl{\isacharat}ptt{\isacharat}opn{\isacharparenright}
    
\quad \quad \quad \quad \quad
\textbf{WHEN}\ {\isacharparenleft}{\isasymacute}{\isacharparenleft}{\isasymlambda}s{\isachardot}\ targets{\isacharunderscore}sat\ {\isacharparenleft}targets\ fls{\isacharparenright}\ s{\isacharparenright}{\isacharparenright}\ 

\quad \quad \quad \quad \quad
\textbf{THEN}

\quad \quad \quad \quad \quad \quad
{\isacharparenleft}Basic\ {\isacharparenleft}{\isasymlambda}s{\isachardot}\ fire{\isacharunderscore}sources\ {\isacharparenleft}sources\ fls{\isacharparenright}\ s{\isacharparenright}{\isacharparenright}

\quad \quad \quad \quad \quad
\textbf{END}
}
\end{minipage}
\vspace{0.2cm} 

\textbf{compile}(\textbf{Invoke} \ fls\ ptl\ ptt\ opn\ spc\ cts\ cta) {\isasymequiv} 

\quad 
\textbf{Event\isactrlsub {succ}}\ $\oplus$
{\isacharparenleft}
\lchoice{(($\evtseq{ \textbf{Event\isactrlsub {fail}} }{ \textbf{compile}(cta) }$) \# (\textbf{map}\ (($\evtseq{\textbf{Event\isactrlsub {fail}}}{}$) $\circ$ \textbf{compile})\ (\textbf{map}\ snd\ cts)  ))}
{\isacharparenright}


\caption{Translation of \textbf{Invoke} Activity}
\label{fig:compile_invoke}
\end{flushleft}
\end{figure}

\begin{table}[t]
  \centering
  \small 
  \caption{The \textbf{compile} function} 
  \begin{tabularx}{\textwidth}{|l|X|}
  \hline
    \textbf{BPEL Activity} & \textbf{Translated {\slang} Specification} 
\\ \hline

\textbf{Receive}\ fls\ ptl\ ptt\ opn\ spc
	&
	{\slshape
\textbf{EVENT\isactrlsub A} \ {\isacharparenleft}{\isacharprime}{\isacharprime}Receive{\isacharprime}{\isacharprime}{\isacharat}ptl{\isacharat}ptt{\isacharat}opn{\isacharparenright}\ \isanewline
\textbf{WHEN} \ {\isacharparenleft}{\isasymacute}{\isacharparenleft}{\isasymlambda}s{\isachardot}\ targets{\isacharunderscore}sat\ {\isacharparenleft}targets\ fls{\isacharparenright}\ s{\isacharparenright}{\isacharparenright}\ \isanewline
\textbf{THEN} \ \
{\isacharparenleft}Basic\ spc{\isacharparenright}{\isacharsemicolon}{\isacharsemicolon}\ {\isacharparenleft}Basic\ {\isacharparenleft}{\isasymlambda}s{\isachardot}\ fire{\isacharunderscore}sources\ {\isacharparenleft}sources\ fls{\isacharparenright}\ s{\isacharparenright}{\isacharparenright}\ \
\textbf{END}
    }
\\ \hline

	\textbf{Reply}\ fls\ ptl\ ptt\ opn
	&
	{\slshape
\textbf{EVENT\isactrlsub A} \ {\isacharparenleft}{\isacharprime}{\isacharprime}Reply{\isacharprime}{\isacharprime}{\isacharat}ptl{\isacharat}ptt{\isacharat}opn{\isacharparenright}\ \isanewline
\textbf{WHEN} \ {\isacharparenleft}{\isasymacute}{\isacharparenleft}{\isasymlambda}s{\isachardot}\ targets{\isacharunderscore}sat\ {\isacharparenleft}targets\ fls{\isacharparenright}\ s{\isacharparenright}{\isacharparenright}\ \isanewline
\textbf{THEN} \ \
{\isacharparenleft}Basic\ {\isacharparenleft}{\isasymlambda}s{\isachardot}\ fire{\isacharunderscore}sources\ {\isacharparenleft}sources\ fls{\isacharparenright}\ s{\isacharparenright}{\isacharparenright}\ \
\textbf{END}
    }
\\ \hline

\textbf{Assign}\ fls\ spc
	&
	{\slshape
\textbf{EVENT\isactrlsub A} \ {\isacharparenleft}{\isacharprime}{\isacharprime}Assign{\isacharprime}{\isacharprime}{\isacharparenright}\ \isanewline
\textbf{WHEN} \ {\isacharparenleft}{\isasymacute}{\isacharparenleft}{\isasymlambda}s{\isachardot}\ targets{\isacharunderscore}sat\ {\isacharparenleft}targets\ fls{\isacharparenright}\ s{\isacharparenright}{\isacharparenright}\ \isanewline
\textbf{THEN} \ \
{\isacharparenleft}Basic\ spc{\isacharparenright}{\isacharsemicolon}{\isacharsemicolon}\ {\isacharparenleft}Basic\ {\isacharparenleft}{\isasymlambda}s{\isachardot}\ fire{\isacharunderscore}sources\ {\isacharparenleft}sources\ fls{\isacharparenright}\ s{\isacharparenright}{\isacharparenright}\ \
\textbf{END}
    }
\\ \hline

	\textbf{Wait}\ fls\ t
	&
	{\slshape
\textbf{EVENT\isactrlsub A}\ {\isacharparenleft}{\isacharprime}{\isacharprime}Wait{\isacharprime}{\isacharprime}{\isacharparenright}\ \isanewline
\textbf{WHEN}\ t\ < \ {\isasymacute}tick\ {\isasymand}\ {\isacharparenleft}{\isasymacute}{\isacharparenleft}{\isasymlambda}s{\isachardot}\ targets{\isacharunderscore}sat\ {\isacharparenleft}targets\ fls{\isacharparenright}\ s{\isacharparenright}{\isacharparenright}\ \isanewline
\textbf{THEN}\ \
{\isacharparenleft}Basic\ {\isacharparenleft}{\isasymlambda}s{\isachardot}\ fire{\isacharunderscore}sources\ {\isacharparenleft}sources\ fls{\isacharparenright}\ s{\isacharparenright}{\isacharparenright}\ \
\textbf{END}
    }
\\ \hline

\textbf{Empty}\ fls
	&
	{\slshape
\textbf{EVENT\isactrlsub A} \ {\isacharparenleft}{\isacharprime}{\isacharprime}Empty{\isacharprime}{\isacharprime}{\isacharparenright}\ \isanewline
\textbf{WHEN} \ {\isacharparenleft}{\isasymacute}{\isacharparenleft}{\isasymlambda}s{\isachardot}\ targets{\isacharunderscore}sat\ {\isacharparenleft}targets\ fls{\isacharparenright}\ s{\isacharparenright}{\isacharparenright}\ \isanewline
\textbf{THEN} \ \
{\isacharparenleft}Basic\ {\isacharparenleft}{\isasymlambda}s{\isachardot}\ fire{\isacharunderscore}sources\ {\isacharparenleft}sources\ fls{\isacharparenright}\ s{\isacharparenright}{\isacharparenright}\ \
\textbf{END}
    }
\\ \hline

    \textbf{Seq} \ $A_1$\ $A_2$
   & 
    $\evtseq{(compile\ A_1)}{(compile\ A_2)}$
\\ \hline

	\textbf{If}\ c\ $A_1$\ $A_2$
	&
	{\slshape
	{\isacharparenleft}{\isacharparenleft}\textbf{EVENT\isactrlsub A}\ {\isacharparenleft}{\isacharprime}{\isacharprime}If{\isacharprime}{\isacharprime}{\isacharparenright}\ \textbf{WHEN}\ {\isacharparenleft}{\isasymacute}{\isacharparenleft}{\isasymlambda}s{\isachardot}\ s{\isasymin}c{\isacharparenright}{\isacharparenright}\ \textbf{THEN}\ \textit{SKIP}\ \textbf{END}{\isacharparenright}
	
	\quad $\rhd$\ {\isacharparenleft}compile\ A{\isadigit{1}}{\isacharparenright}{\isacharparenright}

\ $\oplus$

{\isacharparenleft}{\isacharparenleft}\textbf{EVENT\isactrlsub A}\ {\isacharparenleft}{\isacharprime}{\isacharprime}Else{\isacharprime}{\isacharprime}{\isacharparenright}\ \textbf{WHEN}\ {\isacharparenleft}{\isasymacute}{\isacharparenleft}{\isasymlambda}s{\isachardot}\ s{\isasymnotin}c{\isacharparenright}{\isacharparenright}\ \textbf{THEN}\ \textit{SKIP}\ \textbf{END}{\isacharparenright}

	\quad $\rhd$\ {\isacharparenleft}compile\ A{\isadigit{2}}{\isacharparenright}{\isacharparenright}	
	}
\\ \hline
	\textbf{While} \ c\ A
   & 
    $\evtrec{c}{(compile\ A)}$
\\ \hline

	\textbf{Pick} \ $A_1$\ $A_2$
   & 
    $\evtchoice{compile\ A_1}{compile\ A_2}$
\\ \hline 

	\textbf{Flow} \ $A_1$\ $A_2$
   & 
    $\evtjoin{compile\ A_1}{compile\ A_2}$
\\ \hline 

\textbf{ActTerminator}
   & 
    $\fin$
\\ \hline 

	\textbf{OnMessage}\ ptl\ ptt\ opn\ spc\ at
   & 
    
    {\slshape
    {\isacharparenleft}\textbf{EVENT\isactrlsub A}\ {\isacharparenleft}{\isacharprime}{\isacharprime}OnMessage{\isacharprime}{\isacharprime}{\isacharat}ptl{\isacharat}ptt{\isacharat}opn{\isacharparenright}
    
    \textbf{WHEN}\ True\ \textbf{THEN}\ {\isacharparenleft}Basic\ spc{\isacharparenright}\ \textbf{END}{\isacharparenright}
    
    $\rhd$ \ {\isacharparenleft}compile\ at {\isacharparenright}
    }
\\ \hline 

	\textbf{OnAlarm}\ t\ at
   & 
   
   {\slshape
    {\isacharparenleft}\textbf{EVENT\isactrlsub A}\ {\isacharparenleft}{\isacharprime}{\isacharprime}OnAlarm{\isacharprime}{\isacharprime}{\isacharparenright}\ \textbf{WHEN}\ t\ >\ {\isasymacute}tick\ \textbf{THEN}\ SKIP\ \textbf{END}{\isacharparenright}
    
    $\rhd$ \ {\isacharparenleft}compile\ at{\isacharparenright}  
   }
\\ \hline

  \end{tabularx}
  \label{tbl:translate_basic}
\end{table}

The translated {\slang} specification of \textbf{receive}, \textbf{reply}, \textbf{assign} and \textbf{empty} activities are similar to the \textbf{invoke}'s. 
A \textbf{wait} activity has an additional guard condition $t < tick$, i.e. the time point $t$ has passed. 
The termination \textbf{ActFin} is translated to $\fin$. 

An \textbf{if} activity is translated to the choice composition of two event systems, each of which has an atomic event with disjoint guard conditions. The translation of \textbf{seq}, \textbf{while}, \textbf{pick} and \textbf{flow} activities is straightforward, as well as event handlers \textbf{OnMessage} and \textbf{OnAlarm}.  

Finally, due to different event names and bodies of translated events for BPEL activities, the translation we design is an injection shown as the following lemma. 

\begin{lemma}[\textbf{Translation} is an Injection]
\label{lm:inj_compile}
if $\textbf{compile}(b_1) = \textbf{compile}(b_2)$ then $b_1 = b_2$.
\end{lemma}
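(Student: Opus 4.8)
The plan is to prove the lemma by mutual structural induction over the BPEL syntax. Since \textbf{compile} is defined by primitive recursion over the two mutually recursive datatypes \emph{Activity} and \emph{EventHandler}, I would strengthen the goal to the conjunction ``\textbf{compile} is injective on activities \emph{and} injective on event handlers'' and prove both simultaneously, so that an induction hypothesis is available for the event-handler operands in the \textbf{Pick} case and for the activity operands in the \textbf{OnMessage}/\textbf{OnAlarm} cases. Concretely, I fix $b_1$, case-split on its outermost constructor, assume $\textbf{compile}(b_1)=\textbf{compile}(b_2)$, and then case-split on $b_2$.

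The first, mostly mechanical, part of the argument separates the source constructors by the shape of the translated term. Each compound activity is sent to a distinct outermost {\slang} constructor: \textbf{ActFin} to $\fin$, \textbf{Seq} (and \textbf{OnMessage}, \textbf{OnAlarm}) to a sequence node ($\rhd$), \textbf{While} to an iteration node (\textbf{Iter}), \textbf{Flow} to a join node ($\Join$), and \textbf{Pick}, \textbf{If}, \textbf{Invoke} to a choice node ($\oplus$). Because the syntactic constructors of {\figprefix} \ref{fig:lang} are pairwise distinct and injective as datatype constructors, $\textbf{compile}(b_1)=\textbf{compile}(b_2)$ already forces $b_1$ and $b_2$ into the same small group of constructors; the residual ambiguities (e.g.\ \textbf{Seq} vs.\ \textbf{OnMessage} vs.\ \textbf{OnAlarm}, all yielding $\rhd$; or \textbf{Pick} vs.\ \textbf{If} vs.\ \textbf{Invoke}, all yielding $\oplus$) are resolved by inspecting the immediate operands, whose distinctive form (a compiled activity versus an atomic event carrying a fixed name tag such as ``OnMessage'' or ``If'') pins down the constructor. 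In the genuinely compound cases the goal then closes by applying the induction hypothesis to the recovered subactivities or subhandlers.

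The core of the lemma lies in the basic activities \textbf{Invoke}, \textbf{Receive}, \textbf{Reply}, \textbf{Assign}, \textbf{Wait}, \textbf{Empty}, each of which becomes an atomic event whose \textbf{WHEN} guard embeds $fls$ through $targets\_sat\ (targets\ fls)$, whose body embeds the flow element through $fire\_sources\ (sources\ fls)$ and (for \textbf{Invoke}/\textbf{Receive}/\textbf{Assign}) the state map through $Basic\ spc$, and whose event name is a fixed string tag concatenated with the encodings of $ptl$, $ptt$, $opn$. Here I would use the distinctness of the leading tags together with the differing body shapes to fix the activity kind; recover $spc$ from injectivity of the $Basic$ constructor; recover $fls$ by noting that a \emph{Flow\_Ele} record is determined by its $targets$ and $sources$ fields; and recover $ptl$, $ptt$, $opn$ from the name. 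For \textbf{Invoke} specifically, the \textbf{CHOICE} tail carries $cta$ and the list $cts$, which are recovered by injectivity of \textbf{CHOICE} and $map$ and then by the induction hypothesis on $\textbf{compile}(cta)$ and on each $\textbf{compile}(ct)$.

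The step I expect to be the main obstacle is this parameter recovery, and specifically the auxiliary injectivity facts it rests on: that $targets\_sat$ and $fire\_sources$ embed the flow-element data injectively (so that equal guards and bodies force equal $targets\ fls$ and $sources\ fls$, hence equal $fls$), and that the event-name construction from a tag and the triple $(ptl,ptt,opn)$ is injective (so that equal names force equal tags and equal triples). These are not automatic—string concatenation is not injective in general—so they must be discharged as separate lemmas from the concrete definitions of the name encoding and of the helper functions in the formalization. Once these injectivity facts are available, every case of the induction closes routinely, and the mutual induction delivers injectivity of \textbf{compile} on both \emph{Activity} and \emph{EventHandler}.
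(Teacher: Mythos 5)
The paper never spells out a proof of this lemma: it is asserted with the one-sentence justification that the ``different event names and bodies'' of the translated events make \textbf{compile} injective, the details being left to the Isabelle sources. Your mutual structural induction over \emph{Activity} and \emph{EventHandler}, with case analysis on the outermost {\slang} constructor and separate auxiliary injectivity lemmas for the event-name encoding and for $targets\_sat$, $fire\_sources$ and $Basic$, is exactly the shape such a proof must take, and you are right that the load-bearing steps are those auxiliary lemmas rather than the induction skeleton. On every constructor except \textbf{Invoke} your argument goes through as described.

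The gap is in the \textbf{Invoke} case. The translation in {\figprefix} \ref{fig:compile_invoke} builds the \textbf{CHOICE} tail from $\textbf{map}\ snd\ cts$, i.e.\ it projects away the $QName$ component of every catch clause before compiling. The target term therefore determines only the list of handler \emph{activities}, not the fault names they are attached to: two \textbf{Invoke} activities whose $cts$ lists differ only in their $QName$ keys compile to syntactically identical {\slang} terms. Your claim that $cts$ is ``recovered by injectivity of \textbf{CHOICE} and $map$ and then by the induction hypothesis'' in fact recovers only $\textbf{map}\ snd\ cts$, so the induction does not close this case as stated. To repair it you must either show that the formalization folds the fault $QName$ into the name of the corresponding $\textbf{Event}_{fail}$ (which the paper's remark about ``different event names'' hints at but {\figprefix} \ref{fig:compile_invoke} does not show), or weaken the lemma to injectivity up to catch names. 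A smaller caveat of the same kind applies to your recovery of $fls$: guards and bodies are semantic objects (state sets and state transformers), so injectivity of $targets\_sat$ and $fire\_sources$ is a genuine obligation that can fail for degenerate flow elements (e.g.\ an empty \emph{targets} list versus \emph{None}); you flag this correctly as the main obstacle, but it deserves the same scrutiny as the $cts$ issue rather than being deferred wholesale to ``separate lemmas.''
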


\subsection{Correctness Proof of the Translation}

\newcommand{\traceeq}[2]{{#1} \approxeq {#2}}
\newcommand{\picorebpeleq}[2]{\symbenv \vdash {#1} \approxeq {#2}}

To show the correctness of the translation, we first give the intuition behind the semantic equivalence between a BPEL program and a {\slang} Specification. 


A computation $c_{\symbevtsys}$ of an event system $\symbevtsys$ is equivalent to a computation $c_B$ of a BPEL program $B$, denoted as $\traceeq{c_B}{c_{\symbevtsys}}$, that is defined as follows. 
\[
\left\{
\begin{aligned}
& \traceeq{Nil}{Nil} = True \\
& \traceeq{(B,s)\# tr1}{(\symbevtsys,t)\# tr2} = (compile(B) = \symbevtsys \wedge s = t \wedge \traceeq{tr1}{tr2}) \\
& \mathbf{otherwise} ... = False
\end{aligned}
\right.
\]

Computations of BPEL programs are defined similarly to the linear definition in Subsection \ref{subsect:comp}, thus we use $\compfun(B,s)$ to denote the set of computations of a BPEL program $B$ from an initial state $s$. Then the equivalence of a {\slang} specification and a BPEL program is defined by the equivalence of their state traces as follows.

\begin{definition}[Equivalence of BPEL and {\slang}]
\label{def:equiv_bpel_picore}
A {\slang} specification $\symbevtsys$ and a BPEL program $B$ is equivalent from an initial state $\symbstate$ under a configuration $\symbenv$, denoted as $\picorebpeleq{(B,s)}{(\symbevtsys,t)}$, if
\begin{enumerate}
\item $\forall tr \in \compfun(B,s).\ \exists tr' \in \tilde{\compfun}(\symbenv,\symbevtsys,t).\ \traceeq{tr}{tr'}$.
\item $\forall tr' \in \tilde{\compfun}(\symbenv,\symbevtsys,t).\ \exists tr \in \compfun(B,s).\ \traceeq{tr}{tr'}$. 
\end{enumerate}
\end{definition}


The equivalence above is defined directly over state traces of programs, which is intuitive but not compositional. 
To decompose the correct proof of the translation into each execution step of BPEL activities, we use a bisimulation relation considering each step of in a BPEL and {\slang} trace to represent the equivalence of a BPEL program and its translated {\slang} specification. Moreover, the bisimulation we defined is sound and complete w.r.t. the equivalence of state traces. 

The bisimulation relation is coinductively defined as follows. Since BPEL and {\slang} have the same state definition, the state relation in the bisimulation is the equivalent relation ``=''. 

\begin{definition}[Bisimulation of BPEL and {\slang}]
\label{def:bisim}
A BPEL program $B$ and a {\slang} specification $\symbevtsys$ are bisimilar from the same starting state $s$ under a configuration $\symbenv$, denoted as $\bisim{B}{s}{\symbevtsys}{s}$, when the following are true.
\begin{enumerate}
\item for arbitrary $B'$ and $t$, if $\bpeltran{B}{s}{B'}{t}$, then there exists $\symbevtsys'$ such that $\exists \etranlabel.\ \estrannx{\symbevtsys}{s}{\etranlabel}{\symbevtsys'}{t}$ and $\bisim{B'}{t}{\symbevtsys'}{t}$. 
\item for arbitrary $\symbevtsys'$ and $t$, if $\exists \etranlabel.\ \estrannx{\symbevtsys}{s}{\etranlabel}{\symbevtsys'}{t}$, then there exists $B'$ such at $\bpeltran{B}{s}{B'}{t}$ and $\bisim{B'}{t}{\symbevtsys'}{t}$. 
\item $\symbevtsys = \textbf{compile}(B)$.
\end{enumerate}
\end{definition}

The bisimulation is sound and complete w.r.t. the state trace equivalence between BPEL and {\slang}. The soundness and completeness of the bisimulation are shown as the following theorem, where {\lemmaprefix} \ref{lm:inj_compile} is a necessary condition for the completeness proof. 
\begin{theorem}[Soundness and Completeness of Bisimulation]
$\picorebpeleq{(B,s)}{(\symbevtsys,s)}$ iff $\bisim{B}{s}{\symbevtsys}{s}$. 
\end{theorem}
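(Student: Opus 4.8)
The statement is an ``iff'', so the plan is to prove the two implications separately: \emph{soundness} of the bisimulation ($\bisim{B}{s}{\symbevtsys}{s} \Longrightarrow \picorebpeleq{(B,s)}{(\symbevtsys,s)}$) and its \emph{completeness} ($\picorebpeleq{(B,s)}{(\symbevtsys,s)} \Longrightarrow \bisim{B}{s}{\symbevtsys}{s}$). Since the bisimulation is defined coinductively while trace equivalence quantifies over (possibly infinite) computations, the natural proof engines are trace construction for soundness and coinduction for completeness. A recurring invariant in both directions is that matched configurations have equal states and satisfy $\symbevtsys = \textbf{compile}(B)$.

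For soundness, assume $\bisim{B}{s}{\symbevtsys}{s}$ and prove the two trace inclusions of Definition~\ref{def:equiv_bpel_picore}. Take any BPEL computation $tr \in \compfun(B,s)$ and build an equivalent {\slang} computation $tr'$ configuration by configuration, maintaining bisimilarity of the current heads. At a component step $\bpeltran{B_i}{s_i}{B_{i+1}}{s_{i+1}}$, clause (1) of the bisimulation yields a matching $\symbevtsys'$ with $\exists \etranlabel.\ \estrannx{\symbevtsys_i}{s_i}{\etranlabel}{\symbevtsys'}{s_{i+1}}$ and $\bisim{B_{i+1}}{s_{i+1}}{\symbevtsys'}{s_{i+1}}$; at an environment step the program is unchanged, so I keep the same specification (legal since $\symbevtsys_i = \textbf{compile}(B_i) = \textbf{compile}(B_{i+1})$) and take a {\slang} environment transition. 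Because states coincide and $\symbevtsys_i = \textbf{compile}(B_i)$ holds at every index, the resulting $tr'$ satisfies $\traceeq{tr}{tr'}$; infinite $tr$ are handled by defining $tr'$ as the limit of its finite prefixes (or by a direct coinductive construction). The second inclusion is fully symmetric, using clause (2) of the bisimulation.

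For completeness I would show that the relation $R = \{(B,s,\symbevtsys,s) \mid \picorebpeleq{(B,s)}{(\symbevtsys,s)}\}$ is a bisimulation, and conclude by coinduction. Clause (3), $\symbevtsys = \textbf{compile}(B)$, is immediate from the first configurations of any pair of equivalent traces via the first clause in the definition of $\traceeq{}{}$. For clause (1), given $\bpeltran{B}{s}{B'}{t}$, prepend this step to an arbitrary computation of $B'$ from $t$ to obtain a computation of $B$ from $s$; by equivalence there is a matching {\slang} trace whose second configuration is $(\symbevtsys', t)$ with $\symbevtsys' = \textbf{compile}(B')$. Here \lemmaprefix~\ref{lm:inj_compile} is essential: since $B \ne B'$ forces $\textbf{compile}(B) \ne \textbf{compile}(B')$, the matched first transition cannot be an environment step (which leaves the specification fixed), so it must be the component transition $\estrannx{\symbevtsys}{s}{\etranlabel}{\symbevtsys'}{t}$ we need. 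Stripping the leading step then shows $\picorebpeleq{(B',t)}{(\symbevtsys',t)}$, i.e.\ $R$ relates the residuals; clause (2) is proved symmetrically.

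The main obstacle is the completeness direction, and within it the ``residual'' argument: turning \emph{whole-trace} equivalence into a \emph{single-step} matching that also preserves equivalence of the continuations. Two points require care. First, classifying a matched step as a component rather than an environment transition relies crucially on injectivity of \textbf{compile}; without \lemmaprefix~\ref{lm:inj_compile} a genuine BPEL step could be spuriously matched by a {\slang} environment step, breaking the argument. Second, the prefix/limit bookkeeping for infinite computations must be discharged cleanly so that the coinductive rule applies; in the mechanization this is where most of the effort concentrates, whereas the soundness direction is a comparatively routine trace construction.
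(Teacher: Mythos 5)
Your proposal is correct and follows essentially the same route as the paper: soundness by inductively constructing (per configuration) a matching trace from the bisimulation clauses, and completeness by coinduction, prepending the given step to an arbitrary computation of the residual to extract the matching single step and the equivalence of continuations, with \lemmaprefix~\ref{lm:inj_compile} doing the same work of pinning down the matched {\slang} transition. The only cosmetic difference is your worry about infinite computations; since the paper's computations are inductively defined finite lists, the prefix/limit bookkeeping you anticipate is not needed.
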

\begin{proof}
The following two cases prove the theorem: 
\begin{enumerate}
\item \emph{soundness}: it is sufficient to show that if $\bisim{B}{s}{\symbevtsys}{s}$, then 

\begin{enumerate}
\item for an arbitrary computation $tr \in \compfun(B,s)$, there exist $\symbcomp \in \tilde{\compfun}(\symbenv,\symbevtsys,s)$ such that $\traceeq{tr}{\symbcomp}$. 

\item for an arbitrary computation $\symbcomp \in \tilde{\compfun}(\symbenv,\symbevtsys,s)$, there exist $tr \in \compfun(B,s)$ such that $\traceeq{tr}{\symbcomp}$.
\end{enumerate}
The two propositions could be proved by induction of the list $tr$ and $\symbcomp$ respectively. 

\item \emph{completeness}: by coinduction of the bisimulation, we only need to prove the following three propositions under the assumption $\picorebpeleq{(B,s)}{(\symbevtsys,s)}$:

\begin{enumerate}
\item if $\bpeltran{B}{s}{B'}{s'}$, there exists $\symbevtsys'$ such that $\exists \etranlabel.\ \estrannx{\symbevtsys}{s}{\etranlabel}{\symbevtsys'}{s'}$ and $\picorebpeleq{(B',s')}{(\symbevtsys',s')}$. We prove it by $\symbevtsys' = compile(B')$. 
\begin{enumerate}
\item from $\bpeltran{B}{s}{B'}{s'}$ we have that $[(B,s),(B',s')] \in \compfun(B,s)$. With the assumption we have $[(\symbevtsys,s),(\symbevtsys',s')] \in \tilde{\compfun}(\symbenv,\symbevtsys,s)$. Thus, $\exists \etranlabel.\ \estrannx{\symbevtsys}{s}{\etranlabel}{\symbevtsys'}{s'}$. 

\item to prove $\picorebpeleq{(B',s')}{(\symbevtsys',s')}$, we need to show that
\begin{enumerate}
\item \label{item:A}
for arbitrary $tr \in \compfun(B',s')$, there exists $\symbcomp \in \tilde{\compfun}(\symbenv,\symbevtsys',s')$ such that $\traceeq{tr}{\symbcomp}$. With $\bpeltran{B}{s}{B'}{s'}$, we have that $(B,s)\#tr \in \compfun(B,s)$. With the assumption $\picorebpeleq{(B,s)}{(\symbevtsys,s)}$, we obtain a computation $\symbcomp' \in \tilde{\compfun}(\symbenv,\symbevtsys,s)$ and $\traceeq{(B,s)\#tr}{\symbcomp'}$. Then it is proved by $\symbcomp = tl\ \symbcomp'$, where $\symbcomp$ is the tail of the list $\symbcomp'$. 

\item for arbitrary $\symbcomp \in \tilde{\compfun}(\symbenv,\symbevtsys',s')$, there exists $tr \in \compfun(B',s')$ such that $\traceeq{tr}{\symbcomp}$. It could be proved by a reversed way using the injection of the translation ({\lemmaprefix} \ref{lm:inj_compile}). 
\end{enumerate}

\end{enumerate}

\item if $\exists \etranlabel.\ \estrannx{\symbevtsys}{s}{\etranlabel}{\symbevtsys'}{s'}$, there exist $B'$ such that $\bpeltran{B}{s}{B'}{s'}$ and $\picorebpeleq{(B',s')}{(\symbevtsys',s')}$. It could be similarly proved as above by $\symbevtsys' = compile(B')$. 

\item $\symbevtsys = compile(B)$. Its straightforward from the assumption and {\defprefix} \ref{def:equiv_bpel_picore}. 
\end{enumerate}
\end{enumerate}
\end{proof}

Finally, we have proved the theorem of correct translation from BPEL to {\slang} as follows. 
\begin{theorem}[Correctness of Translation]
$\forall B\ s.\ \bisim{B}{s}{compile(B)}{s}$
\end{theorem}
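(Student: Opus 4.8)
The plan is to establish the result by coinduction, exhibiting an explicit relation and verifying that it is a {\slang}/BPEL bisimulation in the sense of {\defprefix}~\ref{def:bisim}. The natural candidate is
\[
R = \{\, ((B,s),(compile(B),s)) \mid B \text{ a BPEL program},\ s \text{ a state} \,\},
\]
since the theorem is precisely the statement that every such pair lies in the greatest bisimulation. Because $\simeq$ is coinductively defined, to show $R \subseteq {\simeq}$ it suffices to check that $R$ is itself a bisimulation, i.e.\ that its three clauses hold for an arbitrary $((B,s),(compile(B),s)) \in R$. Clause~(3), $\symbevtsys = compile(B)$, is immediate from the shape of $R$, so the entire argument reduces to two single-step correspondence lemmas between the BPEL operational semantics ({\figprefix}~\ref{fig:bpel_semantics}) and the context-free event transition $\rightarrow_{\tilde{e}}$ of the translated specification ({\figprefix}~\ref{fig:semantics}).

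First I would prove the \emph{forward} correspondence: if $\bpeltran{B}{s}{B'}{t}$ then $\exists \etranlabel.\ \estrannx{compile(B)}{s}{\etranlabel}{compile(B')}{t}$, so the successor again lies in $R$. This goes by rule induction on the BPEL transition (equivalently, structural induction on $B$), inspecting the translation of each constructor in {\tableprefix}~\ref{tbl:translate_basic} and {\figprefix}~\ref{fig:compile_invoke}. For the basic activities (\textbf{Receive}, \textbf{Reply}, \textbf{Assign}, \textbf{Wait}, \textbf{Empty}) the single BPEL step is matched by the single step of the corresponding atomic event, with the guard \emph{targets\_sat} and the source-firing body reproducing the BPEL side conditions. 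For the structured activities the match is driven by the fact that $compile$ commutes with the syntactic residuals produced by the semantics: $compile(\textbf{Seq}\ A_1\ A_2)=\evtseq{compile(A_1)}{compile(A_2)}$ lines up with \semrulename{EvtSeq}/\semrulename{EvtSeqFin} (using that $compile$ of the terminator is $\fin$); $compile(\textbf{While}\ c\ A)=\evtrec{c}{compile(A)}$ unfolds to $\evtseq{compile(A)}{\evtrec{c}{compile(A)}}$, matching \semrulename{EvtIterT}/\semrulename{EvtIterF}; and \textbf{Pick}/\textbf{Flow} map onto \semrulename{EvtChoice1}/\semrulename{EvtChoice2} and \semrulename{EvtJoin1}/\semrulename{EvtJoin2}/\semrulename{EvtJoinFin} respectively.

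Second I would prove the \emph{backward} correspondence: if $\estrannx{compile(B)}{s}{\etranlabel}{\symbevtsys'}{t}$ then there is $B'$ with $\bpeltran{B}{s}{B'}{t}$ and $\symbevtsys'=compile(B')$. This runs by rule induction on the event transition; the additional ingredient relative to the forward direction is that one must \emph{recover} the BPEL residual $B'$ from a step of a compiled term. Here the injectivity of $compile$ ({\lemmaprefix}~\ref{lm:inj_compile}) is used to identify the event-system successor $\symbevtsys'$ uniquely as $compile(B')$ and thereby reconstruct $B'$, e.g.\ to split a residual $\evtseq{\symbevtsys'_1}{compile(A_2)}$ back into a \textbf{Seq}. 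The main obstacle is the \textbf{Invoke} activity, whose translation is the choice $\textbf{Event}_{suc} \oplus \lchoice{(\dots)}$ combining the success event with a \lchoice{}-list ranging over the custom and default fault handlers. Matching the nondeterministic handler selection of \bpelsemrulename{InvokeFault} against this nested $\oplus$/\lchoice{} structure requires an auxiliary induction on the handler list, together with the observation that atomicity of the compiled event (a single $\rightarrow_{\tilde{e}}$ step) coincides, in both directions, with the single-step BPEL semantics of \textbf{Invoke}, and that firing the source links in the event body realises exactly \emph{fire\_sources}. Once both correspondences are established, $R$ satisfies {\defprefix}~\ref{def:bisim}, and instantiating at $(B,s)$ yields $\bisim{B}{s}{compile(B)}{s}$ for all $B$ and $s$.
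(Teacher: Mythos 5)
Your proposal is correct and follows essentially the same route as the paper: coinduction on the bisimulation reducing the claim to a forward single-step correspondence (proved by induction on $B$, with \textbf{Invoke} as the hardest case) and a backward correspondence recovering $B'$ from a step of $compile(B)$, with clause (3) immediate. The only elaboration beyond the paper's sketch is your explicit appeal to {\lemmaprefix}~\ref{lm:inj_compile} in the backward direction, which is a reasonable way to make precise the paper's remark that the second case is ``proved in a similar approach.''
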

\begin{proof}
By coinduction of the bisimulation, it is sufficient to prove the following propositions: 
\begin{enumerate}
\item if $\bpeltran{B}{s}{B'}{s'}$ then $\exists \etranlabel.\ \estrannx{compile(B)}{s}{\etranlabel}{compile(B')}{s'}$. It could be proved by induction of $B$ and then the preservation on each type of BPEL activity and its compiled {\slang} specification, among which the proof of the \textbf{invoke} activity is the most complex. It is proved by the cases of transitions rules of \textbf{invoke}: 
\begin{enumerate}
\item \bpelsemrulename{InvokeSuc} in {\figprefix} \ref{fig:bpel_semantics}: the \textbf{Event\isactrlsub {succ}} in $\textbf{compile}(\textbf{Invoke} \ fls\ ptl\ ptt\ opn\ spc\ cts\ cta)$ in {\figprefix} \ref{fig:compile_invoke} simulates the successful invocation. Furthermore, we have $$\exists \etranlabel.\ \estrannx{\textbf{Event\isactrlsub {succ}}}{s}{\etranlabel}{\fin}{s'}$$ and $$\textbf{compile}(\bpelfin) = \fin$$

\item \bpelsemrulename{InvokeFault} in {\figprefix} \ref{fig:bpel_semantics}: we further split the case to two cases, i.e. the fault is caught by the default handler and by special handlers. The first case is simulated by $\evtseq{ \textbf{Event\isactrlsub {fail}} }{ \textbf{compile}(cta) }$ in $\textbf{compile}(\textbf{Invoke} \ fls\ ptl\ ptt\ opn\ spc\ cts\ cta)$ in {\figprefix} \ref{fig:compile_invoke}. We have $$\bpeltran{\textbf{Invoke}\ fls\ ptl\ ptt\ opn\ spc\ cts\ cta}{s}{cta}{s'}$$ and $$\exists \etranlabel.\ \estrannx{\evtseq{ \textbf{Event\isactrlsub {fail}} }{ \textbf{compile}(cta) }}{s}{\etranlabel}{\textbf{compile}(cta)}{s'}$$
A similar way proves the second case. 

\end{enumerate}

\item if $\estrannx{compile(B)}{s}{\etranlabel}{Q}{s'}$, there exist $B'$ such that $\bpeltran{B}{s}{B'}{s'}$ and $Q = compile(B')$. It could be proved in a similar approach as the first case. 
\end{enumerate}

\end{proof}

\section{Evaluation and Discussion}
\label{sect:eval}

\subsection{Evaluation}

We use Isabelle/HOL as the specification and verification system. All derivations of our proofs have passed through the Isabelle proof kernel.
{\tableprefix} \ref{tbl:stat} shows the statistics for the effort and size of the proofs in the Isabelle/HOL theorem prover. In total, the models and mechanized verification consists of $\approx$ 28,400 lines of specification and proofs (\emph{LOSP}), and the total effort is $\approx$ 24 person-months (\emph{PM}). The specification and proof of {\slang} are reusable for the verification of other systems. 

We use $\approx$ 7,400 lines of specification and proof to develop the {\slang} framework, which takes 8 \emph{PM}s. 
The {\implang} language and its rely-guarantee proof system consist of $\approx$ 2,400 \emph{LOSP}, and {\csimpllang} $\approx$ 15,000 \emph{LOSP}. The two parts of specification and proof are completely reused in {\pcimp} and {\pccsimpl} respectively. 
The adapter of {\implang} is $\approx$ 660 \emph{LOSP} including new proof rules and their soundness as well as a concrete syntax. The adapter of {\csimpllang} is $\approx$ 470 \emph{LOSP}. The two {\slang} instances takes about 2 \emph{PM}s. 

We develop $\approx$ 17,600 \emph{LOSP} for the Zephyr case study, 40 times more than the lines of the C code due to the in-kernel concurrency, where invariant proofs represent the largest part. The case study takes 12 \emph{PM}s. The BPEL case study takes 2,200 LOSP and 2 \emph{PM}s.

\begin{table}[t]
\centering
\footnotesize
\caption{Specification and Proof Statistics} 
\begin{tabular} {|c|r|c||c|r|c|}
\hline
\multicolumn{3}{|c||}{\textbf{{\slang} Framework}} & \multicolumn{3}{c|}{\textbf{{\csimpllang} and {\implang} Adapters}} \\
\hline
\textbf{Item} 					  & \textbf{LOSP}  & \textbf{PM} & \textbf{Item} & \textbf{LOSP} & \textbf{PM} \\
\hline
\textit{Language and Semantics}   & 300 	& \multirow{6}{*}{8} & \textit{{\implang} extension} & 530 & \multirow{6}{*}{2} \\
\cline{1-2} \cline{4-5} 
\textit{Computation} 			  & 2,270 	& & \textit{{\implang} Adapter} & 130 & \\
\cline{1-2} \cline{4-5} 
\textit{Validity and Proof Rules} & 280 	& & \textit{{\csimpllang} Adapter} & 470 & \\
\cline{1-2} 
\textit{Soundness}                & 4,200 	& &							    &  & \\
\cline{1-2} 
\textit{Invariant/other lemmas}   & 380 	& & 							&  & \\
\cline{1-2} \cline{4-5} 
\textbf{Total}                    & 7,430 	& & \textbf{Total} 				& 1,130 & \\
\hline \hline

\multicolumn{3}{|c||}{\textbf{Zephyr Memory Case}} & \multicolumn{3}{c|}{\textbf{BPEL Case}} \\
\hline
\textbf{Item} 					  	& \textbf{LOSP}  & \textbf{PM} & \textbf{Item} & \textbf{LOSP} & \textbf{PM} \\
\hline
\textit{Specification} 				& 400 	& \multirow{5}{*}{12} & Syntax 				& 120 & \multirow{5}{*}{2}\\
\cline{1-2} \cline{4-5} 
\textit{Auxiliary Lemmas/Invariant} & 1,700 	& 					  & Semantics			& 160 & \\
\cline{1-2} \cline{4-5} 
\textit{Proof of Allocation} 		& 10,600 & 					  & Translation			& 120 & \\
\cline{1-2} \cline{4-5} 
\textit{Proof of Free} 				& 4,950 	&					  & Correctness Proof	& {1,800} & \\
\cline{1-2} \cline{4-5} 
\textbf{Total} 						& 17,650 &					  & \textbf{Total} 		& {2,200} & \\
\hline
\end{tabular}
\label{tbl:stat}
\end{table} 

\subsection{Further Related Work}

\paragraph{Rely-guarantee Approach and Mechanization}
The rely-guarantee approach has been mechanized in Isabelle/HOL (e.g. \cite{Nieto03,Stephan15,Hayes16FM,Hayes16FAC,Sanan17}) and Coq (e.g. \cite{LiangFF12,Moreira13}). 
In \cite{Hayes16FM,Hayes16FAC}, an abstract algebra of atomic steps is developed, and rely/guarantee concurrency is an interpretation of the algebra. 
To allow a meaningful comparison of rely-guarantee semantic models, two abstract models for rely-guarantee are developed and mechanized in \cite{Stephan15}. The two works do not consider the concrete imperative languages for rely-guarantee. The works \cite{Nieto03,Moreira13} mechanize the rely-guarantee approach for simple imperative languages. Later, a rely-guarantee proof system for {\csimpllang} \cite{Sanan17}, a generic and realistic imperative language by extending \emph{Simpl}, is developed in Isabelle/HOL. These mechanizations focus on imperative languages for pure programs. Two of them \cite{Nieto03,Sanan17} are mechanized in Isabelle/HOL and they have been integrated into {\slang}. 


\paragraph{Rely-guarantee Reasoning about Event-based Systems}
Refinement of reactive systems \cite{Back96} and the subsequent Event-B approach \cite{Abrial07} propose a refinement-based formal method for system-level modeling and analysis. 
Different from Event-B where the execution of events are atomic, we consider the interleaved semantics of events and relax the atomicity of them in {\slang}. 
In \cite{Hoang10}, an Event-B model is created to mimic rely-guarantee style reasoning for concurrent programs, but not to provide a rely-guarantee framework for Event-B. 
The rely-guarantee reasoning for event-based applications has been studied in \cite{Dingel98fac,Garlan98fse,Fenkam03fme,Fenkam03fase}. 
The definition of events is similar to {\slang}. They extend a simple, sequential, imperative language by primitives for announcing and consuming events, \emph{announce(e)} and \emph{consume(e(x))} where \emph{e} is an event. Therefore, events are triggered by imperative programs in another event. This is very different from the reactive semantics in {\slang}, which supports complex reaction structures to simulate real reactive systems. Moreover, the language to specify events in these works is a simple imperative language, whilst {\slang} has an open interface for the integration and reusability of different languages and frameworks. 


\paragraph{Formal Verification of Memory Management}
Memory models \cite{Saraswat07} provide the necessary abstraction to separate the behaviour of a program from the behaviour of the memory it reads and writes. There are several formalizations of memory models in the literature \cite{Leroy2008,Tews2009,Gallardo2009,Sevcik13,Mansky15}, where some of them only create an abstract specification of the services for memory allocation and release \cite{Gallardo2009,Sevcik13,Mansky15}. 
Formal verification of OS memory management has been studied in CertiKOS\cite{Vaynberg12,Gu16}, seL4 \cite{Klein04,Klein09}, Verisoft \cite{Alkassar08}, and in the hypervisors from~\cite{Blan15,Bolig16}, where only the works in~\cite{Gu16,Blan15} consider concurrency. Comparing to buddy memory allocation, the data structures and algorithms verified in \cite{Gu16} are relatively simpler, without block split/coalescence and multiple levels of free lists and bitmaps. \cite{Blan15} only considers virtual mapping but not allocation or deallocation of memory areas. 
Algorithms and implementations of dynamic memory allocation have been formally specified and verified in an extensive number of works \cite{Yu03,Fang17a,Marti06,Su16,Fang17b,Fang18}. However, the buddy memory allocation is only studied in \cite{Fang18}, which does not consider concrete data structures (e.g. bitmaps) and concurrency. The Zephyr case study in this article presents the first formal specification and mechanized proof for a concurrent buddy memory allocation of a realistic operating system. 

\paragraph{Formal Verification of BPEL}
In recent years, how to increase the dependability of complex Web services composition has become a key problem. Thus numerous approaches have been proposed to formally specify and verify service compositions \cite{Campos14}. 
Many works attempt to formally verify BPEL programs by manually modeling BPEL using verification tools (e.g. \cite{Fu04,Babin17,Stach18}) and by model transformation (e.g.\cite{Zhu17}), however without proof of correctness of the modeling and transformation. 
The rely-guarantee approach has been applied to BPEL in \cite{Zhuhb12}. The authors proposed the rely-guarantee proof rules for each BPEL activity. 
Compare to these works, this article presents the first verified translation from BPEL to {\slang}, and thus conduct the formal verification of BPEL by reusing the {\slang} proof systems.



\section{Conclusion and Future Work}
\label{sect:conclusion}

In this article, we propose an event-based rely-guarantee framework for concurrent reactive systems. This framework is open to the specification of event behaviours. It provides a rely-guarantee interface to integrate systems for specification and reasoning at that level that eases formal methods reusability. We have mechanized the integration of the {\implang} and {\csimpllang} languages and their proof systems into {\slang} in the Isabelle/HOL theorem prover. We show the simplicity of events to represent concurrent reactive systems and the usefullness of {\slang} for realistic systems in the verification of the concurrent buddy memory allocation of Zephyr RTOS. We also show the expressiveness of reaction structures in {\slang} to model complex business processes in the BPEL case study. 

Currently we are working on using {\slang} on the formal verification of noninterference properties on concurrent reactive systems. As future work, we plan to extend {\slang} to support step-wise refinement. More domain-specific reasoning frameworks, such as interrupts of multicore platforms and messaging systems for autonomous vehicles, are under development using {\slang} and its instances.

%


%
\bibliographystyle{ACM-Reference-Format}
\bibliography{paperref}

%
\appendix

\section{C Code of \emph{k\_mem\_pool\_free}}
\label{appx:mempoolfree_c}

\lstdefinestyle{customc}{
  belowcaptionskip=1\baselineskip,
  breaklines=true,
  frame=single, 
  xleftmargin=0pt, 
  language=C,
  numbers=left,
  stepnumber=1,
  showstringspaces=false,
  basicstyle=\scriptsize\ttfamily, 
  keywordstyle=\bfseries\color{green!40!black},
  commentstyle=\itshape\color{purple!40!black},
  identifierstyle=\color{blue},
  stringstyle=\color{orange},
}
\lstset{escapechar=@,style=customc}
\begin{lstlisting}
static void free_block(struct k_mem_pool *p, int level, size_t *lsizes, int bn)
{
  int i, key, lsz = lsizes[level];
  void *block = block_ptr(p, lsz, bn);

  key = irq_lock();

  set_free_bit(p, level, bn);

  if (level && partner_bits(p, level, bn) == 0xf) {
    for (i = 0; i < 4; i++) {
      int b = (bn & ~3) + i;

      clear_free_bit(p, level, b);
      if (b != bn && block_fits(p, block_ptr(p, lsz, b), lsz)) {
        sys_dlist_remove(block_ptr(p, lsz, b));
      }
    }

    irq_unlock(key);
    free_block(p, level-1, lsizes, bn / 4); /* tail recursion! */
    return;
  }

  if (block_fits(p, block, lsz)) {
    sys_dlist_append(&p->levels[level].free_list, block);
  }

  irq_unlock(key);
}

void k_mem_pool_free(struct k_mem_block *block)
{
  int i, key, need_sched = 0;
  struct k_mem_pool *p = get_pool(block->id.pool);
  size_t lsizes[p->n_levels];

  /* As in k_mem_pool_alloc(), we build a table of level sizes
  * to avoid having to store it in precious RAM bytes.
  * Overhead here is somewhat higher because free_block()
  * doesn't inherently need to traverse all the larger
  * sublevels.
  */
  lsizes[0] = _ALIGN4(p->max_sz);
  for (i = 1; i <= block->id.level; i++) {
    lsizes[i] = _ALIGN4(lsizes[i-1] / 4);
  }

  free_block(get_pool(block->id.pool), block->id.level, lsizes, block->id.block);

  /* Wake up anyone blocked on this pool and let them repeat
   * their allocation attempts
   */
  key = irq_lock();

  while (!sys_dlist_is_empty(&p->wait_q)) {
    struct k_thread *th = (void *)sys_dlist_peek_head(&p->wait_q);

    _unpend_thread(th);
    _abort_thread_timeout(th);
    _ready_thread(th);
    need_sched = 1;
  }

  if (need_sched && !_is_in_isr()) {
    _reschedule_threads(key);
  } else {
    irq_unlock(key);
  }
}
\end{lstlisting}

\section{Specification and Proof Sketch of \emph{k\_mem\_pool\_free}}
\label{appx:mempoolfree}

The formal specification of \emph{k\_mem\_pool\_free} (in \emph{black} color) and its rely-guarantee proof sketch (in \emph{blue} color) are shown as follows. 

\vspace{5mm}
\isabellestyle{sl} 
\begin{isabellec} \fontsize{8pt}{0cm} 
\specrg{\isacommand{Mem{\isacharunderscore}pool{\isacharunderscore}free{\isacharunderscore}pre}\ t\ {\isasymequiv}\ {\isasymlbrace} {\isasymacute}inv\ {\isasymand}\ {\isasymacute}allocating{\isacharunderscore}node\ t\ {\isacharequal}\ None\ {\isasymand}\ {\isasymacute}freeing{\isacharunderscore}node\ t\ {\isacharequal}\ None{\isasymrbrace}}

\stmtevthead{mem\_pool\_free}{Block\ b}{($\mathcal{T}$\ t)}

\isacommand{WHEN}\isanewline
\quad pool\ b\ {\isasymin}\ {\isasymacute}mem{\isacharunderscore}pools\ \isanewline
\quad {\isasymand}\ level\ b\ {\isacharless}\ length\ {\isacharparenleft}levels\ {\isacharparenleft}{\isasymacute}mem{\isacharunderscore}pool{\isacharunderscore}info\ {\isacharparenleft}pool\ b{\isacharparenright}{\isacharparenright}{\isacharparenright}\isanewline
\quad {\isasymand}\ block\ b\ {\isacharless}\ length\ {\isacharparenleft}bits\ {\isacharparenleft}levels\ {\isacharparenleft}{\isasymacute}mem{\isacharunderscore}pool{\isacharunderscore}info\ {\isacharparenleft}pool\ b{\isacharparenright}{\isacharparenright}{\isacharbang}{\isacharparenleft}level\ b{\isacharparenright}{\isacharparenright}{\isacharparenright}\isanewline
\quad {\isasymand}\ data\ b\ {\isacharequal}\ block{\isacharunderscore}ptr\ {\isacharparenleft}{\isasymacute}mem{\isacharunderscore}pool{\isacharunderscore}info\ {\isacharparenleft}pool\ b{\isacharparenright}{\isacharparenright}\isanewline
\quad \quad \quad \quad \quad  {\isacharparenleft}{\isacharparenleft}ALIGN{\isadigit{4}}\ {\isacharparenleft}max{\isacharunderscore}sz\ {\isacharparenleft}{\isasymacute}mem{\isacharunderscore}pool{\isacharunderscore}info\ {\isacharparenleft}pool\ b{\isacharparenright}{\isacharparenright}{\isacharparenright}{\isacharparenright}\ div\ {\isacharparenleft}{\isadigit{4}}\ {\isacharcircum}\ {\isacharparenleft}level\ b{\isacharparenright}{\isacharparenright}{\isacharparenright}\ {\isacharparenleft}block\ b{\isacharparenright}\isanewline
\isacommand{THEN}

\quad \specrg{Mem{\isacharunderscore}pool{\isacharunderscore}free{\isacharunderscore}pre\ t\ {\isasyminter}\ {\isasymlbrace} g {\isasymrbrace}}
\quad \speccomment{(* g is the guard condition of the event *)}

\quad \speccomment{{\isacharparenleft}{\isacharasterisk}\ here\ we\ set\ the\ bit\ to\ FREEING{\isacharcomma}\ so\ that\ other\ thread\ cannot\ mem{\isacharunderscore}pool{\isacharunderscore}free\ the\ same\ block\ \isanewline
\ \ \ \ \ \ \ it\ also\ requires\ that\ it\ can\ only\ free\ ALLOCATED\ block\ {\isacharasterisk}{\isacharparenright}}

\quad t\ \isactrlenum \ \isacommand{AWAIT}\ {\isacharparenleft}bits\ {\isacharparenleft}{\isacharparenleft}levels\ {\isacharparenleft}{\isasymacute}mem{\isacharunderscore}pool{\isacharunderscore}info\ {\isacharparenleft}pool\ b{\isacharparenright}{\isacharparenright}{\isacharparenright}\ {\isacharbang}\ {\isacharparenleft}level\ b{\isacharparenright}{\isacharparenright}{\isacharparenright}\ {\isacharbang}\ {\isacharparenleft}block\ b

\quad \quad \quad \quad \quad \quad \quad \quad \quad {\isacharequal}\ ALLOCATED\ \isacommand{THEN}

\quad \quad \quad \quad {\isasymacute}mem{\isacharunderscore}pool{\isacharunderscore}info\ {\isacharcolon}{\isacharequal}\ set{\isacharunderscore}bit{\isacharunderscore}freeing\ {\isasymacute}mem{\isacharunderscore}pool{\isacharunderscore}info\ {\isacharparenleft}pool\ b{\isacharparenright}\ {\isacharparenleft}level\ b{\isacharparenright}\ {\isacharparenleft}block\ b{\isacharparenright}{\isacharsemicolon}{\isacharsemicolon}

\quad \quad \quad \quad {\isasymacute}freeing{\isacharunderscore}node\ {\isacharcolon}{\isacharequal}\ {\isasymacute}freeing{\isacharunderscore}node\ {\isacharparenleft}t\ {\isacharcolon}{\isacharequal}\ Some\ b{\isacharparenright}

\quad \quad \quad \isacommand{END}{\isacharparenright}{\isacharsemicolon}{\isacharsemicolon}

\quad \specrg{\isacommand{mp{\isacharunderscore}free{\isacharunderscore}precond{\isadigit{2}}}\ t\ b\ {\isasymequiv}\ {\isasymlbrace} {\isasymacute}inv\ {\isasymand}\ {\isasymacute}allocating{\isacharunderscore}node\ t\ {\isacharequal}\ None\ {\isasymand} \ g\ {\isasymand}\ {\isasymacute}freeing{\isacharunderscore}node\ t\ {\isacharequal}\ Some\ b{\isasymrbrace}}

\quad t\ \isactrlenum \ {\isasymacute}need{\isacharunderscore}resched\ {\isacharcolon}{\isacharequal}\ {\isasymacute}need{\isacharunderscore}resched{\isacharparenleft}t\ {\isacharcolon}{\isacharequal}\ False{\isacharparenright}{\isacharsemicolon}{\isacharsemicolon}

\quad \specrg{\isacommand{mp{\isacharunderscore}free{\isacharunderscore}precond{\isadigit{3}}}\ t\ b\ {\isasymequiv}\ {\isacharparenleft}mp{\isacharunderscore}free{\isacharunderscore}precond{\isadigit{2}}\ t\ b{\isacharparenright}\ {\isasyminter}\ {\isasymlbrace}{\isasymacute}need{\isacharunderscore}resched\ t\ {\isacharequal}\ False{\isasymrbrace}}

\quad t\ \isactrlenum \ {\isasymacute}lsizes\ {\isacharcolon}{\isacharequal}\ {\isasymacute}lsizes{\isacharparenleft}t\ {\isacharcolon}{\isacharequal}\ {\isacharbrackleft}ALIGN{\isadigit{4}}\ {\isacharparenleft}max{\isacharunderscore}sz\ {\isacharparenleft}{\isasymacute}mem{\isacharunderscore}pool{\isacharunderscore}info\ {\isacharparenleft}pool\ b{\isacharparenright}{\isacharparenright}{\isacharparenright}{\isacharbrackright}{\isacharparenright}{\isacharsemicolon}{\isacharsemicolon}

\quad \specrg{\isacommand{mp{\isacharunderscore}free{\isacharunderscore}precond{\isadigit{4}}}\ t\ b\ {\isasymequiv}\ \isanewline
\quad \quad mp{\isacharunderscore}free{\isacharunderscore}precond{\isadigit{3}}\ t\ b\ {\isasyminter}\ {\isasymlbrace}{\isasymacute}lsizes\ t\ {\isacharequal}\ {\isacharbrackleft}ALIGN{\isadigit{4}}\ {\isacharparenleft}max{\isacharunderscore}sz\ {\isacharparenleft}{\isasymacute}mem{\isacharunderscore}pool{\isacharunderscore}info\ {\isacharparenleft}pool\ b{\isacharparenright}{\isacharparenright}{\isacharparenright}{\isacharbrackright}{\isasymrbrace}}

\quad \isacommand{FOR}\ {\isacharparenleft}t\ \isactrlenum \ {\isasymacute}i\ {\isacharcolon}{\isacharequal}\ {\isasymacute}i{\isacharparenleft}t\ {\isacharcolon}{\isacharequal}\ {\isadigit{1}}{\isacharparenright}{\isacharparenright}{\isacharsemicolon}\  {\isasymacute}i\ t\ {\isasymle}\ level\ b{\isacharsemicolon}\ \ {\isacharparenleft}t\ \isactrlenum \ {\isasymacute}i\ {\isacharcolon}{\isacharequal}\ {\isasymacute}i{\isacharparenleft}t\ {\isacharcolon}{\isacharequal}\ {\isasymacute}i\ t\ {\isacharplus}\ {\isadigit{1}}{\isacharparenright}{\isacharparenright}\ \isacommand{DO}\isanewline
\quad \quad t\ \isactrlenum \ {\isasymacute}lsizes\ {\isacharcolon}{\isacharequal}\ {\isasymacute}lsizes{\isacharparenleft}t\ {\isacharcolon}{\isacharequal}\ {\isasymacute}lsizes\ t\ {\isacharat}\ {\isacharbrackleft}ALIGN{\isadigit{4}}\ {\isacharparenleft}{\isasymacute}lsizes\ t\ {\isacharbang}\ {\isacharparenleft}{\isasymacute}i\ t\ {\isacharminus}\ {\isadigit{1}}{\isacharparenright}\ div\ {\isadigit{4}}{\isacharparenright}{\isacharbrackright}{\isacharparenright}

\quad \isacommand{ROF}{\isacharsemicolon}{\isacharsemicolon}

\quad \specrg{\isacommand{mp{\isacharunderscore}free{\isacharunderscore}precond{\isadigit{5}}}\ t\ b\ {\isasymequiv}\ mp{\isacharunderscore}free{\isacharunderscore}precond{\isadigit{3}}\ t\ b \ {\isasyminter} \isanewline
\quad \quad {\isasymlbrace}{\isacharparenleft}{\isasymforall}ii{\isacharless}length\ {\isacharparenleft}{\isasymacute}lsizes\ t{\isacharparenright}{\isachardot}\ {\isasymacute}lsizes\ t\ {\isacharbang}\ ii\ {\isacharequal}\ {\isacharparenleft}ALIGN{\isadigit{4}}\ {\isacharparenleft}max{\isacharunderscore}sz\ {\isacharparenleft}{\isasymacute}mem{\isacharunderscore}pool{\isacharunderscore}info\ {\isacharparenleft}pool\ b{\isacharparenright}{\isacharparenright}{\isacharparenright}{\isacharparenright}\isanewline
\quad \quad \quad \quad \quad \quad \quad \quad \quad \quad \quad \quad \quad \quad \quad \quad \quad \quad \quad \quad 
div\ {\isacharparenleft}{\isadigit{4}}\ {\isacharcircum}\ ii{\isacharparenright}{\isacharparenright}\ {\isasymand}\ length\ {\isacharparenleft}{\isasymacute}lsizes\ t{\isacharparenright}\ {\isachargreater}\ level\ b{\isasymrbrace}}

\quad \speccomment{{\isacharparenleft}{\isacharasterisk}\ {\isacharequal} {\isacharequal} {\isacharequal} start{\isacharcolon}\ free{\isacharunderscore}block{\isacharparenleft}pool{\isacharcomma}\ level{\isacharcomma}\ lsizes{\isacharcomma}\ block{\isacharparenright}{\isacharsemicolon}\ {\isacharequal} {\isacharequal} {\isacharequal}{\isacharasterisk}{\isacharparenright}}

\quad t\ \isactrlenum \ {\isasymacute}free{\isacharunderscore}block{\isacharunderscore}r\ {\isacharcolon}{\isacharequal}\ {\isasymacute}free{\isacharunderscore}block{\isacharunderscore}r\ {\isacharparenleft}t\ {\isacharcolon}{\isacharequal}\ True{\isacharparenright}{\isacharsemicolon}{\isacharsemicolon}

\quad \specrg{\isacommand{mp{\isacharunderscore}free{\isacharunderscore}precond{\isadigit{6}}}\ t\ b\ {\isasymequiv}\ mp{\isacharunderscore}free{\isacharunderscore}precond{\isadigit{5}}\ t\ b\ {\isasyminter}\ {\isasymlbrace}{\isasymacute}free{\isacharunderscore}block{\isacharunderscore}r\ t\ {\isacharequal}\ True{\isasymrbrace}}

\quad t\ \isactrlenum \ {\isasymacute}bn\ {\isacharcolon}{\isacharequal}\ {\isasymacute}bn\ {\isacharparenleft}t\ {\isacharcolon}{\isacharequal}\ block\ b{\isacharparenright}{\isacharsemicolon}{\isacharsemicolon}

\quad \specrg{\isacommand{mp{\isacharunderscore}free{\isacharunderscore}precond{\isadigit{7}}}\ t\ b\ {\isasymequiv}\ mp{\isacharunderscore}free{\isacharunderscore}precond{\isadigit{6}}\ t\ b\ {\isasyminter}\ {\isasymlbrace}{\isasymacute}bn\ t\ {\isacharequal}\ block\ b{\isasymrbrace}}

\quad t\ \isactrlenum \ {\isasymacute}lvl\ {\isacharcolon}{\isacharequal}\ {\isasymacute}lvl\ {\isacharparenleft}t\ {\isacharcolon}{\isacharequal}\ level\ b{\isacharparenright}{\isacharsemicolon}{\isacharsemicolon}

\quad \specrg{
\isacommand{mp{\isacharunderscore}free{\isacharunderscore}loopinv}\ t\ b\ {\isasymalpha}
}\isanewline
\quad \isacommand{WHILE}\ {\isasymacute}free{\isacharunderscore}block{\isacharunderscore}r\ t\ \isacommand{DO}\isanewline
\quad \specrg{
\isacommand{mp{\isacharunderscore}free{\isacharunderscore}cnd1}\ t\ b\ {\isasymalpha}\ {\isasymequiv} mp{\isacharunderscore}free{\isacharunderscore}loopinv\ t\ b\ {\isasymalpha} {\isasyminter} {\isasymlbrace} {\isasymalpha} {\isachargreater} 0 {\isasymrbrace}
}\isanewline
\quad \quad t\ {\isactrlenum} \ {\isasymacute}lsz\ {\isacharcolon}{\isacharequal}\ {\isasymacute}lsz\ {\isacharparenleft}t\ {\isacharcolon}{\isacharequal}\ {\isasymacute}lsizes\ t\ {\isacharbang}\ {\isacharparenleft}{\isasymacute}lvl\ t{\isacharparenright}{\isacharparenright}{\isacharsemicolon}{\isacharsemicolon}\isanewline
\quad \specrg{
\isacommand{mp{\isacharunderscore}free{\isacharunderscore}cnd2}\ t\ b\ {\isasymalpha}\ {\isasymequiv} mp{\isacharunderscore}free{\isacharunderscore}cnd1 \ t\ b\ {\isasymalpha} {\isasyminter} {\isasymlbrace} {\isasymacute}lsz\ t\ {\isacharequal}\ {\isasymacute}lsizes\ t\ {\isacharbang}\ {\isacharparenleft}{\isasymacute}lvl\ t{\isacharparenright} {\isasymrbrace}
}\isanewline
\quad \quad t\ {\isactrlenum} \ {\isasymacute}blk\ {\isacharcolon}{\isacharequal}\ {\isasymacute}blk\ {\isacharparenleft}t\ {\isacharcolon}{\isacharequal}\ block{\isacharunderscore}ptr\ {\isacharparenleft}{\isasymacute}mem{\isacharunderscore}pool{\isacharunderscore}info\ {\isacharparenleft}pool\ b{\isacharparenright}{\isacharparenright}\ {\isacharparenleft}{\isasymacute}lsz\ t{\isacharparenright}\ {\isacharparenleft}{\isasymacute}bn\ t{\isacharparenright}{\isacharparenright}{\isacharsemicolon}{\isacharsemicolon}\isanewline
\quad \specrg{
\isacommand{mp{\isacharunderscore}free{\isacharunderscore}cnd3}\ t\ b\ {\isasymalpha}\ {\isasymequiv} mp{\isacharunderscore}free{\isacharunderscore}cnd2 \ t\ b\ {\isasymalpha} {\isasyminter} \isanewline
\quad \quad \quad \quad \quad \quad \quad \quad \quad \quad {\isasymlbrace} {\isasymacute}blk\ t\ {\isacharequal}\ block{\isacharunderscore}ptr\ {\isacharparenleft}{\isasymacute}mem{\isacharunderscore}pool{\isacharunderscore}info\ {\isacharparenleft}pool\ b{\isacharparenright}{\isacharparenright}\ {\isacharparenleft}{\isasymacute}lsz\ t{\isacharparenright}\ {\isacharparenleft}{\isasymacute}bn\ t{\isacharparenright} {\isasymrbrace}
}\isanewline
\quad \quad t\ {\isactrlenum} \ \isacommand{ATOM}\isanewline
\quad \quad \specrg{\{V1\} \speccomment{{\isacharparenleft}V1 {\isasymin} mp{\isacharunderscore}free{\isacharunderscore}cnd3 \ t\ b\ {\isasymalpha} {\isasyminter} {\isasymlbrace}{\isasymacute}cur\ {\isacharequal}\ Some\ t{\isasymrbrace}{\isacharparenright}}}
\isanewline
\quad \quad \quad {\isasymacute}mem{\isacharunderscore}pool{\isacharunderscore}info\ {\isacharcolon}{\isacharequal}\ set{\isacharunderscore}bit{\isacharunderscore}free\ {\isasymacute}mem{\isacharunderscore}pool{\isacharunderscore}info\ {\isacharparenleft}pool\ b{\isacharparenright}\ {\isacharparenleft}{\isasymacute}lvl\ t{\isacharparenright}\ {\isacharparenleft}{\isasymacute}bn\ t{\isacharparenright}{\isacharsemicolon}{\isacharsemicolon}\isanewline
\quad \quad \specrg{\{V2\} \speccomment{{\isacharparenleft}V2 = V1{\isasymlparr}mem{\isacharunderscore}pool{\isacharunderscore}info\ {\isacharcolon}{\isacharequal}\isanewline
\quad \quad \quad \quad \quad \quad \quad \quad \quad set{\isacharunderscore}bit{\isacharunderscore}free\ {\isacharparenleft}mem{\isacharunderscore}pool{\isacharunderscore}info\ V1{\isacharparenright}\ {\isacharparenleft}pool\ b{\isacharparenright}\ {\isacharparenleft}lvl\ V1\ t{\isacharparenright}\ {\isacharparenleft}bn\ V1\ t{\isacharparenright}{\isasymrparr}{\isacharparenright}}}
\isanewline
\quad \quad \quad {\isasymacute}freeing{\isacharunderscore}node\ {\isacharcolon}{\isacharequal}\ {\isasymacute}freeing{\isacharunderscore}node\ {\isacharparenleft}t\ {\isacharcolon}{\isacharequal}\ None{\isacharparenright}{\isacharsemicolon}{\isacharsemicolon}\isanewline
\quad \quad \specrg{\{V3\} \speccomment{{\isacharparenleft}V3 = V2{\isasymlparr}freeing{\isacharunderscore}node\ {\isacharcolon}{\isacharequal}\ {\isacharparenleft}freeing{\isacharunderscore}node\ V2{\isacharparenright}{\isacharparenleft}t\ {\isacharcolon}{\isacharequal}\ None{\isacharparenright}{\isasymrparr}{\isacharparenright}}}
\isanewline
\quad \quad \quad \isacommand{IF}\ {\isasymacute}lvl\ t\ {\isachargreater}\ {\isadigit{0}}\ {\isasymand}\ partner{\isacharunderscore}bits\ {\isacharparenleft}{\isasymacute}mem{\isacharunderscore}pool{\isacharunderscore}info\ {\isacharparenleft}pool\ b{\isacharparenright}{\isacharparenright}\ {\isacharparenleft}{\isasymacute}lvl\ t{\isacharparenright}\ {\isacharparenleft}{\isasymacute}bn\ t{\isacharparenright}\ \isacommand{THEN}

\quad \quad \quad
\speccomment{(V3 {\isasymin} {\isasymlbrace}NULL\ {\isacharless}\ {\isasymacute}lvl\ t\ {\isasymand}\ partner{\isacharunderscore}bits\ {\isacharparenleft}{\isasymacute}mem{\isacharunderscore}pool{\isacharunderscore}info\ {\isacharparenleft}pool\ b{\isacharparenright}{\isacharparenright}\ {\isacharparenleft}{\isasymacute}lvl\ t{\isacharparenright}\ {\isacharparenleft}{\isasymacute}bn\ t{\isacharparenright}{\isasymrbrace})}

\quad \quad \specrg{
\isacommand{mergeblock{\isacharunderscore}loopinv}\ V3 \ t\ b\ {\isasymalpha}\ {\isasymequiv} \isanewline
\quad \quad {\isacharbraceleft}V{\isachardot}\ let\ minf{\isadigit{0}}\ {\isacharequal}\ {\isacharparenleft}mem{\isacharunderscore}pool{\isacharunderscore}info\ V3{\isacharparenright}{\isacharparenleft}pool\ b{\isacharparenright}{\isacharsemicolon}
\ lvl{\isadigit{0}}\ {\isacharequal}\ {\isacharparenleft}levels\ minf{\isadigit{0}}{\isacharparenright}\ {\isacharbang}\ {\isacharparenleft}lvl\ V3\ t{\isacharparenright}{\isacharsemicolon}\isanewline
\quad \quad \quad \quad \quad minf{\isadigit{1}}\ {\isacharequal}\ {\isacharparenleft}mem{\isacharunderscore}pool{\isacharunderscore}info\ V{\isacharparenright}{\isacharparenleft}pool\ b{\isacharparenright}{\isacharsemicolon}
\ lvl{\isadigit{1}}\ {\isacharequal}\ {\isacharparenleft}levels\ minf{\isadigit{1}}{\isacharparenright}\ {\isacharbang}\ {\isacharparenleft}lvl\ V3\ t{\isacharparenright}\ in\ \isanewline
\quad \quad \quad \quad {\isacharparenleft}bits\ lvl{\isadigit{1}}\ {\isacharequal}\ list{\isacharunderscore}updates{\isacharunderscore}n\ {\isacharparenleft}bits\ lvl{\isadigit{0}}{\isacharparenright}\ {\isacharparenleft}{\isacharparenleft}bn\ V3\ t\ div\ {\isadigit{4}}{\isacharparenright}\ {\isacharasterisk}\ {\isadigit{4}}{\isacharparenright}\ {\isacharparenleft}i\ V\ t{\isacharparenright}\ NOEXIST{\isacharparenright}\isanewline
\quad \quad \quad \quad {\isasymand}\ {\isacharparenleft}free{\isacharunderscore}list\ lvl{\isadigit{1}}\ {\isacharequal}\ removes\ {\isacharparenleft}map\ {\isacharparenleft}{\isasymlambda}ii{\isachardot}\ block{\isacharunderscore}ptr\ minf{\isadigit{0}}\ {\isacharparenleft}lsz\ V3\ t{\isacharparenright}\isanewline
\quad \quad \quad \quad \quad \quad \quad \quad \quad \quad \quad \quad \quad {\isacharparenleft}{\isacharparenleft}bn\ V3\ t\ div\ {\isadigit{4}}{\isacharparenright}\ {\isacharasterisk}\ {\isadigit{4}}\ {\isacharplus}\ ii{\isacharparenright}{\isacharparenright}\ {\isacharbrackleft}{\isadigit{0}}{\isachardot}{\isachardot}{\isacharless}{\isacharparenleft}i\ V\ t{\isacharparenright}{\isacharbrackright}{\isacharparenright}\ {\isacharparenleft}free{\isacharunderscore}list\ lvl{\isadigit{0}}{\isacharparenright}{\isacharparenright}

\quad \quad \quad \quad {\isasymand}\ {\isacharparenleft}wait{\isacharunderscore}q\ minf{\isadigit{0}}\ {\isacharequal}\ wait{\isacharunderscore}q\ minf{\isadigit{1}}{\isacharparenright}\ {\isasymand}\ {\isacharparenleft}{\isasymforall}t{\isacharprime}{\isachardot}\ t{\isacharprime}\ {\isasymnoteq}\ t\ {\isasymlongrightarrow}\ lvars{\isacharunderscore}nochange\ t{\isacharprime}\ V\ V3{\isacharparenright}

\quad \quad \quad \quad {\isasymand}\ {\isacharparenleft}{\isasymforall}p{\isachardot}\ p\ {\isasymnoteq}\ pool\ b\ {\isasymlongrightarrow}\ mem{\isacharunderscore}pool{\isacharunderscore}info\ V\ p\ {\isacharequal}\ mem{\isacharunderscore}pool{\isacharunderscore}info\ V3\ p{\isacharparenright}\isanewline
\quad \quad \quad \quad {\isasymand}\ {\isacharparenleft}{\isasymforall}j{\isachardot}\ j\ {\isasymnoteq}\ lvl\ V3\ t\ {\isasymlongrightarrow}\ {\isacharparenleft}levels\ minf{\isadigit{0}}{\isacharparenright}{\isacharbang}j\ {\isacharequal}\ {\isacharparenleft}levels\ minf{\isadigit{1}}{\isacharparenright}{\isacharbang}j{\isacharparenright}

\quad \quad \quad \quad {\isasymand}\ {\isacharparenleft}V{\isacharcomma}V3{\isacharparenright}{\isasymin}gvars{\isacharunderscore}conf{\isacharunderscore}stable\ {\isasymand} \ i\ V\ t\ {\isasymle}\ \isadigit{4} \ {\isasymand} \ {\isasymand} \ {\isasymalpha} = 4 - i\ V\ t\ ...... {\isacharbraceright}
}

\quad \quad \quad \quad \isacommand{FOR}\ {\isasymacute}i\ {\isacharcolon}{\isacharequal}\ {\isasymacute}i{\isacharparenleft}t\ {\isacharcolon}{\isacharequal}\ {\isadigit{0}}{\isacharparenright}{\isacharsemicolon}\ {\isasymacute}i\ t\ {\isacharless}\ {\isadigit{4}}{\isacharsemicolon}\ {\isasymacute}i\ {\isacharcolon}{\isacharequal}\ {\isasymacute}i{\isacharparenleft}t\ {\isacharcolon}{\isacharequal}\ {\isasymacute}i\ t\ {\isacharplus}\ {\isadigit{1}}{\isacharparenright}\ \isacommand{DO}

\quad \quad \quad \quad \specrg{
\isacommand{mergeblock{\isacharunderscore}loopinv}\ V3 \ t\ b\ {\isasymalpha}\ {\isasyminter}\ {\isasymlbrace} {\isasymalpha} {\isachargreater} 0 {\isasymrbrace}
}

\quad \quad \quad \quad \quad 
\specrg{\{V4\} \speccomment{{\isacharparenleft}V4 {\isasymin} mergeblock{\isacharunderscore}loopinv\ V3 \ t\ b\ {\isasymalpha}\ {\isasyminter}\ {\isasymlbrace} {\isasymalpha} {\isachargreater} 0 {\isasymrbrace} {\isacharparenright}}}

\quad \quad \quad \quad \quad {\isasymacute}bb\ {\isacharcolon}{\isacharequal}\ {\isasymacute}bb\ {\isacharparenleft}t\ {\isacharcolon}{\isacharequal}\ {\isacharparenleft}{\isasymacute}bn\ t\ div\ {\isadigit{4}}{\isacharparenright}\ {\isacharasterisk}\ {\isadigit{4}}\ {\isacharplus}\ {\isasymacute}i\ t{\isacharparenright}{\isacharsemicolon}{\isacharsemicolon}

\quad \quad \quad \quad \quad 
\specrg{\{V5\} \speccomment{{\isacharparenleft}V5 {\isasymequiv}\ V4{\isasymlparr}bb\ {\isacharcolon}{\isacharequal}\ {\isacharparenleft}bb\ V{\isacharparenright}\ {\isacharparenleft}t{\isacharcolon}{\isacharequal}{\isacharparenleft}bn\ V4\ t\ div\ {\isadigit{4}}{\isacharparenright}\ {\isacharasterisk}\ {\isadigit{4}}\ {\isacharplus}\ i\ V4\ t{\isacharparenright}{\isasymrparr} {\isacharparenright}}}

\quad \quad \quad \quad \quad {\isasymacute}mem{\isacharunderscore}pool{\isacharunderscore}info\ {\isacharcolon}{\isacharequal}\ set{\isacharunderscore}bit{\isacharunderscore}noexist\ {\isasymacute}mem{\isacharunderscore}pool{\isacharunderscore}info\ {\isacharparenleft}pool\ b{\isacharparenright}\ {\isacharparenleft}{\isasymacute}lvl\ t{\isacharparenright}\ {\isacharparenleft}{\isasymacute}bb\ t{\isacharparenright}{\isacharsemicolon}{\isacharsemicolon}

\quad \quad \quad \quad \quad 
\specrg{\{V6\} \speccomment{{\isacharparenleft}V6 {\isasymequiv}\ V5{\isasymlparr} mem{\isacharunderscore}pool{\isacharunderscore}info\ {\isacharcolon}{\isacharequal}\isanewline
\quad \quad \quad \quad \quad \quad \quad \quad \quad \quad \quad 
set{\isacharunderscore}bit{\isacharunderscore}noexist\ (mem{\isacharunderscore}pool{\isacharunderscore}info\ V5)\ (pool\ b)\ (lvl\ V5\ t)\ (bb\ V5\ t) {\isasymrparr} {\isacharparenright}}}

\quad \quad \quad \quad \quad {\isasymacute}block{\isacharunderscore}pt\ {\isacharcolon}{\isacharequal}\ {\isasymacute}block{\isacharunderscore}pt\ {\isacharparenleft}t\ {\isacharcolon}{\isacharequal}\ block{\isacharunderscore}ptr\ {\isacharparenleft}{\isasymacute}mem{\isacharunderscore}pool{\isacharunderscore}info\ {\isacharparenleft}pool\ b{\isacharparenright}{\isacharparenright}\ {\isacharparenleft}{\isasymacute}lsz\ t{\isacharparenright}\ {\isacharparenleft}{\isasymacute}bb\ t{\isacharparenright}{\isacharparenright}{\isacharsemicolon}{\isacharsemicolon}

\quad \quad \quad \quad \quad 
\specrg{\{V7\} \speccomment{{\isacharparenleft}V7 {\isasymequiv}\ V6{\isasymlparr}block{\isacharunderscore}pt\ {\isacharcolon}{\isacharequal}\ {\isacharparenleft}block{\isacharunderscore}pt\ V6{\isacharparenright}\isanewline
\quad \quad \quad \quad \quad \quad \quad \quad \quad \quad  {\isacharparenleft}t{\isacharcolon}{\isacharequal}block{\isacharunderscore}ptr\ {\isacharparenleft}mem{\isacharunderscore}pool{\isacharunderscore}info\ V6\ {\isacharparenleft}pool\ b{\isacharparenright}{\isacharparenright}\ {\isacharparenleft}lsz\ V6\ t{\isacharparenright}\ {\isacharparenleft}bb\ V6\ t{\isacharparenright}{\isacharparenright}{\isasymrparr} {\isacharparenright}}}

\quad \quad \quad \quad \quad \isacommand{IF}\ {\isasymacute}bn\ t\ {\isasymnoteq}\ {\isasymacute}bb\ t\ {\isasymand}\ block{\isacharunderscore}fits\ {\isacharparenleft}{\isasymacute}mem{\isacharunderscore}pool{\isacharunderscore}info\ {\isacharparenleft}pool\ b{\isacharparenright}{\isacharparenright}\ {\isacharparenleft}{\isasymacute}block{\isacharunderscore}pt\ t{\isacharparenright}\ {\isacharparenleft}{\isasymacute}lsz\ t{\isacharparenright}\ \isacommand{THEN}\isanewline
\quad \quad \quad \quad \quad \quad {\isasymacute}mem{\isacharunderscore}pool{\isacharunderscore}info\ {\isacharcolon}{\isacharequal}\ {\isasymacute}mem{\isacharunderscore}pool{\isacharunderscore}info\ {\isacharparenleft}{\isacharparenleft}pool\ b{\isacharparenright}\ {\isacharcolon}{\isacharequal}\ \isanewline
\quad \quad \quad \quad \quad \quad \quad \quad remove{\isacharunderscore}free{\isacharunderscore}list\ {\isacharparenleft}{\isasymacute}mem{\isacharunderscore}pool{\isacharunderscore}info\ {\isacharparenleft}pool\ b{\isacharparenright}{\isacharparenright}\ {\isacharparenleft}{\isasymacute}lvl\ t{\isacharparenright}\ {\isacharparenleft}{\isasymacute}block{\isacharunderscore}pt\ t{\isacharparenright}{\isacharparenright}\isanewline
\quad \quad \quad \quad \quad \isacommand{FI}\isanewline
\quad \quad \quad \quad \isacommand{ROF}{\isacharsemicolon}{\isacharsemicolon}

\quad \quad \quad \quad \specrg{
\isacommand{mergeblock{\isacharunderscore}loopinv}\ V3 \ t\ b\ {\isasymalpha}\ {\isasyminter}\ {\isasymlbrace} {\isasymalpha} = 0 {\isasymrbrace}
}

\quad \quad \quad \quad {\isasymacute}lvl\ {\isacharcolon}{\isacharequal}\ {\isasymacute}lvl\ {\isacharparenleft}t\ {\isacharcolon}{\isacharequal}\ {\isasymacute}lvl\ t\ {\isacharminus}\ {\isadigit{1}}{\isacharparenright}{\isacharsemicolon}{\isacharsemicolon}\isanewline
\quad \quad \quad \quad {\isasymacute}bn\ {\isacharcolon}{\isacharequal}\ {\isasymacute}bn\ {\isacharparenleft}t\ {\isacharcolon}{\isacharequal}\ {\isasymacute}bn\ t\ div\ {\isadigit{4}}{\isacharparenright}{\isacharsemicolon}{\isacharsemicolon}\isanewline
\quad \quad \quad \quad {\isasymacute}mem{\isacharunderscore}pool{\isacharunderscore}info\ {\isacharcolon}{\isacharequal}\ set{\isacharunderscore}bit{\isacharunderscore}freeing\ {\isasymacute}mem{\isacharunderscore}pool{\isacharunderscore}info\ {\isacharparenleft}pool\ b{\isacharparenright}\ {\isacharparenleft}{\isasymacute}lvl\ t{\isacharparenright}\ {\isacharparenleft}{\isasymacute}bn\ t{\isacharparenright}{\isacharsemicolon}{\isacharsemicolon}\isanewline
\quad \quad \quad \quad {\isasymacute}freeing{\isacharunderscore}node\ {\isacharcolon}{\isacharequal}\ {\isasymacute}freeing{\isacharunderscore}node\ {\isacharparenleft}t\ {\isacharcolon}{\isacharequal}\ Some\ {\isasymlparr}pool\ {\isacharequal}\ {\isacharparenleft}pool\ b{\isacharparenright}{\isacharcomma}\ level\ {\isacharequal}\ {\isacharparenleft}{\isasymacute}lvl\ t{\isacharparenright}{\isacharcomma}\ \isanewline
\quad \ \ \ \ \ \ \ \ \ \ \ \ \ \ \ \ \ \ \ \ block\ {\isacharequal}\ {\isacharparenleft}{\isasymacute}bn\ t{\isacharparenright}{\isacharcomma}\ 
data\ {\isacharequal}\ block{\isacharunderscore}ptr\ {\isacharparenleft}{\isasymacute}mem{\isacharunderscore}pool{\isacharunderscore}info\ {\isacharparenleft}pool\ b{\isacharparenright}{\isacharparenright}\ \isanewline
\quad \ \ \ \ \ \ \ \ \ \ \ \ \ \ \ \ \ \ \ \ \ \  {\isacharparenleft}{\isacharparenleft}{\isacharparenleft}ALIGN{\isadigit{4}}\ {\isacharparenleft}max{\isacharunderscore}sz\ {\isacharparenleft}{\isasymacute}mem{\isacharunderscore}pool{\isacharunderscore}info\ {\isacharparenleft}pool\ b{\isacharparenright}{\isacharparenright}{\isacharparenright}{\isacharparenright}\ div\ {\isacharparenleft}{\isadigit{4}}\ {\isacharcircum}\ {\isacharparenleft}{\isasymacute}lvl\ t{\isacharparenright}{\isacharparenright}{\isacharparenright}{\isacharparenright}\ 
{\isacharparenleft}{\isasymacute}bn\ t{\isacharparenright}\ {\isasymrparr}{\isacharparenright}\isanewline
\quad \quad \quad \isacommand{ELSE}

\quad \quad \quad \quad \specrg{\{V3\}\ {\isasyminter}\ {\isacharminus}\ {\isasymlbrace}NULL\ {\isacharless}\ {\isasymacute}lvl\ t\ {\isasymand}\ partner{\isacharunderscore}bits\ {\isacharparenleft}{\isasymacute}mem{\isacharunderscore}pool{\isacharunderscore}info\ {\isacharparenleft}pool\ b{\isacharparenright}{\isacharparenright}\ {\isacharparenleft}{\isasymacute}lvl\ t{\isacharparenright}\ {\isacharparenleft}{\isasymacute}bn\ t{\isacharparenright}{\isasymrbrace}}

\quad \quad \quad \quad \isacommand{IF}\ block{\isacharunderscore}fits\ {\isacharparenleft}{\isasymacute}mem{\isacharunderscore}pool{\isacharunderscore}info\ {\isacharparenleft}pool\ b{\isacharparenright}{\isacharparenright}\ {\isacharparenleft}{\isasymacute}blk\ t{\isacharparenright}\ {\isacharparenleft}{\isasymacute}lsz\ t{\isacharparenright}\ \isacommand{THEN}\isanewline
\quad \quad \quad \quad \quad {\isasymacute}mem{\isacharunderscore}pool{\isacharunderscore}info\ {\isacharcolon}{\isacharequal}\ {\isasymacute}mem{\isacharunderscore}pool{\isacharunderscore}info\ {\isacharparenleft}{\isacharparenleft}pool\ b{\isacharparenright}\ {\isacharcolon}{\isacharequal}\ \isanewline
\quad \quad \quad \quad \quad \quad append{\isacharunderscore}free{\isacharunderscore}list\ {\isacharparenleft}{\isasymacute}mem{\isacharunderscore}pool{\isacharunderscore}info\ {\isacharparenleft}pool\ b{\isacharparenright}{\isacharparenright}\ {\isacharparenleft}{\isasymacute}lvl\ t{\isacharparenright}\ {\isacharparenleft}{\isasymacute}blk\ t{\isacharparenright}\ {\isacharparenright}\isanewline
\quad \quad \quad \quad \isacommand{FI}{\isacharsemicolon}{\isacharsemicolon}\isanewline
\quad \quad \quad \quad {\isasymacute}free{\isacharunderscore}block{\isacharunderscore}r\ {\isacharcolon}{\isacharequal}\ {\isasymacute}free{\isacharunderscore}block{\isacharunderscore}r\ {\isacharparenleft}t\ {\isacharcolon}{\isacharequal}\ False{\isacharparenright}\isanewline
\quad \quad \quad \isacommand{FI}\isanewline
\quad \quad \isacommand{END} \ \speccomment{(* END\ of\ ATOM *)} \isanewline
\quad \isacommand{OD} \ \speccomment{(* END\ of\ WHILE \ free\_block\_r \ DO  *)} 

\quad \specrg{\isacommand{mp{\isacharunderscore}free{\isacharunderscore}precond{\isadigit{9}}}\ t\ b\ {\isasymequiv}\ Mem{\isacharunderscore}pool{\isacharunderscore}free{\isacharunderscore}pre\ t\ {\isasyminter}\ {\isasymlbrace} g {\isasymrbrace}}

\quad \speccomment{{\isacharparenleft}{\isacharasterisk}\ {\isacharequal} {\isacharequal} {\isacharequal} end of {\isacharcolon}\ free{\isacharunderscore}block{\isacharparenleft}pool{\isacharcomma}\ level{\isacharcomma}\ lsizes{\isacharcomma}\ block{\isacharparenright}{\isacharsemicolon}\ {\isacharequal} {\isacharequal} {\isacharequal}{\isacharasterisk}{\isacharparenright}}

\quad t\ \isactrlenum \ \isacommand{ATOMIC}

\quad \specrg{\{Va\} {\isacharparenleft}Va {\isasymin} mp{\isacharunderscore}free{\isacharunderscore}precond{\isadigit{9}} \ t\ b\ {\isasyminter} {\isasymlbrace}{\isasymacute}cur\ {\isacharequal}\ Some\ t{\isasymrbrace}{\isacharparenright}}

\quad  \specrg{
stm{\isadigit{9}}{\isacharunderscore}loopinv\ Va\ t\ b\ {\isasymalpha}\ {\isasymequiv}\isanewline
\quad \quad {\isacharbraceleft}V{\isachardot}\ inv\ V\ {\isasymand}\ cur\ V\ {\isacharequal}\ cur\ Va\ {\isasymand}\ tick\ V\ {\isacharequal}\ tick\ Va\ {\isasymand}\ {\isacharparenleft}V{\isacharcomma}Va{\isacharparenright}{\isasymin}gvars{\isacharunderscore}conf{\isacharunderscore}stable\ \isanewline
\quad \quad \quad {\isasymand}\ freeing{\isacharunderscore}node\ V\ t\ {\isacharequal}\ freeing{\isacharunderscore}node\ Va\ t\ {\isasymand}\ allocating{\isacharunderscore}node\ V\ t\ {\isacharequal}\ allocating{\isacharunderscore}node\ Va\ t\isanewline
\quad \quad \quad {\isasymand}\ {\isacharparenleft}{\isasymforall}p{\isachardot}\ levels\ {\isacharparenleft}mem{\isacharunderscore}pool{\isacharunderscore}info\ V\ p{\isacharparenright}\ {\isacharequal}\ levels\ {\isacharparenleft}mem{\isacharunderscore}pool{\isacharunderscore}info\ Va\ p{\isacharparenright}{\isacharparenright}\isanewline
\quad \quad \quad {\isasymand}\ {\isacharparenleft}{\isasymforall}p{\isachardot}\ p\ {\isasymnoteq}\ pool\ b\ {\isasymlongrightarrow}\ mem{\isacharunderscore}pool{\isacharunderscore}info\ V\ p\ {\isacharequal}\ mem{\isacharunderscore}pool{\isacharunderscore}info\ Va\ p{\isacharparenright}\isanewline
\quad \quad \quad {\isasymand}\ {\isacharparenleft}{\isasymforall}t{\isacharprime}{\isachardot}\ t{\isacharprime}\ {\isasymnoteq}\ t\ {\isasymlongrightarrow}\ lvars{\isacharunderscore}nochange\ t{\isacharprime}\ V\ Va{\isacharparenright}\isanewline
\quad \quad \quad {\isasymand} {\isasymalpha}\ {\isacharequal} length\ {\isacharparenleft}wait{\isacharunderscore}q\ {\isacharparenleft}{\isasymacute}mem{\isacharunderscore}pool{\isacharunderscore}info\ {\isacharparenleft}pool\ b{\isacharparenright}{\isacharparenright}{\isacharparenright}
{\isacharbraceright}
}

\quad \quad \isacommand{WHILE}\ wait{\isacharunderscore}q\ {\isacharparenleft}{\isasymacute}mem{\isacharunderscore}pool{\isacharunderscore}info\ {\isacharparenleft}pool\ b{\isacharparenright}{\isacharparenright}\ {\isasymnoteq}\ {\isacharbrackleft}{\isacharbrackright}\ \isacommand{DO}\

\quad \quad \quad \specrg{
stm{\isadigit{9}}{\isacharunderscore}loopinv\ Va\ t\ b\ {\isasymalpha}\ {\isasyminter}\ {\isasymlbrace} {\isasymalpha} {\isachargreater} 0 {\isasymrbrace}
}

\quad \quad \quad {\isasymacute}th\ {\isacharcolon}{\isacharequal}\ {\isasymacute}th\ {\isacharparenleft}t\ {\isacharcolon}{\isacharequal}\ hd\ {\isacharparenleft}wait{\isacharunderscore}q\ {\isacharparenleft}{\isasymacute}mem{\isacharunderscore}pool{\isacharunderscore}info\ {\isacharparenleft}pool\ b{\isacharparenright}{\isacharparenright}{\isacharparenright}{\isacharparenright}{\isacharsemicolon}{\isacharsemicolon}

\quad \quad \quad \speccomment{{\isacharparenleft}{\isacharasterisk}\ {\isacharunderscore}unpend{\isacharunderscore}thread{\isacharparenleft}th{\isacharparenright}{\isacharsemicolon}\ {\isacharasterisk}{\isacharparenright}}

\quad \quad \quad {\isasymacute}mem{\isacharunderscore}pool{\isacharunderscore}info\ {\isacharcolon}{\isacharequal}\ {\isasymacute}mem{\isacharunderscore}pool{\isacharunderscore}info\ {\isacharparenleft}pool\ b\ {\isacharcolon}{\isacharequal}\ {\isasymacute}mem{\isacharunderscore}pool{\isacharunderscore}info\ {\isacharparenleft}pool\ b{\isacharparenright}

\quad \quad \quad \quad \quad {\isasymlparr}wait{\isacharunderscore}q\ {\isacharcolon}{\isacharequal}\ tl\ {\isacharparenleft}wait{\isacharunderscore}q\ {\isacharparenleft}{\isasymacute}mem{\isacharunderscore}pool{\isacharunderscore}info\ {\isacharparenleft}pool\ b{\isacharparenright}{\isacharparenright}{\isacharparenright}{\isasymrparr}{\isacharparenright}{\isacharsemicolon}{\isacharsemicolon}

\quad \quad \quad \speccomment{{\isacharparenleft}{\isacharasterisk}\ {\isacharunderscore}ready{\isacharunderscore}thread{\isacharparenleft}th{\isacharparenright}{\isacharsemicolon}\ {\isacharasterisk}{\isacharparenright}}

\quad \quad \quad {\isasymacute}thd{\isacharunderscore}state\ {\isacharcolon}{\isacharequal}\ {\isasymacute}thd{\isacharunderscore}state\ {\isacharparenleft}{\isasymacute}th\ t\ {\isacharcolon}{\isacharequal}\ READY{\isacharparenright}{\isacharsemicolon}{\isacharsemicolon}

\quad \quad \quad {\isasymacute}need{\isacharunderscore}resched\ {\isacharcolon}{\isacharequal}\ {\isasymacute}need{\isacharunderscore}resched{\isacharparenleft}t\ {\isacharcolon}{\isacharequal}\ True{\isacharparenright}

\quad \quad  \isacommand{OD}{\isacharsemicolon}{\isacharsemicolon}

\quad \quad \specrg{
stm{\isadigit{9}}{\isacharunderscore}loopinv\ Va\ t\ b\ {\isasymalpha}\ {\isasyminter}\ {\isasymlbrace} {\isasymalpha} = 0 {\isasymrbrace}
}

\quad \quad \isacommand{IF}\ {\isasymacute}need{\isacharunderscore}resched\ t\ \isacommand{THEN}\isanewline
\quad \quad \quad reschedule \quad \speccomment{(* \_reschedule\_threads(key) *)}\isanewline
\quad \quad \isacommand{FI}\isanewline
\quad \isacommand{END} \ \speccomment{(* END\ of\ ATOM *)}\isanewline
\isacommand{END}{\isachardoublequoteclose}

\specrg{\isacommand{Mem{\isacharunderscore}pool{\isacharunderscore}free{\isacharunderscore}post}\ t\ {\isasymequiv}\ {\isasymlbrace} {\isasymacute}inv\ {\isasymand}\ {\isasymacute}allocating{\isacharunderscore}node\ t\ {\isacharequal}\ None\ {\isasymand}\ {\isasymacute}freeing{\isacharunderscore}node\ t\ {\isacharequal}\ None{\isasymrbrace}}

\end{isabellec}

\end{document}